\documentclass[11pt]{article}
\usepackage{graphicx,amsmath,amsfonts,amssymb,amsthm,bbold,bm, bbm,dsfont,authblk,mathtools,fullpage,braket,tikz,array,float}
\usepackage{mathrsfs}
\usepackage[dvipsnames,svgnames]{xcolor}
\usepackage[colorlinks]{hyperref}
\usepackage[capitalize]{cleveref}
\usepackage[title]{appendix}
\usepackage[labelfont=bf]{caption}
\usepackage{multirow}
\usepackage{thmtools} 
\usepackage{subcaption}
\usepackage{booktabs}

\usepackage[backend=biber, style=alphabetic]{biblatex}
\crefformat{equation}{(#2#1#3)}
\Crefformat{equation}{(#2#1#3)}
\crefname{section}{\S}{\S\S}
\Crefname{section}{\S}{\S\S}
\crefformat{section}{\S#2#1#3}
\Crefformat{section}{\S#2#1#3}

\DeclarePairedDelimiter{\abs}{\lvert}{\rvert}

\DeclareMathOperator{\diag}{diag}
\DeclareMathOperator{\supp}{supp}
\DeclareMathOperator{\Tr}{Tr}
\DeclareMathOperator{\girth}{girth}

\newcommand{\F}{\mathbb F}
\newcommand{\E}{\mathbb E}
\newcommand{\one}{\mathbf1}
\newcommand{\PS}{P_{\mathrm S}^{\mathrm{ML}}}
\newcommand{\dc}{\delta_\mathrm{cyc}}
\newcommand{\wt}[1]{\lvert #1\rvert}
\newcommand{\tr}{\mathrm{tr}}

\newtheorem{definition}{Definition}
\newtheorem{proposition}{Proposition}
\newtheorem{theorem}{Theorem}
\newtheorem{corollary}{Corollary}
\newtheorem{lemma}{Lemma}
\newtheorem{remark}{Remark}
\newtheorem{conjecture}{Conjecture}
\crefname{conjecture}{Conjecture}{Conjectures}

\title{Cycle Codes and Decoded Quantum Interferometry}

\makeatletter
\renewcommand{\AB@authnote}[1]{{\textsuperscript{\normalfont#1}}}
\renewcommand{\AB@affilnote}[1]{{\textsuperscript{\normalfont#1}}}
\makeatother

\author[1]{Anuj Apte}
\author[1]{Shouvanik Chakrabarti}
\author[2]{Andi Gu}
\author[3]{Stephen P. Jordan}
\author[4]{Ojas Parekh}
\author[1]{Ruslan~Shaydulin}
\author[1]{Jacob Watkins}
\author[3]{Noureldin Yosri}
\author[3]{Adam Zalcman}

\affil[1]{Global Technology Applied Research, JPMorganChase, New York, NY}
\affil[2]{Harvard University, Cambridge, MA}
\affil[3]{Google Quantum AI, Venice, CA}
\affil[4]{Sandia National Laboratories, Albuquerque, NM}

\date{September 30th, 2026}

\begin{document}

\maketitle

\begin{abstract}
Decoded Quantum Interferometry (DQI) reduces optimization problems with two-variable constraints to decoding cycle codes. For one such problem, namely MaxCut, prior work showed that DQI achieves a nontrivial satisfaction fraction guarantee only on linear-girth graphs, for which MaxCut is classically easy. However, these no-go results rely on minimum distance assumptions that do not represent true decodability thresholds for common noise channels, thus underestimating actual DQI performance. To estimate the true performance, we derive DQI satisfaction guarantees in the presence of imperfect decoding for fixed instances and generalize prior results for random instances. We then study homological cycle codes over arbitrary finite fields, and show that minimum-weight decoding is NP-hard for every field of size $q>2$, in contrast with known results for binary cycle codes. On the Linial--Simkin ensemble of regular graphs with logarithmic girth, we prove maximum-likelihood recovery bounds and a strong converse for a family of non-uniform additive noise channels. Moreover, LP decoders provide polynomial-time recovery of a positive fraction of randomly located errors with arbitrary values. These results allow us to prove upper and lower bounds on the approximate optima achievable by DQI when restricted to classical decoders. These bounds rule out quantum advantage in the regimes we analyze but also identify a family of regular Max-$k$-Cut instances on which DQI efficiently achieves a nontrivial satisfaction fraction guarantee. 
\end{abstract}

\clearpage

\tableofcontents

\clearpage

\section{Introduction}

In many commonly studied optimization problems, one is given a list of constraints, each of which involves only a small subset of the decision variables, and seeks to maximize the number of constraints satisfied. Some paradigmatic examples of this are the Max-$3$-SAT and MaxCut optimization problems over binary variables, both of which are NP-hard. Decoded Quantum Interferometry (DQI) is a quantum algorithm that uses a quantum Fourier transform to reduce such optimization problems to decoding problems for LDPC codes \cite{JSW25}. The quality of the approximate optimum found by DQI is determined by the number of symbol-flip errors that can be decoded. The existence of efficient decoding algorithms for LDPC codes at high error rates raises the tantalizing possibility of quantum advantage on the corresponding optimization problems. However, the competing classical optimization heuristics are quite powerful, and such quantum advantage has not yet been realized.

The clearest examples of quantum advantage via DQI known to date arise from cases where DQI reduces an optimization problem to decoding an algebraic code \cite{JSW25, gu2025algebraic}. In \cite{CLZ22}, a quantum Fourier transform is used to reduce a constraint satisfaction problem to a decoding problem for codes defined by a dense, algebraically unstructured parity check matrix, which was then solved using a quantum decoding algorithm, yielding a potential quantum advantage. Subsequent work \cite{KOW25} gave a classical counterattack eliminating quantum advantage in many parameter regimes, but whether the algorithm of \cite{CLZ22} can retain advantage in some domain is an unresolved question.

Here we investigate the application of DQI to optimization problems in which each constraint contains exactly two variables. Such problems are reduced by DQI to problems of decoding from codes in which each symbol is contained in exactly two parity checks. Codes over a binary alphabet with this property are known as \emph{cycle codes}, and they behave very differently from codes whose variables participate in at least three parity checks, i.e., conventional LDPC codes. Specifically, we consider DQI applied to Max-LINSAT, defined as follows.
\begin{definition}[Max-LINSAT]\label{def:maxlinsat}
    Let $B\in\mathbb F_q^{m\times n}$ and let $C_i\subseteq\mathbb F_q$ for $i =1,\ldots, m$. Max-LINSAT asks for $\mathbf x\in\mathbb F_q^n$ satisfying as many as possible of the constraints $\sum_jB_{ij}x_j\in C_i$.
\end{definition}
\noindent When every row has two nonzero entries, the problem is Max-2-LINSAT. Over $\mathbb F_q$, the constraints $x_v-x_u\ne0$ give Max-$k$-Cut for prime powers $k = q$. Setting $q=2$ gives MaxCut. 

Cycle codes are directly associated with graphs, where each edge represents a symbol in the codeword, and the vertices at its two ends are the two parity checks in which it is contained. For regular graph families of fixed degree at least three, cycle codes have minimum distance at most logarithmic in block size. Thus, one cannot guarantee recovery from all error patterns with more than logarithmically many bit flips. Nevertheless, for suitable growing-girth ensembles described below, a linear number of independent random bit flips can be corrected with high probability~\cite{DZ97,NVZ16}. Furthermore, for cycle codes, the nearest codeword problem is solvable exactly in polynomial time~\cite{EJ73}. In contrast, for general LDPC codes, the nearest codeword problem is NP-hard~\cite{XH07}, although belief propagation solves the problem efficiently given a sufficient promise on the rate of bit flip errors~\cite{RU01}.

Because cycle codes behave so differently from other LDPC codes, the performance of DQI on optimization problems in which each constraint contains two variables must be analyzed separately from the case in which each constraint contains three or more variables. In particular, efficient classical algorithms attain the optimal-decoding threshold of cycle codes. This is helpful to DQI, as the number of constraints it can satisfy in the original optimization problem increases with the number of errors that can be corrected in the decoding problem. However, in \cite{JSW25}, for max-2-LINSAT over $\mathbb{F}_2$ with random target sets, it was shown that this advantage is outweighed by the lower optimal-decoding threshold of cycle codes, relative to the LDPC ensembles in which bits are contained in three or more parity checks.

Here, we investigate a natural next question: what happens when we no longer restrict our attention to a binary alphabet? DQI reduces Max-2-LINSAT to decoding a non-binary generalization of cycle codes. Prior to the present work, these were poorly understood compared to binary cycle codes. One of the main contributions of this paper is to initiate the study of $q$-ary cycle codes (hereafter simply ``cycle codes''). We resolve several foundational questions about the decoding complexity and reliability of \emph{homological} cycle codes, in which the two nonzero coefficients in a column of the parity check matrix sum to zero. These results are summarized in \cref{tab:binary_codes}. A polynomial-time linear programming (LP) decoder also has a provably positive error-rate guarantee on this ensemble (\cref{cor:lp-linial}), although we do not prove that it attains the ML threshold.

\begin{table}[!htbp]
\centering
\small
\setlength{\tabcolsep}{6pt}
\renewcommand{\arraystretch}{1.04}
\captionsetup{font=small,skip=6pt}
\begin{tabular}{@{}>{\raggedright\arraybackslash}p{.205\textwidth}
                   >{\raggedright\arraybackslash}p{.325\textwidth}
                   >{\raggedright\arraybackslash}p{\dimexpr.47\textwidth-4\tabcolsep\relax}@{}}
\toprule
 & \textbf{Binary cycle codes} & \textbf{Nonbinary cycle codes (this paper)}\\
 & $\mathbb F_2$ & $\mathbb F_q$, any prime power $q>2$\\
\midrule
Parity-check structure
 & $2$ checks per bit;\newline
   $D$ bits per check~\cite{NVZ16}
 & $2$ checks per symbol;\newline
   $D$ symbols per check (\cref{sec:decoding-cycle-codes})\\
\midrule
Code rate
 & $\displaystyle 1-\frac2D+\frac1m$~\cite{NVZ16}
 & $\displaystyle 1-\frac2D+\frac1m$ (\cref{sec:decoding-cycle-codes})\\
\midrule
Minimum distance
 & $d_{\min}=g_G$,\newline
   $g_G\le2\log_{D-1}n+2$~\cite{NVZ16}
 & $d_{\min}=g_G$ (same as $q=2$)\\
\midrule
Guaranteed unique-decoding radius
 & $\displaystyle t_{\mathrm{wc}}=\left\lfloor\frac{g_G-1}{2}\right\rfloor$~\cite{NVZ16}
 & $\displaystyle t_{\mathrm{wc}}=\left\lfloor\frac{g_G-1}{2}\right\rfloor$ (\cref{sec:decoding-cycle-codes})\\
\midrule
ML word error:\newline
 & $\displaystyle \delta_\mathrm{rel}(D,2)=\delta_\mathrm{cyc}(D,2)$, \newline where $\delta_\mathrm{cyc}$ is as defined in \cref{eq:ml-cycle-value}. Proven in~\cite{DZ97,NVZ16}.
 & If $q/D^{2/3}\le1/32$ then \newline 
 $\displaystyle \left(1-\frac{16 q}{D^{2/3}} \right) \delta_\mathrm{cyc} \leq \delta_\mathrm{rel}(D,q) \leq \delta_\mathrm{cyc}(D,q),$\newline
   where $\delta_\mathrm{cyc}(D,q)$ is as defined in \cref{eq:ml-cycle-value}. Proven in \cref{sec:ml-threshold} and \cref{app:ml-small-alphabet}.\\
\midrule
Exact ML / nearest-codeword computation
 & Minimum-weight $T$-join; polynomial time~\cite{EJ73}
 & NP-hard for every fixed $q>2$\newline
   (\cref{sec:ml-hardness}, \cref{thm:hardness})\\
\midrule
Efficient average-case decoding
 & The same $\delta_\mathrm{rel}(D,2)$ is attained by exact ML decoding~\cite{EJ73}.
 & Polynomial-time LP decoding corrects $\Theta(m)$ randomly located symbol errors: $P_{\mathrm W}^{\mathrm{LP}}\to0$ for every $0<p<p_{\mathrm{LP}}^{\mathrm{cert}}(D,c)$ (\cref{cor:lp-linial}).\newline
   This certified rate need not equal the ML threshold. Observed recovery extends beyond this rate (\cref{sec:lp-experiments}).\\
\bottomrule
\end{tabular}
\caption{\label{tab:binary_codes}\textbf{Binary versus nonbinary homological cycle codes.} Here the block length is $m$ and the number of parity checks is $n$. The graph degree is $D$, and therefore $m=Dn/2$. For independent uniformly distributed nonzero errors, maximum-likelihood (ML) decoding limits concern whole-word error averaged over the Linial--Simkin ensemble of \cref{def:linial-simkin-ensemble}, as $n\to\infty$ at fixed degree $D$, fixed finite field size $q$, fixed Linial--Simkin girth parameter $c$, and fixed error probability $p\in(0,1-1/q)$.}
\end{table}

If we choose the graph defining a binary or nonbinary homological cycle code uniformly at random from all simple $D$-regular graphs, with $D\ge3$ fixed, the resulting ensemble has no positive reliability threshold. That is, for every fixed $0<p<1-1/q$, independently changing each symbol with probability $p$, uniformly among the other $q-1$ values, and then using a maximum-likelihood decoder leaves a whole-word error probability, averaged over the graph, that is bounded away from zero as the block size grows. The reason is that short cycles persist with nonvanishing probability, and a fixed pattern of errors on a short cycle can give a competing codeword higher likelihood regardless of the errors elsewhere in the graph. In the binary case, this follows by combining the classical cycle-error criterion~\cite[\S2.2]{NVZ16} with the short-cycle statistics of random regular graphs~\cite{mckay2004}. We prove the corresponding error-floor bound over every finite field in \cref{sec:ml-threshold}, \cref{thm:unexpurgated-error-floor}.

Therefore, rather than considering uniformly random $D$-regular graphs without restriction, we consider the \emph{Linial--Simkin random greedy ensemble}~\cite{LS}. Specifically, let $G$ have $n$ labelled vertices and $m=Dn/2$ edges. Fix $D\ge3$, a girth coefficient $c\in(0,1)$, and $n$ even, and put
\begin{equation}
    g(n)\coloneqq \max\{3,\lfloor c\log_{D-1}n\rfloor\}.
\end{equation}
The Linial--Simkin ensemble consists of simple, connected graphs with girth at least $g(n)$. The ensemble is most easily defined by an algorithm for sampling from it, which we do in~\cref{def:linial-simkin-ensemble}. Throughout the threshold analysis, the block length tends to infinity first, with $D$, $q$, $c$, and $p$ fixed; only afterward do we vary $D$ and $q$.

Given a graph $G$, and a corresponding homological code $C_G$, we consider the success probability of a maximum-likelihood decoder given a codeword from $C_G$ in which each symbol has been independently corrupted with probability $p$. Specifically, a corrupted symbol is replaced with another of the $q-1$ elements of $\mathbb{F}_q$ uniformly at random. Let $P_{\mathrm W}^{\mathrm{ML}}(G,p)$ denote the probability that a maximum-likelihood decoder fails to recover the correct original codeword. Let $\mathbb{E}_G$ denote the expectation with $G$ drawn from the Linial--Simkin ensemble. Define the reliability threshold by
\begin{equation}
 \delta_\mathrm{rel}(D,q)
 \coloneqq\sup\bigl\{p\in[0,1-1/q]:\mathbb E_G
 P_{\mathrm W}^{\mathrm{ML}}(G,p)\rightarrow0
 \text{ as }n\rightarrow\infty\bigr\}.
\end{equation}
The notation $\delta_\mathrm{rel}(D,q)$ suppresses the fixed girth coefficient $c$. \cref{tab:binary_codes} compares two complementary guarantees: \emph{reliable recovery}, where the ML word-error probability tends to zero, and \emph{asymptotically certain failure}, where it tends to one.

For $q=o(D^{2/3})$, including fixed $q$ as $D\to\infty$, the success and failure bounds are asymptotically equal to $(q-1)/(4D^2)$. The explicit finite-parameter statement uses the simple-cycle value
\begin{equation}
\label{eq:delta_cyc}
 \delta_\mathrm{cyc}(D,q)
 \coloneqq\frac{q-1}{(D-1)\bigl(\sqrt{D+q-2}+\sqrt{D-2}\bigr)^2}.
\end{equation}
Whenever $q/D^{2/3}\le1/32$, \cref{thm:ml-small-alphabet} gives
\begin{equation}
\label{eq:cyc2}
 \mathbb E_G P_{\mathrm W}^{\mathrm{ML}}(G,p)
 \rightarrow
 \begin{cases}
  0,&0<p<(1-16q/D^{2/3})\delta_\mathrm{cyc}(D,q),\\[2pt]
  1,&\delta_\mathrm{cyc}(D,q)<p<1-1/q.
 \end{cases}
\end{equation}
In fact, $\mathbb E_G P_{\mathrm W}^{\mathrm{ML}}(G,p) \rightarrow 1$ for $\delta_\mathrm{cyc}(D,q) < p < 1-1/q$ holds at every fixed $D\ge3$ and $q$. In the binary case $\mathbb E_G P_{\mathrm W}^{\mathrm{ML}}(G,p)$ is known to converge to a perfect step function with transition at $\delta_\mathrm{cyc}(D,2)$ for every $D\ge3$~\cite{DZ97,NVZ16}.

\begin{table}[htbp]
        \centering
        \small
        \setlength{\tabcolsep}{5pt}
        \renewcommand{\arraystretch}{1.2}
        \begin{tabular}{@{}llcl@{}}
        \toprule
        \textbf{Problem} & \textbf{Algorithm} & \textbf{Worst-case runtime} & \textbf{Expected cut fraction} \\
        \midrule
        \multirow{3}{*}{MaxCut}
            & Random Guess   & $O(n)$ & $\frac12$ \\
            & TPM      & $\operatorname{Poly}(n)$ & $\frac12+\frac{2}{\pi\sqrt D}+O(D^{-1})$ \\
            & DQI + ML & $\operatorname{Poly}(n)$ & $\frac12\left(1+\frac1{D-1}\right)$ \\
        \midrule
        \multirow{4}{*}{Max-$k$-Cut ($k>2$)}
            & Random Guess  & $O(n)$ & $\frac{k-1}{k}$ \\
            & TPM      & $\operatorname{Poly}(n)$ & $\frac{k-1}{k}+\frac{\kappa_k}{\sqrt D}+O_k(D^{-1})$ \\
            & DQI + ML & $\operatorname{Exp}(n)$  & $\frac{k-1}{k}\left(1+\frac1{D-1}\right)+o(D^{-1})$ \\
            & DQI + LP & $\operatorname{Poly}(n)$ & $\frac{k-1}{k}\left(1+\frac1{\sqrt{k-1}(D-1)^2}\right)+o(D^{-2})$ \\
        \bottomrule
        \end{tabular}
        \caption{Algorithmic comparison for MaxCut and Max-$k$-Cut, where $k>2$ is a fixed prime power. The cut fractions concern the Linial--Simkin graph family of \cref{def:linial-simkin-ensemble}, with the large-graph limit taken before the large-$D$ limit. The constant $\kappa_k$ is defined in \cref{eq:tpm-large-degree}; the DQI + LP entry uses the $c\uparrow1$ limit of the certified rate in \cref{eq:lp-certified-rate}. See \cref{sec:max_k_cut} for details.}
        \label{tab:DQI-MaxkCut-summary}
\end{table}

Two of the most prominent examples of optimization problems in which each constraint contains exactly two variables are MaxCut and Max-$k$-Cut. For MaxCut, a semicircle law restricted to half the minimum distance cannot give a constant improvement over random guessing~\cite{P25}. On the other hand, despite the low minimum distance, suitable ensembles of regular graphs permit reliable decoding of a linear number of random errors~\cite{DZ97,NVZ16}. This discrepancy suggests that DQI could be performant on such ensembles. To realize these gains, one must move beyond an understanding of DQI in terms of worst-case decoding guarantees, as dictated by the code minimum distance. Within classical error correction, code performance is more reliably captured by \emph{noise thresholds} with respect to a given communication channel, which only require success with high probability. Analogously, one should expect DQI to perform reliably out to thresholds corresponding to the noise that the algorithm artificially generates via the weights $\mathbf{w}$. 

In \cref{sec:dqi-imperfect-decoding} we prove a version of the semicircle law that applies to fixed Max-LINSAT target sets and accounts for decoding errors. Our fixed-target semicircle law, \cref{thm:fixed-target-semicircle}, together with our LP decoding theorem \cref{thm:lp-reliability,cor:lp-linial}, proves that DQI can, in polynomial time, achieve a satisfaction fraction on MaxCut and Max-$k$-Cut that is asymptotically bounded away from random guessing, but not sufficient to achieve quantum advantage over known classical algorithms. Our algorithmic conclusions are summarized in \cref{tab:DQI-MaxkCut-summary}.

\paragraph{Notation and conventions} Throughout this manuscript, $q$ is a prime power and $\mathbb F_q$ is the unique finite field of size $q$. As a caution, note $\mathbb{F}_q \neq \mathbb{Z}_q$ when $q$ is not prime. We write $\mathbb F_q^\times=\mathbb F_q\setminus\{0\}$ for the multiplicative subgroup. For a vector $\mathbf v \in \mathbb{F}^n$ over a field $\mathbb{F}$, let $\supp \mathbf v$ be the set of nonzero entries, and $\abs{\mathbf v} \coloneqq \abs{\supp \mathbf v}$ the Hamming weight. The Hamming distance between words $\mathbf x$ and $\mathbf y$ is $\abs{\mathbf x - \mathbf y}$. The distance between vertices $u$ and $v$ in graph $G$ is $\delta_G(u, v)$. The degree of vertex $u$ is $d_G(u)$. The indicator function for a set $A$ is $\mathbb 1_A$.

\section{Decoded Quantum Interferometry with imperfect decoding} \label{sec:dqi-imperfect-decoding}

Let $B\in\mathbb F_q^{m\times n}$ have rows $\mathbf b_i$, and fix target sets $\mathbf C=(C_1,\ldots,C_m)$ with $C_i\subseteq\mathbb F_q$ and $|C_i|=r$, where $0<r<q$. The Max-LINSAT objective is
\begin{equation} \label{eq:dqi-satisfaction-function}
    s(\mathbf x)\coloneqq
    \sum_{i=1}^m\mathbb 1_{C_i}(\mathbf b_i\cdot\mathbf x),
    \qquad \mathbf x\in\mathbb F_q^n.
\end{equation}
Writing $\rho=r/q$, define the centered, normalized target indicators
\begin{equation} \label{eq:dqi-centered-functions}
    g_i(x)\coloneqq
    \frac{\mathbb 1_{C_i}(x)-\rho}{\sqrt{q\rho(1-\rho)}}.
\end{equation}
For $q=\ell_0^\nu$ with $\ell_0$ prime, use the additive character
\begin{equation}\label{eq:field-character}
    \chi_q(t)\coloneqq\exp\!\left(\frac{2\pi i}{\ell_0}
    \Tr_{\mathbb F_q/\mathbb F_{\ell_0}}(t)\right),
    \qquad
    \Tr_{\mathbb F_q/\mathbb F_{\ell_0}}(t)
    \coloneqq\sum_{j=0}^{\nu-1}t^{\ell_0^j}.
\end{equation}
It satisfies $\chi_q(s+t)=\chi_q(s)\chi_q(t)$, $\overline{\chi_q(t)}=\chi_q(-t)$, and
\begin{equation} \label{eq:finite-field-character-orthogonality}
    \sum_{\mathbf x\in\mathbb F_q^d}
    \chi_q(\mathbf a\cdot\mathbf x)
    =q^d\mathbbm 1_{\mathbf 0}(\mathbf a).
\end{equation}
Thus the Fourier transform with entries $(\mathsf F_q)_{a,x}=\chi_q(ax)/\sqrt q$ is unitary. Set
\begin{equation} \label{eq:dqi-local-fourier}
    \widetilde g_i(a)\coloneqq\frac1{\sqrt q}
    \sum_{x\in\mathbb F_q}\chi_q(ax)g_i(x),
\end{equation}
so that $\widetilde g_i(0)=0$ and $\sum_a|\widetilde g_i(a)|^2=1$. For an error word $\mathbf y\in\mathbb F_q^m$, define
\begin{equation} \label{eq:dqi-product-fourier-coefficient}
    \widetilde g_{\mathbf C}(\mathbf y)\coloneqq
    \prod_{i:y_i\ne0}\widetilde g_i(y_i),
\end{equation}
with the empty product equal to one.

DQI prepares an error state specified by a normalized amplitude vector $\mathbf w=(w_0,\ldots,w_m)\in\mathbb C^{m+1}$:
\begin{equation} \label{eq:dqi-full-error-state}
    \ket\Gamma\coloneqq
    \sum_{\mathbf y\in\mathbb F_q^m}\phi_{\mathbf C}(\mathbf y)\ket{\mathbf y},
    \qquad
    \phi_{\mathbf C}(\mathbf y)\coloneqq
    \frac{w_{|\mathbf y|}}{\sqrt{\binom m{|\mathbf y|}}}
    \widetilde g_{\mathbf C}(\mathbf y).
\end{equation}
Since
\begin{equation} \label{eq:dqi-shell-normalization}
    \sum_{|\mathbf y|=k}|\widetilde g_{\mathbf C}(\mathbf y)|^2
    =\binom mk,
\end{equation}
this state is normalized and assigns probability $|w_k|^2$ to errors of weight $k$. For state preparation, let $\ket{D_k^m} \propto \sum_{\mathbf z\in\{0,1\}^m:|\mathbf z|=k}\ket{\mathbf z}$ be the normalized Dicke state. Applying $U=\bigotimes_iU_i$ to $\ket{\mathbf w}\coloneqq\sum_kw_k\ket{D_k^m}$ gives $\ket\Gamma$, where $U_i$ are the isometries into $\mathbb{C}^q$ given by
\begin{equation} \label{eq:Ui_def}
    U_i\ket0=\ket0,
    \qquad
    U_i\ket1=\sum_{a\in\mathbb F_q^\times}\widetilde g_i(a)\ket a.
\end{equation}

The decoder acts on syndromes of $C^\perp=\ker B^T$. We assume that $D:\operatorname{im}B^T\rightarrow\mathbb F_q^m$ is deterministic and syndrome-consistent:
\begin{equation} \label{eq:syndrome-consistent-decoder}
    B^TD(\mathbf s)=\mathbf s
    \qquad\text{for all }\mathbf s\in\operatorname{im}B^T.
\end{equation}
Computing the syndrome produces
\begin{equation} \label{eq:dqi-before-decoding}
    \sum_{\mathbf y}\phi_{\mathbf C}(\mathbf y)
    \ket{\mathbf y}\ket{B^T\mathbf y}.
\end{equation}
A reversible implementation of $D$ then subtracts $D(B^T\mathbf y)$ from the error register. Postselecting this register on $\ket{\mathbf0}$ retains exactly the decoder's success set
\begin{equation} \label{eq:decoder-success-failure-partition}
    \mathcal S\coloneqq
    \{\mathbf y:D(B^T\mathbf y)=\mathbf y\},
    \qquad \mathcal F\coloneqq\mathbb F_q^m\setminus\mathcal S.
\end{equation}
Because $B^T$ is injective on $\mathcal S$, the resulting normalized syndrome state is
\begin{equation} \label{eq:dqi-postselected-syndrome-state}
    \ket{\widetilde P_{\mathcal S}}
    \propto\sum_{\mathbf y\in\mathcal S}
    \phi_{\mathbf C}(\mathbf y)\ket{B^T\mathbf y}.
\end{equation}
The algorithm applies $(\mathsf F_q^\dagger)^{\otimes n}$ and measures in the computational basis to obtain an assignment $\mathbf X$. All expected satisfaction fractions below are conditioned on successful postselection. The analysis applies to any implementation of the state preparation and reversible decoder described above.

For each target $C_i$, put
\begin{equation}\label{eq:target-noise-law}
 \theta_i(a)\coloneqq|\widetilde g_i(a)|^2
 =\frac{\left|\sum_{t\in C_i}\chi_q(at)\right|^2}{r(q-r)}.
\end{equation}
These probabilities satisfy $\theta_i(a)=\theta_i(-a)$. Let $P_k$ be the error distribution obtained by choosing $k$ coordinates uniformly and drawing their nonzero values independently from the corresponding $\theta_i$. This is the DQI error distribution conditioned on weight $k$.

\begin{theorem}[Semicircle law for fixed targets]
\label{thm:fixed-target-semicircle}
\label{thm:DQI-error-main}
    Let $\ell \leq m$ be a positive integer, and let $\varepsilon \in [0,1)$. Suppose that, for every $0\le k\le\ell$,
    \begin{equation}\label{eq:robust-fixed-weight-decoding-hypothesis}
        \Pr_{\mathbf Y\sim P_k}\!\left[
            D(B^T\mathbf Y')=\mathbf Y'
            \text{ for every $\mathbf Y'$ with }|\mathbf Y - \mathbf Y'|\le1
        \right]\ge1-\varepsilon.
    \end{equation}
    Let $f_\ell^*$ be the largest value of $\mathbb E[s(\mathbf X)/m\mid\text{successful postselection}]$ over all normalized shell-amplitude vectors $\mathbf w=(w_0,\ldots,w_m)^T$ satisfying $w_k=0$ for $k>\ell$. Put $\rho\coloneqq r/q$ and $p_\ell\coloneqq\min\{\ell/m,1-\rho\}$, and define
    \begin{equation}\label{eq:fixed-target-semicircle-function}
        F_\rho(p)\coloneqq
        \left(\sqrt{\rho(1-p)}+\sqrt{(1-\rho)p}\right)^2.
    \end{equation}
    Then
    \begin{equation}\label{eq:fixed-target-optimized-semicircle}
    \begin{aligned}
        f_\ell^*
        &\ge(1-\varepsilon)\left(F_\rho(p_\ell)-\frac1{\sqrt m}\right)
        -\frac{\sqrt{(q-1)\rho(1-\rho)}\,\varepsilon}{1-\varepsilon},
        \\
        f_\ell^*
        &\le(1-\varepsilon)F_\rho(p_\ell)+\varepsilon
        +\frac{\sqrt{(q-1)\rho(1-\rho)}\,\varepsilon}{1-\varepsilon}.
    \end{aligned}
    \end{equation}
    Every such normalized shell-amplitude vector has postselection probability at least $1-\varepsilon$.
\end{theorem}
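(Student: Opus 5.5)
We compare the postselected DQI state to the ideal state obtained under perfect decoding, and transfer the known semicircle-law analysis through this comparison.

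\textbf{Step 1: expectation as a quadratic form.} With perfect decoding, the output distribution has a standard form. Write $s(\mathbf x)/m$ as $\rho+\sqrt{q\rho(1-\rho)}\,\frac1m\sum_i g_i(\mathbf b_i\cdot\mathbf x)$. Expand each $g_i$ in characters via \cref{eq:dqi-local-fourier}. The observable $\sum_i g_i(\mathbf b_i\cdot \mathbf X)$, conjugated by $(\mathsf F_q^\dagger)^{\otimes n}$, then becomes an operator $A$ on the syndrome space. It acts on $\ket{B^T\mathbf y}$ by shifting $\mathbf y$ in one coordinate $i$ by some $a\in\mathbb F_q^\times$, with amplitude $\widetilde g_i(a)$.

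\textbf{Step 2: the ideal case.} Take the unpostselected ``ideal'' vector $\ket{P}=\sum_{\mathbf y}\phi_{\mathbf C}(\mathbf y)\ket{\mathbf y}$, living in the error space, where the shift operator is well defined. Using \cref{eq:dqi-shell-normalization} and $\sum_a|\widetilde g_i(a)|^2=1$, the ideal expectation reduces to $\rho+\frac{1}{m}\mathbf w^\dagger A^{(m,\rho)}\mathbf w$. Here $A^{(m,\rho)}$ is the usual tridiagonal matrix with off-diagonals $\sqrt{k(m-k+1)}$ and diagonal terms $\propto k(1-2\rho)$ coming from the $a+b\neq0$ terms. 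This is exactly the computation underlying the semicircle law in \cite{JSW25}. Its top eigenvalue restricted to $k\le\ell$ gives $F_\rho(p_\ell)$ up to the $1/\sqrt m$ finite-size loss. The upper bound $F_\rho(p_\ell)$ holds exactly, because the maximal eigenvalue of the truncated matrix is bounded by the semicircle edge. I will cite this rather than redo it.

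\textbf{Step 3: the postselected state.} The actual postselected state is $\ket{P_{\mathcal S}}=\Pi_{\mathcal S}\ket{P}/\|\Pi_{\mathcal S}\ket P\|$, embedded via the injective map $\mathbf y\mapsto B^T\mathbf y$.

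The postselection probability is $\sum_k|w_k|^2\Pr_{P_k}[\mathcal S]$. The hypothesis \cref{eq:robust-fixed-weight-decoding-hypothesis} implies $\Pr_{P_k}[\mathcal S]\ge 1-\varepsilon$, which gives the final claim. The key point is that $|\phi_{\mathbf C}(\mathbf y)|^2$ restricted to weight $k$ is exactly $|w_k|^2P_k$, with the per-coordinate laws $\theta_i$.

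The syndrome-space expectation equals $\langle P_{\mathcal S}|A'|P_{\mathcal S}\rangle$. Here $A'$ acts on syndromes, and a shift of $\mathbf y$ by $a\mathbf e_i$ maps $B^T\mathbf y$ to $B^T(\mathbf y+a\mathbf e_i)$. This agrees with the error-space shift only when $\mathbf y+a\mathbf e_i\in\mathcal S$; otherwise it lands on the syndrome of some other decoded word. The robustness condition, that every neighbor at distance $\le1$ also decodes correctly, is designed precisely so that on the ``good'' set $\mathcal G=\{\mathbf y:$ all neighbors decode$\}$ the two operators coincide.

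Split $\ket{P_{\mathcal S}}$ into its $\mathcal G$ part and its remainder. The remainder has squared norm at most $\varepsilon/(1-\varepsilon)$ after normalization. The quadratic form then differs from the ideal $\Pi_{\mathcal G}$ form by cross terms. Bound these with Cauchy--Schwarz together with an operator norm estimate of $\frac1m A'$ per coordinate. Each $g_i$ is bounded by $\sqrt{(1-\rho)/(q\rho)}$, giving the factor $\sqrt{(q-1)\rho(1-\rho)}$ after multiplying by $\sqrt{q\rho(1-\rho)}$.

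\textbf{Step 4: assembling the bounds.} The $(1-\varepsilon)$ prefactor and the additive $\varepsilon$ in the upper bound come from writing $f=(1-\varepsilon')f_{\mathcal G}+\varepsilon' f_{\mathrm{bad}}$ with $f_{\mathrm{bad}}\in[0,1]$. For the lower bound, choose $\mathbf w$ to be the optimal eigenvector from Step 2.

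\textbf{Main obstacle.} The hardest step is Step 3: controlling the mismatch between the syndrome-space and error-space shift operators on the failure set, so as to obtain exactly the $\varepsilon/(1-\varepsilon)$ error term rather than $\sqrt\varepsilon$. I would first try to bound the mismatch operator's contribution by exploiting that it is supported only on pairs with at least one endpoint outside $\mathcal G$, using the fixed-weight hypothesis shell by shell.
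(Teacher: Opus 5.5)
\textbf{There is a genuine gap, and it is larger than the obstacle you flagged.} Your proposal has the right architecture: an ideal semicircle value, an aliasing correction, and the postselection probability $\ge1-\varepsilon$. But neither of the two steps that make the error terms \emph{linear} in $\varepsilon$ is supplied.

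\textbf{First gap: the aliasing term.} Splitting the postselected state along $\mathcal G$ and applying Cauchy--Schwarz with an operator-norm bound gives a mismatch of order $\sqrt{\varepsilon}$ in the satisfaction fraction. Knowing that each mismatched pair has \emph{at least one} endpoint outside $\mathcal G$ does not improve this. The missing idea is a bijection. For $\mathbf y\in\mathcal S$ put $F_{i,a}(\mathbf y)=D(B^T(\mathbf y-a\mathbf e_i))$. Syndrome consistency makes $F_{i,a}$ a bijection of $\mathcal S$ with inverse $F_{i,-a}$. Moreover, it maps the boundary set $\mathcal S^c_{i,a}=\{\mathbf y\in\mathcal S:\mathbf y-a\mathbf e_i\notin\mathcal S\}$ onto $\mathcal S^c_{i,-a}$. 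Consequently:
\begin{itemize}
\item every aliased pair has \emph{both} endpoints in boundary sets;
\item each pairing is injective on each side;
\item each boundary set has $|\phi_{\mathbf C}|^2$-mass $\mu_{i,\pm a}\le\varepsilon$ by the one-coordinate hypothesis.
\end{itemize}
Cauchy--Schwarz along the pairing then gives $\sqrt{\mu_{i,a}\mu_{i,-a}}\le\varepsilon$ for each $(i,a)$. The constant comes from the $\ell^1$ Fourier bound $\sum_{a\ne0}|\kappa_i(a)|\le\sqrt{(q-1)\rho(1-\rho)}$, obtained by Cauchy--Schwarz on $\sum_{a}|\widetilde g_i(a)|^2=1$. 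It does not come from $\sup|g_i|$: your sup-norm computation actually yields $1-\rho$, not the stated factor. Dividing by $Z_{\mathbf C}\ge1-\varepsilon$ then gives the $\varepsilon/(1-\varepsilon)$ term.

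\textbf{Second gap: the convex decomposition in Step 4.} After removing aliasing, you are left with $\langle\Gamma_{\mathcal S}|\widetilde{\mathsf S}_{\rm err}|\Gamma_{\mathcal S}\rangle$ on the projected state $\Pi_{\mathcal S}\ket\Gamma$, which is not the ideal state. Writing $f=(1-\varepsilon')f_{\mathcal G}+\varepsilon' f_{\rm bad}$ ignores interference. Since $\mathcal G$ is not invariant under the coordinate shifts, the cross terms are again only $O(\sqrt\varepsilon)$. What makes an exact convex split possible is the shell structure:
\begin{itemize}
\item $\operatorname{ran}\mathsf V$ is invariant under $\widetilde{\mathsf S}_{\rm err}$, because every $U_i^\dagger\widetilde\Pi_iU_i$ is the same $2\times2$ matrix, so their sum preserves the symmetric subspace;
\item $\Pi_{\mathcal S}$ preserves Hamming weight and each shell of $\operatorname{ran}\mathsf V$ is one-dimensional, so $\mathsf V^\dagger\Pi_{\mathcal S}\mathsf V=\diag(1-\epsilon_0,\ldots,1-\epsilon_m)$.
\end{itemize}
Hence $\Pi_{\mathcal S}\mathsf V\mathbf w=\mathsf VQ_{\mathcal S}\mathbf w+\eta$ with $\eta\perp\operatorname{ran}\mathsf V$, and the cross terms vanish exactly. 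Using $\|Q_{\mathcal S}\mathbf w\|^2\ge(1-\varepsilon)Z_{\mathbf C}$ gives the upper bound $(1-\varepsilon)\lambda_\ell+m\varepsilon$. For $\mathbf w$ a top eigenvector of $T_{\rho,\ell}$, the component of $Q_{\mathcal S}\mathbf w$ along $\mathbf w$ is $Z_{\mathbf C}\mathbf w$. Together with $T_{\rho,\ell}\succeq0$ and $\widetilde{\mathsf S}_{\rm err}\succeq0$, this gives the lower bound $\lambda_\ell Z_{\mathbf C}\ge(1-\varepsilon)\lambda_\ell$.

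Without both ingredients, your argument yields only $\sqrt{\varepsilon}$-type corrections, not the stated bounds.
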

\begin{proof}
    We first derive the error-space representation of satisfaction for arbitrary fixed targets and amplitudes. Define
    \begin{equation}\label{eq:satisfaction-objective}
        \mathsf S\coloneqq\sum_{\mathbf x}s(\mathbf x)
            \ket{\mathbf x}\!\bra{\mathbf x},
        \qquad
        \mathsf S_{\rm err}\coloneqq\sum_{i=1}^m\Pi_i,
        \qquad
        \Pi_i\coloneqq\sum_{t\in C_i}\ket t\!\bra t,
    \end{equation}
    where $\Pi_i$ acts on coordinate $i$ of the error register. A tilde denotes conjugation by the Fourier transform on all coordinates. For $X_{\mathbf z}\ket{\mathbf x}=\ket{\mathbf x+\mathbf z}$, Fourier expansion gives
    \begin{align}
        \kappa_i(a)&\coloneqq
            \frac1q\sum_{t\in C_i}\chi_q(-at),
            \label{eq:target-indicator-fourier-coefficients}\\
        \widetilde{\mathsf S}
            &=\sum_{i,a}\kappa_i(a)X_{-a\mathbf b_i},
            \label{eq:satisfaction-fourier-shifts}\\
        \widetilde{\mathsf S}_{\rm err}
            &=\sum_{i,a}\kappa_i(a)X_{-a\mathbf e_i}.
            \label{eq:fourier-Serr-X}
    \end{align}
    Here and below, $i\in[m]$ and $a\in\mathbb F_q$ unless otherwise indicated. In particular, $0\preceq\widetilde{\mathsf S}_{\rm err}\preceq mI$.
    
    Let $\mathsf V$ be the shell isometry $\mathsf V\mathbf w\coloneqq U\sum_kw_k\ket{D_k^m}=\ket\Gamma$. The local span of the uniform vector $\ket u=q^{-1/2}\sum_x\ket x$ and $\ket{g_i}=\sum_xg_i(x)\ket x$ is invariant under $\Pi_i$, because both vectors are constant on $C_i$ and on its complement. Using \cref{eq:dqi-centered-functions,eq:Ui_def} gives
    \begin{equation}\label{eq:Pi-tilde-basis}
        U_i^\dagger\widetilde\Pi_iU_i
        =\begin{bmatrix}
            \rho&\sqrt{\rho(1-\rho)}\\
            \sqrt{\rho(1-\rho)}&1-\rho
        \end{bmatrix}.
    \end{equation}
    This matrix is independent of $i$, so its sum over coordinates preserves the symmetric qubit subspace. Thus $\operatorname{ran}\mathsf V$ is invariant under $\widetilde{\mathsf S}_{\rm err}$. The restriction $T_\rho=\mathsf V^\dagger\widetilde{\mathsf S}_{\rm err}\mathsf V$ is tridiagonal, with entries
    \begin{equation}\label{eq:shell-satisfaction-matrix}
        (T_\rho)_{kk}=m\rho+(1-2\rho)k,
        \qquad
        (T_\rho)_{k,k+1}=(T_\rho)_{k+1,k}
        =\sqrt{\rho(1-\rho)(k+1)(m-k)}.
    \end{equation}

    Next consider decoding. Let $\Pi_{\mathcal S}$ project onto the success set from \cref{eq:decoder-success-failure-partition}, and define
    \begin{equation}\label{eq:dqi-postselection-probability}
        Z_{\mathbf C}\coloneqq\sum_{\mathbf y\in\mathcal S}
            |\phi_{\mathbf C}(\mathbf y)|^2.
    \end{equation}
    For $Z_{\mathbf C}>0$, the normalized successful error and output states are
    \begin{align}
        \ket{\Gamma_{\mathcal S}}
            &\coloneqq Z_{\mathbf C}^{-1/2}\Pi_{\mathcal S}\ket\Gamma,
            \label{eq:fixed-projected-error-state}\\
        \ket{\widetilde P_{\mathcal S}}
            &\coloneqq Z_{\mathbf C}^{-1/2}\sum_{\mathbf y\in\mathcal S}
                \phi_{\mathbf C}(\mathbf y)\ket{B^T\mathbf y},
            \qquad
        \ket{P_{\mathcal S}}\coloneqq (\mathsf F_q^\dagger)^{\otimes n}
            \ket{\widetilde P_{\mathcal S}}.
            \label{eq:success-conditioned-dqi-states}
    \end{align}
    Measurement gives $\mathbb E[s(\mathbf X)]= \bra{\widetilde P_{\mathcal S}}\widetilde{\mathsf S} \ket{\widetilde P_{\mathcal S}}$.
    
    For $\mathbf y\in\mathcal S$, define
    \begin{equation}\label{eq:shift-syndrome-decode}
        F_{i,a}(\mathbf y)\coloneqq D(B^T(\mathbf y-a\mathbf e_i)),
        \qquad
        \mathcal S_{i,a}^c
        \coloneqq\{\mathbf y\in\mathcal S:\mathbf y-a\mathbf e_i\in\mathcal F\}.
    \end{equation}
    Syndrome consistency gives $B^TF_{i,a}(\mathbf y)=B^T(\mathbf y-a\mathbf e_i)$ and $F_{i,-a}(F_{i,a}(\mathbf y))=\mathbf y$. Thus $F_{i,a}$ is a bijection of $\mathcal S$, and maps $\mathcal S_{i,a}^c$ bijectively onto $\mathcal S_{i,-a}^c$. Expanding \cref{eq:satisfaction-fourier-shifts}, the matrix element for $X_{-a\mathbf b_i}$ pairs $\mathbf z$ with $F_{i,a}(\mathbf z)$. When $\mathbf z-a\mathbf e_i\in\mathcal S$, this is precisely the error-space transition in \cref{eq:fourier-Serr-X}; the remaining pairs give
    \begin{align}
        M_{i,a}&\coloneqq
            \{(F_{i,a}(\mathbf z),\mathbf z):
                    \mathbf z\in\mathcal S_{i,a}^c\},
            \label{eq:pairings-set}\\
        \mathcal A_{\mathbf C}
            &\coloneqq\sum_{i,a}\kappa_i(a)
                \sum_{(\mathbf y,\mathbf z)\in M_{i,a}}
                    \overline{\phi_{\mathbf C}(\mathbf y)}
                        \phi_{\mathbf C}(\mathbf z),
            \label{eq:signed-collision-term}\\
        \mathbb E[s(\mathbf X)]
            &=\bra{\Gamma_{\mathcal S}}\widetilde{\mathsf S}_{\rm err}
                    \ket{\Gamma_{\mathcal S}}
                +\frac{\mathcal A_{\mathbf C}}{Z_{\mathbf C}}.
            \label{eq:fixed-linsat-decomposition}
    \end{align}
    The real quantity $\mathcal A_{\mathbf C}$ accounts for interference between distinct error-space transitions with the same syndrome, which we call \emph{aliasing}.
    
    For $a\ne0$, the centered Fourier coefficients satisfy
    \begin{equation}\label{eq:agnostic-nonlipschitz-bound}
        \kappa_i(a)=\sqrt{\rho(1-\rho)}
                        \,\overline{\widetilde g_i(a)}.
    \end{equation}
    Define the boundary masses
    \begin{equation}\label{eq:full-shell-mass}
        \mu_{i,a}\coloneqq
            \sum_{\mathbf y\in\mathcal S_{i,a}^c}
                    |\phi_{\mathbf C}(\mathbf y)|^2.
    \end{equation}
    Cauchy--Schwarz and the boundary bijection imply
    \begin{equation}\label{eq:full-shell-delta}
        |\mathcal A_{\mathbf C}|\le\Delta_{\mathbf C}
        \coloneqq\sqrt{\rho(1-\rho)}
            \sum_{i,a\ne0}|\widetilde g_i(a)|
                        \sqrt{\mu_{i,a}\mu_{i,-a}}.
    \end{equation}
    We will use the bounds
    \begin{equation}\label{eq:kappa-l1-bound}
        \sum_{a\ne0}|\widetilde g_i(a)|\le\sqrt{q-1},
        \qquad
        \sum_{a\ne0}|\kappa_i(a)|\le\sqrt{(q-1)\rho(1-\rho)}.
    \end{equation}

    Let $\epsilon_k$ be the decoder's ordinary failure probability under $P_k$. The hypothesis gives $\epsilon_k\le\varepsilon$ for $k\le\ell$. Moreover, every boundary event $\mathbf y\in\mathcal S_{i,a}^c$ implies failure of the hypothesis's one-coordinate recovery condition. For every $k\le\ell$, the one-coordinate recovery hypothesis implies $P_k(\mathcal S_{i,a}^c)\le\varepsilon$. The prepared error has weight $k$ with probability $|w_k|^2$ and, conditional on that weight, has distribution $P_k$. Therefore, for every normalized amplitude vector supported on weights $0,\ldots,\ell$,
    \begin{equation}
        \mu_{i,a}=\sum_{k=0}^{\ell}|w_k|^2P_k(\mathcal S_{i,a}^c) \le \varepsilon\sum_{k=0}^{\ell}|w_k|^2=\varepsilon.
    \end{equation}
    Substituting this bound into \cref{eq:full-shell-delta} and using \cref{eq:kappa-l1-bound} gives
    \begin{equation}\label{eq:cutoff-alias-bound}
        |\mathcal A_{\mathbf C}| \le \varepsilon\sum_{i,a\ne0}|\kappa_i(a)| \le m\sqrt{(q-1)\rho(1-\rho)}\,\varepsilon.
    \end{equation}
    
    Let $T_{\rho,\ell}$ be the principal block of $T_\rho$ on weights $0,\ldots,\ell$, and let $\lambda_\ell$ be its largest eigenvalue. Success projection has the diagonal compression
    \begin{equation}\label{eq:cutoff-success-compression}
        Q_{\mathcal S}\coloneqq\mathsf V^\dagger\Pi_{\mathcal S}\mathsf V
        =\diag(1-\epsilon_0,\ldots,1-\epsilon_m).
    \end{equation}
    For normalized $\mathbf w$ supported through $\ell$, decompose
    \begin{equation}\label{eq:cutoff-success-decomposition}
        \Pi_{\mathcal S}\mathsf V\mathbf w
        =\mathsf VQ_{\mathcal S}\mathbf w+\ket\eta,
        \qquad
        \ket\eta\perp\operatorname{ran}\mathsf V,
        \qquad
        \|\eta\|^2=Z_{\mathbf C}-\|Q_{\mathcal S}\mathbf w\|^2.
    \end{equation}
    Here $Z_{\mathbf C}=\mathbf w^\dagger Q_{\mathcal S}\mathbf w \ge1-\varepsilon$. Invariance makes the cross terms under $\widetilde{\mathsf S}_{\rm err}$ vanish. Since $\|Q_{\mathcal S}\mathbf w\|^2\ge(1-\varepsilon)Z_{\mathbf C}$, every allowed $\mathbf w$ satisfies
    \begin{equation}\label{eq:cutoff-projection-upper}
        \left\langle\widetilde{\mathsf S}_{\rm err}\right\rangle_{\Gamma_{\mathcal S}}
        \le\lambda_\ell\frac{\|Q_{\mathcal S}\mathbf w\|^2}{Z_{\mathbf C}}
            +m\left(1-\frac{\|Q_{\mathcal S}\mathbf w\|^2}{Z_{\mathbf C}}\right)
        \le(1-\varepsilon)\lambda_\ell+m\varepsilon.
    \end{equation}
    For the lower bound, choose $\mathbf w$ to be a top eigenvector of $T_{\rho,\ell}$. The component of $Q_{\mathcal S}\mathbf w$ along $\mathbf w$ is $Z_{\mathbf C}\mathbf w$. Since $T_{\rho,\ell}\succeq0$, orthogonality gives $\mathbf w^\dagger Q_{\mathcal S}T_{\rho,\ell}Q_{\mathcal S}\mathbf w \ge\lambda_\ell Z_{\mathbf C}^2$, with the matrices restricted to the occupied shells. The orthogonal term in \cref{eq:cutoff-success-decomposition} is nonnegative, so
    \begin{equation}\label{eq:cutoff-projection-lower}
        \left\langle\widetilde{\mathsf S}_{\rm err}\right\rangle_{\Gamma_{\mathcal S}}
        \ge\lambda_\ell Z_{\mathbf C}\ge(1-\varepsilon)\lambda_\ell.
    \end{equation}
    Combining \cref{eq:fixed-linsat-decomposition,eq:cutoff-alias-bound,eq:cutoff-projection-upper,eq:cutoff-projection-lower} proves
    \begin{equation}\label{eq:cutoff-spectral-bounds}
    \begin{aligned}
        f_\ell^*
        &\ge(1-\varepsilon)\frac{\lambda_\ell}{m}
            -\frac{\sqrt{(q-1)\rho(1-\rho)}\,\varepsilon}{1-\varepsilon},\\
        f_\ell^*
        &\le(1-\varepsilon)\frac{\lambda_\ell}{m}
            +\varepsilon+\frac{\sqrt{(q-1)\rho(1-\rho)}\,\varepsilon}{1-\varepsilon}.
    \end{aligned}
    \end{equation}
    
    Finally, we bound the spectral edge~\cite{JSW25}:
    \begin{equation}\label{eq:cutoff-semicircle-edge}
        F_\rho(p_\ell)-m^{-1/2}
            \le\frac{\lambda_\ell}{m}\le F_\rho(p_\ell).
    \end{equation}
    For the upper bound, let $x=m^{-1}\sum_k k|w_k|^2\le\ell/m$. Positivity bounds the off-diagonal entry of the one-qubit reduced density matrix of $\sum_kw_k\ket{D_k^m}$ by $\sqrt{x(1-x)}$. \Cref{eq:Pi-tilde-basis} then gives $\mathbf w^\dagger T_\rho\mathbf w/m\le F_\rho(x)\le F_\rho(p_\ell)$.
    
    The case $\ell=0$ is exact. For the lower bound with $\ell>0$, consider the binomial amplitudes
    \begin{equation}\label{eq:binomial-shell-vector}
        w_k^{(p)}\coloneqq\sqrt{\binom mk p^k(1-p)^{m-k}},
        \qquad 0\le k\le m,\quad 0<p<1.
    \end{equation}
    Their qubit state is $(\sqrt{1-p}\ket0+\sqrt p\ket1)^{\otimes m}$, so \cref{eq:Pi-tilde-basis} gives
    \begin{equation}\label{eq:binomial-jacobi-benchmark}
        \frac{(\mathbf w^{(p)})^\dagger T_\rho\mathbf w^{(p)}}{m}=F_\rho(p).
    \end{equation}
    Truncate $\mathbf w^{(p_\ell)}$ at $\ell$ and normalize. For $\ell>0$,
    \begin{equation}\label{eq:cutoff-binomial-rayleigh}
        \frac{\lambda_\ell}{m}\ge
            \rho+(F_\rho(p_\ell)-\rho)R_\ell,
        \qquad
        R_\ell\coloneqq
            \frac{\Pr[\operatorname{Bin}(m-1,p_\ell)\le\ell-1]}
                 {\Pr[\operatorname{Bin}(m,p_\ell)\le\ell]}.
    \end{equation}
    Indeed, both the mean weight divided by $m$ and the adjacent-shell contribution are their untruncated values multiplied by $R_\ell$, using \cref{eq:shell-satisfaction-matrix}. For $K\sim\operatorname{Bin}(m,p_\ell)$ we therefore have $\mathbb E[K\mid K\le\ell]=mp_\ell R_\ell$. Since $\ell\ge mp_\ell$, the binomial median bound of~\cite[Theorem~3]{KB78} gives $\Pr[K\le\ell]\ge1/2$. Thus
    \begin{equation}\label{eq:cutoff-binomial-mean-loss}
        mp_\ell(1-R_\ell)
        \le\frac{\mathbb E[(mp_\ell-K)_+]}{\Pr[K\le\ell]}
        \le\sqrt{mp_\ell(1-p_\ell)},
    \end{equation}
    where the last step uses $\mathbb E[(mp_\ell-K)_+]=\tfrac12\mathbb E|K-mp_\ell|$ and Cauchy--Schwarz. Consequently,
    \begin{equation}\label{eq:cutoff-finite-size-loss}
    \begin{aligned}
        (F_\rho(p_\ell)-\rho)(1-R_\ell)
        &\le\frac{(1-2\rho)\sqrt{p_\ell(1-p_\ell)}
            +2\sqrt{\rho(1-\rho)}(1-p_\ell)}{\sqrt m}\\
        &\le m^{-1/2},
    \end{aligned}
    \end{equation}
    again by Cauchy--Schwarz and $(1-2\rho)^2+4\rho(1-\rho)=1$. This proves \cref{eq:cutoff-semicircle-edge}; substituting it into \cref{eq:cutoff-spectral-bounds} proves the theorem.
\end{proof}
The lower bound uses the top eigenvector of $T_{\rho,\ell}$; the upper bound holds for every choice of amplitudes with the stated cutoff.

\begin{corollary}[Independently random target sets]
\label{cor:DQI-error-random-binomial-weights}
Fix a constraint matrix with $m$ rows and a deterministic, syndrome-consistent decoder that returns zero on the zero syndrome. Choose the target sets independently and uniformly among the $r$-element subsets of $\mathbb F_q$, where $0<r<q$, independently of the decoder. Run DQI with the binomial amplitudes of \cref{eq:binomial-shell-vector} at $0<p<1$. Let $\varepsilon_p$ be the decoder's word-error probability when each coordinate is independently unchanged with probability $1-p$ and otherwise uniformly distributed over the nonzero field elements. Writing $\rho=r/q$ and averaging over both the targets and the success-conditioned DQI measurement, the expected satisfaction fraction $f$ satisfies
\begin{equation}\label{eq:random-linsat-binomial-bound}
\begin{gathered}
 (1-\varepsilon_p)F_\rho(p)-v
 \le f
 \le (1-\varepsilon_p)F_\rho(p)+\varepsilon_p+v,\\
 v=2\sqrt{\frac{
       F_\rho(p)(1-F_\rho(p))\varepsilon_p(1-\varepsilon_p)}m}.
\end{gathered}
\end{equation}
In particular, $v\le1/(2\sqrt m)$.
\end{corollary}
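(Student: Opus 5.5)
The plan is to reuse the decomposition \cref{eq:fixed-linsat-decomposition} from the proof of \cref{thm:fixed-target-semicircle}, but to average over the random targets before bounding anything.

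\textbf{Step 1: the averaged error law.} For a uniformly random $r$-subset $C_i$, the law of $C_i$ is invariant under $C_i\mapsto\lambda C_i$ for every $\lambda\in\mathbb F_q^\times$. Since $\theta_i(a)$ for the set $\lambda C_i$ equals $\theta_i(\lambda a)$ for $C_i$, the target average of $\theta_i(a)$ is the same for every $a\ne0$, namely $1/(q-1)$. The binomial amplitudes \cref{eq:binomial-shell-vector} make the error weight $\operatorname{Bin}(m,p)$. Hence, averaged over targets, the weight of $|\phi_{\mathbf C}(\mathbf y)|^2$ is exactly the $q$-ary symmetric channel of the statement. In particular $\mathbb E_{\mathbf C}Z_{\mathbf C}=1-\varepsilon_p$, and the average of $|w_k|^2P_k(\mathcal F)$ summed over $k$ is $\varepsilon_p$.

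\textbf{Step 2: the aliasing term vanishes on average.} The term $\mathcal A_{\mathbf C}$ in \cref{eq:signed-collision-term} pairs $\mathbf y\ne\mathbf z$ with $\mathbf y-\mathbf z+a\mathbf e_i$ a nonzero dual codeword. Each summand contains a product of factors $\overline{\widetilde g_j(y_j)}\widetilde g_j(z_j)$, together with $\kappa_i(a)$, over coordinates where the two words differ. Independence of the targets across coordinates, together with the identity $\mathbb E_{C_j}\,\overline{\widetilde g_j(b)}\widetilde g_j(b')=0$ for $b\ne b'$, should kill every such term. That identity follows from the dilation and translation symmetry of a uniformly random subset (averaging $\chi_q$ over translations). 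The decoder is independent of the targets, so $\mathcal S$ and $M_{i,a}$ are fixed sets.

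One issue is that $f$ is an average of a ratio $\mathbb E[s]/Z_{\mathbf C}$. I would handle this by taking $f$ to be the target-averaged conditional expectation, defined as a ratio of averages (the joint probability of success and assignment, divided by the success probability). Then only the numerator $\mathcal A_{\mathbf C}$ needs to vanish, and $\mathbb E_{\mathbf C}\mathcal A_{\mathbf C}=0$ gives exactly that. The hypothesis that the decoder returns zero on the zero syndrome ensures $\mathbf 0\in\mathcal S$, so the shell-$0$ term needs no separate treatment.

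\textbf{Step 3: a covariance bound.} After averaging, the output of the algorithm is controlled by $\langle\Gamma|\Pi_{\mathcal S}\widetilde{\mathsf S}_{\rm err}\Pi_{\mathcal S}|\Gamma\rangle/(1-\varepsilon_p)$. Here $\ket\Gamma$ is effectively the product state $U(\sqrt{1-p}\ket0+\sqrt p\ket1)^{\otimes m}$, for which, by \cref{eq:Pi-tilde-basis}, $\widetilde{\mathsf S}_{\rm err}$ is a sum of $m$ commuting-free local projectors, each with mean $F_\rho(p)$. Write $A=\widetilde{\mathsf S}_{\rm err}-mF_\rho(p)$ and $E=\Pi_{\mathcal S}-(1-\varepsilon_p)$. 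Then $\langle A\rangle=0$, and by the product structure $\|A\Gamma\|^2=mF_\rho(1-F_\rho)$. Likewise $\|E\Gamma\|^2=\varepsilon_p(1-\varepsilon_p)$.

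Expanding $\Pi_{\mathcal S}A\Pi_{\mathcal S}$ and applying Cauchy--Schwarz to the cross terms gives a discrepancy of $2\sqrt{mF_\rho(1-F_\rho)\varepsilon_p(1-\varepsilon_p)}$, which becomes $mv$ after the $1/m$ normalization. For the lower bound I would keep $\Pi_{\mathcal S}$ and drop the positive part on $\mathcal F$. For the upper bound I would use $0\preceq\widetilde{\mathsf S}_{\rm err}\preceq mI$ on the failure component, which produces the extra $+\varepsilon_p$, in parallel with \cref{eq:cutoff-projection-upper}.

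I expect \textbf{the main obstacle} to be making Step 3 tight. A careless expansion gives a factor like $(1-\varepsilon_p)^{-1}$ or an extra quadratic $\langle EAE\rangle$ term. To avoid this, I would first try writing $\langle\Pi_{\mathcal S}\widetilde{\mathsf S}\Pi_{\mathcal S}\rangle=\langle\widetilde{\mathsf S}\rangle-2\operatorname{Re}\langle\Pi_{\mathcal F}\widetilde{\mathsf S}\rangle+\langle\Pi_{\mathcal F}\widetilde{\mathsf S}\Pi_{\mathcal F}\rangle$. I would bound the middle term via $\operatorname{Cov}$ with $\Pi_{\mathcal F}$ (whose mean is $\varepsilon_p$) and the last between $0$ and $m\varepsilon_p$. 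I would also check whether the normalization by $Z$ absorbs the remaining factors.

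Finally, the bound $v\le1/(2\sqrt m)$ follows from $F(1-F)\le1/4$ and $\varepsilon(1-\varepsilon)\le1/4$, which give $v\le2\sqrt{1/(16m)}=1/(2\sqrt m)$.
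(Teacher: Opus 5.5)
\textbf{The gap: you bound a different quantity.} Your architecture matches the paper's. You use the decomposition \cref{eq:fixed-linsat-decomposition} into a projected term plus the alias term $\mathcal A_{\mathbf C}$, kill the alias term by translation averaging, get $\mathbb E_{\mathbf C}Z_{\mathbf C}=1-\varepsilon_p$ from the dilation argument, and apply Cauchy--Schwarz against the product-state variance $mF_\rho(p)(1-F_\rho(p))$. The problem is the target of the bound. As you note yourself, the statement's $f$ is $\mathbb E_{\mathbf C}\bigl[\mathbb E[s(\mathbf X)\mid\text{success},\mathbf C]\bigr]/m$, an average of ratios. Your ratio of averages instead weights each target tuple by $Z_{\mathbf C}$. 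Since $Z_{\mathbf C}$ genuinely varies with $\mathbf C$ (the laws $\theta_i$ change under dilations), you would be proving the inequality for something other than what is claimed.

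\textbf{The missing observation.} The reformulation is unnecessary once you note that $Z_{\mathbf C}$ is invariant under independent translations $C_i\mapsto C_i+v_i$. These only multiply $\widetilde g_i(a)$ by the phase $\chi_q(av_i)$, leaving every $|\phi_{\mathbf C}(\mathbf y)|^2$ unchanged. The target law is translation invariant, so $\mathbb E_{\mathbf C}[\mathcal A_{\mathbf C}/Z_{\mathbf C}]=\mathbb E_{\mathbf C}\bigl[Z_{\mathbf C}^{-1}\,\mathbb E_{\mathbf v}\mathcal A_{\mathbf C+\mathbf v}\bigr]=0$. This holds because each summand of $\mathcal A_{\mathbf C+\mathbf v}$ carries the factor $\chi_q(\mathbf v\cdot(\mathbf z-\mathbf y-a\mathbf e_i))$ with a nonzero vector. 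This single translation average also covers the coordinate $i$ carrying $\kappa_i(a)$, and coordinates where only one of $y_j,z_j$ is nonzero; your two-factor identity does not literally cover either case.

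\textbf{How Step 3 should go.} With the correct $f$, Step 3 must be done for each fixed $\mathbf C$, normalizing by $Z_{\mathbf C}$ and centering $\Pi_{\mathcal S}$ at $Z_{\mathbf C}$ rather than at $1-\varepsilon_p$. Your claim $\|E\Gamma\|^2=\varepsilon_p(1-\varepsilon_p)$ holds only on average, not per instance. Write $\Pi_{\mathcal S}\ket{\Gamma^{(p)}}=Z\ket{\Gamma^{(p)}}+\sqrt{Z(1-Z)}\ket\eta$ with $\eta\perp\Gamma^{(p)}$ and $\|\eta\|=1$. Then the diagonal $\eta$ contribution lies in $[0,m(1-Z)]$. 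After subtracting $mF_\rho(p)$ at no cost (by orthogonality), the cross term is at most $2\sqrt{Z(1-Z)mF_\rho(p)(1-F_\rho(p))}$. This is exactly \cref{eq:binomial-projected-satisfaction}, and averaging with $\mathbb E_{\mathbf C}Z=1-\varepsilon_p$ and concavity of $\sqrt{x(1-x)}$ finishes the proof.

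\textbf{On your obstacle paragraph.} The quadratic term $\langle EAE\rangle$ is not an artifact of a careless expansion. It is exactly the source of the $-\varepsilon_pF_\rho$ and $+\varepsilon_p(1-F_\rho)$ corrections, once you bound it by $-mF_\rho\|E\Gamma\|^2\le\langle EAE\rangle\le m(1-F_\rho)\|E\Gamma\|^2$. In your ratio-of-averages setting, that expansion already reproduces the stated constants. By contrast, the $\Pi_{\mathcal F}$ fallback you planned to try first, with $\langle\Pi_{\mathcal F}\widetilde{\mathsf S}_{\rm err}\Pi_{\mathcal F}\rangle\in[0,m\varepsilon_p]$, only gives the lower bound $\bigl(F_\rho(1-2\varepsilon_p)-v\bigr)/(1-\varepsilon_p)$. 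That is weaker than the claim whenever $\varepsilon_p>0$.
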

\begin{proof}
The binomial amplitudes give the product state
\begin{equation}\label{eq:binomial-product-state}
 \ket{\Gamma^{(p)}}=
 \bigotimes_{i=1}^m
 \left(\sqrt{1-p}\ket0+
       \sqrt p\sum_{a\ne0}\widetilde g_i(a)\ket a\right).
\end{equation}
Its expectation and variance under $\widetilde{\mathsf S}_{\rm err}$ are
\begin{equation}\label{eq:binomial-satisfaction-variance}
 \mathbb E_{\Gamma^{(p)}}\widetilde{\mathsf S}_{\rm err}
   =mF_\rho(p),\qquad
 \operatorname{Var}_{\Gamma^{(p)}}
      (\widetilde{\mathsf S}_{\rm err})
   =mF_\rho(p)(1-F_\rho(p)).
\end{equation}
Indeed, each local term is a projector with expectation $F_\rho(p)$ by \cref{eq:Pi-tilde-basis}, and the covariances vanish.

For a fixed target tuple, abbreviate $Z=Z_{\mathbf C}$. The assumption on the zero syndrome and the nonzero amplitude at the zero error ensure $Z>0$. Write
\begin{equation}
 \ket{\Gamma_{\mathcal S}^{(p)}}=
 \sqrt Z\ket{\Gamma^{(p)}}+\sqrt{1-Z}\ket\eta,
 \qquad \braket{\eta\mid\Gamma^{(p)}}=0,\quad\|\eta\|=1.
\end{equation}
The expectation on $\ket\eta$ lies in $[0,m]$. Subtracting $mF_\rho(p)I$ in the cross term and applying Cauchy--Schwarz therefore gives
\begin{equation}\label{eq:binomial-projected-satisfaction}
\begin{aligned}
 ZF_\rho(p)-2\sqrt{\frac{Z(1-Z)F_\rho(p)(1-F_\rho(p))}m}
 &\le\frac1m
   \left\langle\widetilde{\mathsf S}_{\rm err}
      \right\rangle_{\Gamma_{\mathcal S}^{(p)}}\\
 &\le ZF_\rho(p)+1-Z+
       2\sqrt{\frac{Z(1-Z)F_\rho(p)(1-F_\rho(p))}m}.
\end{aligned}
\end{equation}

It remains to average the alias term in \cref{eq:fixed-linsat-decomposition}. Translating each target independently by $v_i$ preserves its distribution and $Z_{\mathbf C}$. In \cref{eq:signed-collision-term}, each summand is multiplied by $\chi_q(\mathbf v\cdot(\mathbf z-\mathbf y-a\mathbf e_i))$. The vector in this character is nonzero by \cref{eq:pairings-set}, so averaging the translations makes $\mathbb E_{\mathbf C}[\mathcal A_{\mathbf C}/Z_{\mathbf C}]=0$.

Finally, uniform target sets satisfy $\mathbb E_{C_i}|\widetilde g_i(a)|^2=1/(q-1)$ for $a\ne0$. Their independence implies $\mathbb E_{\mathbf C}(1-Z)=\varepsilon_p$. Averaging \cref{eq:binomial-projected-satisfaction} and using $\mathbb E_{\mathbf C}\sqrt{Z(1-Z)} \le\sqrt{(1-\varepsilon_p)\varepsilon_p}$ proves the result.
\end{proof}

\section{Decoding cycle codes over general fields} \label{sec:decoding-cycle-codes}

Let $G$ be an undirected, simple graph with vertex set $V$ and edge set $E$. To define a homological cycle code from $G$, fix an orientation of each edge $e$, and a nonzero weight $a_e\in\mathbb{F}_q^\times$. Let $B^T\in\mathbb{F}_q^{V\times E}$ be the weighted incidence matrix in which column $e$, oriented from $u$ to $v$, has two nonzero entries: $a_e$ in row $v$ and $-a_e$ in row $u$. The cycle code $C^\perp$ is the code whose parity check matrix is $B^T$. 
\begin{equation}    \label{eq:cycle-code-definition}
    C^\perp\coloneqq \ker(B^T)=\{\mathbf{d}\in\mathbb{F}_q^E:B^T\mathbf{d}=\mathbf{0}\}
\end{equation}

One can visualize homological cycle codes as incompressible flows on graph $G$. Observe that a vector $\mathbf{d}\in\mathbb{F}_q^E$ satisfies $B^T\mathbf{d}=\mathbf{0}$ exactly when the weighted sum of the edge values at every vertex vanishes in $\mathbb F_q$. That is, the net flux into each vertex is zero. Linearity of the code can be understood by noting that sums of incompressible flows are also incompressible. The name ``cycle code'' reflects the fact that this space is spanned by the signed circulation vectors of the simple cycles of $G$.

For homological cycle codes, the minimum distance $d_\mathrm{min}(C^\perp)$ is fixed by the geometry of the underlying graph. Since $C^\perp$ is linear, $d_\mathrm{min}$ is exactly the smallest weight of a nonzero codeword. Because every nonzero codeword $\mathbf d$ is a valid flow, its associated subgraph $\supp \mathbf d$ has no vertex of degree one, since zero-flux would be violated at that vertex. Hence each nonzero codeword contains a cycle of $G$, implying $d_\mathrm{min}\geq g(G)$, where $g(G)$ is the girth. Conversely, any single cycle carries a valid flow supported on exactly its edges. By considering the flow along the shortest cycle, we have the complementary bound $d_\mathrm{min} \leq g(G)$. Thus, $d_\mathrm{min} = g$ for cycle codes, independent of $q$. Because of this, cycle codes on the bounded-degree graph families most useful for decoding cannot have large distance. The Moore bound~\cite{Biggs1993AlgebraicGraphTheory} forces the girth of a $D$-regular graph, for any fixed $D$, to grow at most logarithmically in the number of vertices, and hence so does the distance. Remarkably, however, families of cycle codes often are still able to correct, with high probability, a number of errors that grows linearly with block size, thus demonstrating a significant gap between worst-case and probabilistic error correction capabilities.

\subsection{NP-hardness of minimum-weight decoding}
\label{sec:ml-hardness}

An important task in coding theory (and practice) is determining a nearest codeword $\mathbf d$ to some given string $\mathbf y$, which may be thought of as the output of some noisy communication channel.
\begin{definition}[Nearest Codeword Problem] \label{defn:nearest-codeword}
    Given a block code $C^\perp\subseteq\mathbb{F}_q^m$ and a string $\mathbf{y}\in\mathbb{F}_q^m$, the nearest codeword problem is to find the codeword $\mathbf{d}\in C^\perp$ nearest to $\mathbf{y}$ in Hamming distance.
\end{definition}
\noindent When $C^\perp$ is linear, this problem is equivalent to minimum-weight syndrome decoding with respect to its parity check matrix $B^T$: writing $\mathbf{s}=B^T\mathbf{y}$ for the syndrome, finding the nearest codeword $\mathbf d$ is equivalent to finding a minimum-weight $\mathbf{e}\in\mathbb{F}_q^m$ satisfying $B^T\mathbf{e}=\mathbf{s}$, via the relation $\mathbf e = \mathbf y - \mathbf d$. We now prove the NP-hardness of nearest codeword in this syndrome-decoding formulation for cycle codes on alphabets of size three or more. This stands in contrast to cycle codes on the binary alphabet, for which nearest-codeword decoding is known to be solvable in polynomial time by a T-join decoding algorithm \cite{EJ73}.

\begin{theorem}\label{thm:hardness}
    For every fixed finite field $\mathbb F_q$ with $q>2$, the following decision problem is NP-complete. Given a homological cycle code with parity-check matrix $B^T$, a syndrome $\mathbf s$, and an integer $w\ge0$, decide whether there exists $\mathbf e\in\mathbb F_q^m$ with $B^T\mathbf e=\mathbf s$ and $|\mathbf e|\le w$. Consequently, nearest-codeword decoding is NP-hard for these codes.
\end{theorem}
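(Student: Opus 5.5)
The plan is a reduction from $3$-Dimensional Matching (3DM) to this syndrome-decoding problem. It rests on a structural description of minimum-weight solutions of $B^T\mathbf e=\mathbf s$ for homological cycle codes. Membership in NP is immediate: a candidate $\mathbf e$ is a certificate, and $B^T\mathbf e=\mathbf s$ and $|\mathbf e|\le w$ are checked in polynomial time. The NP-hardness of nearest-codeword decoding then follows from the equivalence with syndrome decoding noted before \cref{defn:nearest-codeword}.

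\textbf{Structural lemma.} I would first prove the following. For any nonzero edge weights $a_e$, a vector supported on the edge set of a forest $F$ realizes syndrome $\mathbf s$ if and only if $\mathbf s$ vanishes off $V(F)$ and $\sum_{v\in V(T)}s_v=0$ for every tree component $T$ of $F$. Necessity holds because each column of $B^T$ has entries summing to zero. Sufficiency follows by peeling leaves: the value on a leaf edge is forced to be $\pm s_{\text{leaf}}/a_e$, and the tree is then solved inductively. Next, if $\operatorname{supp}\mathbf e$ contains a cycle, adding a suitable multiple of that cycle's circulation kills one cycle edge without creating new support. So some minimum-weight solution is supported on a forest. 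Consequently $|\mathbf e|\ge |V(F)|-\#\text{components}(F)$, where every vertex with $s_v\neq0$ (a \emph{terminal}) lies in $F$ and each component has zero demand sum.

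\textbf{Reduction.} Take a 3DM instance with disjoint sets $X,Y,Z$, each of size $t$, and triples $\mathcal T\subseteq X\times Y\times Z$. Build a bipartite graph with one vertex per element and one vertex per triple, joining each triple vertex to its three elements. Orient the edges arbitrarily and use weights $a_e=1$. Since $q>2$, we may pick $\alpha\in\mathbb F_q\setminus\{0,-1\}$. Assign demands $s_x=1$ for $x\in X$, $s_y=\alpha$ for $y\in Y$, $s_z=-(1+\alpha)$ for $z\in Z$, and $s=0$ on triple vertices. All three demand values are nonzero and sum to zero. Moreover, no nonempty proper sub-multiset of $\{1,\alpha,-(1+\alpha)\}$ sums to zero. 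To see this, each single value is nonzero, and each pair sums to minus the third value, which is also nonzero. The total demand is $t\cdot 0=0$, so the syndrome is consistent. Set $w=3t$.

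\textbf{Correctness.} Suppose a perfect matching exists. For each chosen triple, take its star; this is a forest with $3t$ edges whose components each have demand sum $0$, so the lemma yields $\mathbf e$ with $|\mathbf e|=3t$. Conversely, take a forest solution $F$ with at most $3t$ edges. Element vertices are pairwise nonadjacent, so every component of $F$ containing a terminal also contains at least one triple vertex. Such a component has at least as many edges as terminals. Summing over components gives at least $3t$ edges, with equality only if every component consists of exactly one triple vertex together with terminals adjacent to it, i.e.\ a star on a subset of one triple. The zero-sum condition together with the no-proper-zero-subsum property forces each star to cover its full triple. Hence the stars form a perfect matching.

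The main obstacle is the lemma's claim that optimal supports can be taken to be forests, together with the exact counting of edges against terminals. One must check that cancelling a cycle edge never increases support, which holds because the added circulation is supported on the cycle. One must also check that the equality case of the count really pins each component down to a single star. The choice of $\alpha$ is the only place where $q>2$ enters. For $q=2$ no such $\alpha$ exists, consistent with the polynomial-time $T$-join algorithm~\cite{EJ73}.
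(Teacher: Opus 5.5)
Your proof is correct, and the reduction is the paper's. You use perfect 3DM, the element--triple incidence graph, and element syndromes forming a zero-sum triple of nonzero values with no nonempty proper zero subsum; your $(1,\alpha,-(1+\alpha))$ is the paper's $(\alpha,\beta,\gamma)$ up to scaling, and $w=3t$.

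The difference is in the converse direction. The paper argues locally. Each of the $3t$ element vertices has nonzero syndrome, so it needs a nonzero incident edge. Element vertices are pairwise nonadjacent, so these edges are distinct. Hence weight at most $3t$ forces exactly one nonzero edge per element, with its value fixed by the syndrome. The zero check at each triple vertex then selects none or all three of its edges.

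You instead first reduce to forest supports and then count edges against terminals component by component. This works, but the forest step is superfluous here. The per-component zero-sum condition (columns sum to zero) and the bound $|E(K)|\ge|V(K)|-1$ hold for every connected component $K$ of $\operatorname{supp}\mathbf e$. So your equality analysis goes through without first cancelling cycles.

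What your lemma buys is generality. It recasts minimum-weight syndrome decoding for any homological cycle code as a zero-sum Steiner-forest problem: minimize $|V(F)|-\#\text{components}(F)$ over forests $F$ that contain all terminals and whose components have zero total syndrome. For $q=2$ this is exactly the minimum $T$-join problem, which makes the binary/nonbinary contrast transparent. The paper's route buys brevity.

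A small point on the forward direction: your lemma only gives $\operatorname{supp}\mathbf e\subseteq E(F)$, hence $|\mathbf e|\le 3t$. That is all the reduction needs, but it is not the stated equality $|\mathbf e|=3t$ unless you note that each leaf edge is forced to be nonzero.
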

\begin{proof}
    Membership in NP follows by checking the syndrome and weight. For hardness, we reduce from perfect three-dimensional matching~\cite{GJ79}: given disjoint sets $U_1,U_2,U_3$, each of size $k$, and a set of triples $\mathcal T\subseteq U_1\times U_2\times U_3$, decide whether there exists a subset of $k$ triples from $\mathcal{T}$ that covers every element exactly once.
    
    Choose $\alpha,\beta\in\mathbb F_q^\times$ with $\alpha+\beta\ne0$ (possible because $q>2$) and put $\gamma=-\alpha-\beta$. All three are nonzero, their sum is zero, and no nonempty proper subcollection has sum zero. This holds also in characteristic two.
    
    Construct a bipartite incidence graph with vertex set $V = P \sqcup L$, with ``element vertices'' $ P = U_1\sqcup U_2\sqcup U_3$, and ``triplet vertices'' $L = \mathcal T$. Join each triplet vertex $T\in \mathcal T$ to its three elements, orienting the edges from $T$ to those elements, and take $B^T$ to be its oriented incidence matrix. Define a syndrome $\mathbf s \in \mathbb{F}_q^V$ by assigning entries in $U_1,U_2,U_3$ values $\alpha,\beta,\gamma$, respectively, and entries in $\mathcal{T}$ value $0$. Set $w=3k$.
    
    An exact cover yields an error $\mathbf e\in\mathbb{F}_q^E$ of weight $3k$ by assigning the three edges of each chosen triple the respective values $\alpha,\beta,\gamma$. The triplet-vertex equations hold because their sum is zero. Conversely, all $3k$ element syndromes are nonzero, so an error of weight at most $3k$ has exactly one nonzero edge at each element. That edge's value is forced to equal the element syndrome. At a triplet vertex, the selected incident values are a subcollection of $\{\alpha,\beta,\gamma\}$, which sums to zero only if none or all three are selected. The selected triples therefore form an exact cover. This reduction gives a simple graph.
\end{proof}

\subsection{Decoding thresholds for even symbol-wise additive channels}
\label{sec:ml-threshold}

We now study ML whole-word decoding for the homological cycle codes on $D$-regular graphs. Throughout this subsection, $B^T$ is an oriented incidence matrix, $n=|V(G)|$, and $m=|E(G)|=Dn/2$. For a weighted homological incidence matrix, these results apply after multiplying each error coordinate by its incidence weight, provided the transformed errors have the common distribution specified below. Uniform nonzero replacement is preserved by this normalization.

We consider additive channels in which each symbol is corrupted by adding an error $a \in \mathbb{F}_q$, whose distribution is as follows. Fix a probability distribution $\theta$ on $\mathbb F_q^\times$ which is \emph{even}, i.e., $\theta(a)=\theta(-a)$, and write
\begin{equation}\label{eq:ml-general-noise}
    \nu_p(0) = 1-p, \qquad
    \nu_p(a)=p\theta(a)\quad(a\ne0)
\end{equation}
We refer to $\theta$ as the \emph{replacement distribution}. An important special case, when all symbols are equally likely to occur given a flip, is the $q$-ary symmetric channel.
\begin{definition}[$q$-ary symmetric channel]\label{def:qsym}
The $q$-ary symmetric channel ($q$SC) has uniform replacement $\theta(a)=1/(q-1)$ for all $a\ne0$ and $0\le p\le 1-q^{-1}$.
\end{definition}

Let $\mathbf E \in \mathbb{F}_q^m$ denote the random variable representing the full error vector obtained by drawing $m$ independent errors from $\nu_p$. Let $P_{\mathrm W}^{\mathrm{ML}}(G,p,\theta)$ be the minimum whole-word error probability in recovering $\mathbf E$ from $B^T\mathbf E$, and put $P_{\mathrm S}^{\mathrm{ML}}=1-P_{\mathrm W}^{\mathrm{ML}}$. Thus
\begin{equation}\label{eq:ml-success}
 P_{\mathrm S}^{\mathrm{ML}}(G,p,\theta)
 =\sum_{\mathbf s\in\mathbb F_q^V}
   \max_{\substack{\mathbf e\in\mathbb F_q^E\\B^T\mathbf e=\mathbf s}}
   \prod_{j=1}^{m}\nu_p(e_j).
\end{equation}
This is also ML codeword success for a uniformly distributed transmitted codeword, and it is independent of the choice among equal-likelihood maximizers. For uniform $\theta$, we retain the notation $P_{\mathrm W}^{\mathrm{ML}}(G,p)$. For $p<1-q^{-1}$, the likelihood of an error decreases with Hamming weight, so ML syndrome decoding is minimum-weight decoding. Its worst-case hardness for $q>2$ does not preclude efficient recovery at lower noise levels, as the LP guarantees below demonstrate.

From the Moore bound \cite{Biggs1993AlgebraicGraphTheory}, the girth for $D$-regular graphs can be at most logarithmic in terms of the number of vertices, as given by
\begin{equation}
    g(G) \leq 2 \log_{D-1}(n) + O(1)~.
\end{equation}
However, many graphs such as random regular graphs contain short cycles whose distribution is Poissonian \cite{mckay2004}. The presence of these short cycles hurts the decoding performance, leading to a word error rate bounded away from zero. 
\begin{theorem}[Error floor from short cycles]
\label{thm:unexpurgated-error-floor}
Fix $D\ge3$, a finite field $\mathbb F_q$, an even replacement distribution $\theta$, and $0<p<1/2$. If $G_n$ is uniform among simple $D$-regular graphs on $n$ labeled vertices, then
\begin{equation}\label{eq:ml-unexpurgated-floor}
 \liminf_{n\to\infty}
 \mathbb E_{G_n}P_{\mathrm W}^{\mathrm{ML}}(G_n,p,\theta)
 \ge (p \theta_\mathrm{max})^3
       \left(1-\exp\left[-\frac{(D-1)^3}{6}\right]\right)>0,
\end{equation}
where $\theta_{\max} \coloneqq \max_{a} \theta(a)$. The limit is through $n>D$ with $Dn$ even. The same statement holds after conditioning on connectedness.
\end{theorem}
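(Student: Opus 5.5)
The plan is to exhibit, with probability bounded away from zero, a short cycle on which a specific error pattern makes ML decoding fail. Because $(D-1)^3/6$ is the Poisson mean of the number of triangles in a random $D$-regular graph, I will use triangles. By \cite{mckay2004}, the number $X_3$ of triangles in $G_n$ converges in distribution to $\mathrm{Poisson}(\lambda_3)$ with $\lambda_3=(D-1)^3/6$. Hence $\Pr[X_3\ge1]\to1-e^{-\lambda_3}$. The same limit holds after conditioning on connectedness, because $G_n$ is connected with probability $1-o(1)$ for fixed $D\ge3$.

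Next, fix a graph $G$ containing a triangle $\Delta$ with edges $e_1,e_2,e_3$. Orient the triangle cyclically and let $\mathbf c$ be its circulation. After normalizing incidence signs, $\mathbf c$ is a codeword supported on $\{e_1,e_2,e_3\}$ with entries $\pm1$ in the original orientations. Let $a^*$ be a maximizer of $\theta$. I will use the standard pairing argument. For any error $\mathbf e$, the vectors $\mathbf e$ and $\mathbf e'=\mathbf e+\mathbf c$ share a syndrome, so an ML decoder succeeds on at most one of each such pair whenever the two have the same likelihood. More robustly, the success probability equals $\sum_{\mathbf s}\max$, so on any coset the decoder captures only the largest likelihood.

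Concretely, consider the event $A$ that on $\Delta$ the error equals $a^*\sigma_j$ on $e_j$, where $\sigma_j\in\{\pm1\}$ are signs chosen so that $\mathbf e|_\Delta=a^*\mathbf c|_\Delta$. Then $\mathbf e-a^*\mathbf c$ vanishes on $\Delta$ and agrees with $\mathbf e$ elsewhere. Its likelihood ratio to $\mathbf e$ is $(1-p)^3/(p^3\prod_j\theta(\pm a^*))=(1-p)^3/(p\theta_{\max})^3>1$, using evenness $\theta(-a^*)=\theta(a^*)$ and $p<1/2$. So whenever $A$ occurs, the true error is not the ML choice, and the decoder fails. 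Since the coordinates are independent, $\Pr[A]=(p\theta_{\max})^3$, and it does not depend on the rest of the graph. This gives $P_{\mathrm W}^{\mathrm{ML}}(G,p,\theta)\ge(p\theta_{\max})^3\mathbb 1[X_3(G)\ge1]$. Taking expectation and $\liminf$ yields \cref{eq:ml-unexpurgated-floor}.

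Strictly, the decoder still fails under $A$ even when there are ties elsewhere, because $\mathbf e$ is never a maximizer of its coset. So the bound is clean and no tie-breaking issues arise.

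The main obstacle is the random-graph input rather than the coding argument. I need the triangle count to be asymptotically Poisson with mean $(D-1)^3/6$ for the uniform simple $D$-regular model with $Dn$ even and $n>D$, together with the transfer to the model conditioned on connectedness. I will cite McKay--Wormald--Wysocka \cite{mckay2004} for the joint Poisson limit of short cycle counts in the simple model. For connectedness, I will use the standard fact that random $D$-regular graphs with $D\ge3$ are connected asymptotically almost surely. Conditioning on an event of probability $1-o(1)$ changes $\Pr[X_3\ge1]$ by only $o(1)$.
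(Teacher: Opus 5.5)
Your proposal is correct and matches the paper's proof. Both condition on a triangle. When the error equals $a^*$ times its circulation, which by evenness happens with probability $(p\theta_{\max})^3$, that error is strictly less likely than the same error with the triangle zeroed out, so every ML decoder fails. Both then conclude with the Poisson limit (mean $(D-1)^3/6$) for the triangle count and asymptotic connectedness.
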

\begin{proof}
    Whenever $G_n$ contains a triangle, select one deterministically and orient its unit circulation $\mathbf f$. Choose $a\ne0$ with $\theta(a)=\theta_{\max}$. With probability $(p\theta_{\max})^3$, the error equals $a\mathbf f$ on the triangle, because $\theta$ is even. Subtracting this circulation leaves the syndrome and all other coordinates unchanged, while replacing three entries of probability $p\theta_{\max}$ by zero entries of probability $1-p>p\theta_{\max}$. Thus the resulting error is strictly more likely, and every ML decoder fails on the original error. Therefore
    \begin{equation}
     P_{\mathrm W}^{\mathrm{ML}}(G_n,p,\theta)
     \ge(p\theta_{\max})^3
           \mathbf1_{\{G_n\text{ contains a triangle}\}}.
    \end{equation}
    The triangle count converges to a Poisson random variable with mean $(D-1)^3/6$~\cite{mckay2004,Johnson15}. Averaging the preceding inequality and taking the lower limit proves \cref{eq:ml-unexpurgated-floor}. The probability of connectedness tends to one~\cite{Bollobas2001RandomGraphs}, so conditioning on connectedness leaves the same limit.
\end{proof} 
\noindent Excluding cycles only up to any fixed length still leaves a positive error floor, by the joint Poisson limit for the remaining fixed cycle lengths~\cite{mckay2004,Johnson15}. We therefore employ an ensemble with girth growing logarithmically.

\begin{definition}[Linial--Simkin ensemble]
\label{def:linial-simkin-ensemble}
Fix $D\ge3$, $c\in(0,1)$, and even $n$. Set
\begin{equation}\label{eq:ml-girth-schedule}
 g(n)\coloneqq \max\{3,\lfloor c\log_{D-1}n\rfloor\}.
\end{equation}
Begin with a fixed Hamilton cycle $H$ on $[n]$. When the current minimum degree is $k-1<D$, choose uniformly an unordered pair of degree-$(k-1)$ vertices at current graph distance at least $g(n)-1$, and add the edge between them. Abort if no such pair exists; otherwise continue until the graph is $D$-regular. A completed degree-increment round adds a perfect matching, so a successful run has the form
\begin{equation}\label{eq:ls-decomposition}
    G = H \cup M_1 \cup \cdots \cup M_{D-2}.
\end{equation}
Let $\mathcal L_{n,D,c}$ be the output probability distribution conditioned on successful completion. 
\end{definition}
\noindent Every successful output of the Linial--Simkin process is simple, connected, and has girth at least $g(n)$. The completion probability tends to one by~\cite[Theorem~1.1 and Proposition~2.1]{LS}, while the uniform choice of admissible pairs is part of the definition.

We now turn to the ML decoding properties of the Linial--Simkin graphs. The key point is that the syndrome determines the error only modulo the
cycle space:
\begin{equation}
       B^T\mathbf e=B^T\mathbf e'
   \qquad\Longleftrightarrow\qquad
   \mathbf e-\mathbf e'\in\ker B^T .
\end{equation}
Thus every nonzero circulation $\mathbf f\in\ker B^T$ gives a possible competitor $\mathbf E-\mathbf f$ to the true error $\mathbf E$. ML decoding succeeds precisely when the true error is the most likely
element of its syndrome coset. Consequently, the basic obstruction to ML decoding is a circulation $\mathbf f\ne0$ for which $\mathbf E-\mathbf f$ is at least as likely as $\mathbf E$. For a fixed circulation $\mathbf f$, this pairwise confusion probability
is controlled by the Bhattacharyya overlap between the probability distribution over noise vectors and its coordinatewise translate. Indeed, writing
\begin{equation}
    L(\mathbf e)\coloneqq \prod_{j=1}^m \nu_p(e_j),
\end{equation}
the elementary inequality
\begin{equation}
   \mathbf 1_{\{L(\mathbf e-\mathbf f)\ge L(\mathbf e)\}}L(\mathbf e)
   \le \sqrt{L(\mathbf e)L(\mathbf e-\mathbf f)}
\end{equation}
gives
\begin{equation}
 \Pr\{L(\mathbf E-\mathbf f)\ge L(\mathbf E)\}
 \le
 \sum_{\mathbf e\in\mathbb F_q^E}
   \sqrt{L(\mathbf e)L(\mathbf e-\mathbf f)}
 =
 \prod_{j:f_j\ne0}
   \sum_{z\in\mathbb F_q}
      \sqrt{\nu_p(z)\nu_p(z-f_j)} .
\end{equation}
Thus the worst single-coordinate overlap controls the pairwise ML error probability for every competing circulation, with a cost exponential in the support size of that circulation. This is the same channel parameter that appears in standard pairwise ML bounds for group codes on symmetric channels~\cite[\S 2.5]{Como08}.

The competing objects here are cycles and, more generally, circulations. A $D$-regular graph has approximately $(D-1)^\ell$ possible non-backtracking closed walks of length $\ell$, while each nonzero
coordinate of a competing circulation contributes a factor at most $Z_\theta(p)$ to the pairwise overlap, where
\begin{equation}\label{eq:ml-general-Z}
    Z_\theta(p)\coloneqq \max_{a\ne0}\sum_{z\in\mathbb F_q} \sqrt{\nu_p(z)\nu_p(z-a)},
\end{equation}
or equivalently 
\begin{equation}\label{eq:ml-general-Z-expanded}
    Z_\theta(p)=\max_{a\ne0}
    \left\{2\sqrt{p(1-p)\theta(a)}
    +p\sum_{z\notin\{0,a\}}\sqrt{\theta(z)\theta(z-a)}\right\}, 
\end{equation}
which is non-decreasing for $0\le p\le1/2$. Hence, the natural branching condition is
\begin{equation}
   (D-1)Z_\theta(p).
\end{equation}
When this quantity is below one, the total contribution of long cycle-like competitors is summable, and reliability can hold. When it is above one, the graph contains exponentially many sufficiently independent cycle competitors, and their aggregate likelihood overwhelms the true error, giving a strong converse. This motivates the following definition.

\begin{equation}\label{eq:ml-general-cycle-value}
    \delta_\mathrm{cyc}(D,\theta) \coloneqq \inf\{p\in[0,1/2]:(D-1)Z_\theta(p)\ge1\},
\end{equation}
with value $1/2$ if this set is empty.

We now prove the condition under which the whole-word decoding error rate goes to unity or vanishes. Define the \emph{reliability threshold}
\begin{equation}\label{eq:ml-reliability-threshold}
    \delta_\mathrm{rel}(D,\theta) \coloneqq \sup\left\{p\in[0,1/2]: \lim_{n\rightarrow \infty} \mathbb E_{G\sim\mathcal L_{n,D,c}}
    P_{\mathrm W}^{\mathrm{ML}}(G,p,\theta) = 0\right\}.
\end{equation}
The limit is taken with $n$ even and other parameters fixed. For uniform $\theta$, write $\delta_\mathrm{rel}(D,q)$.
\begin{theorem}[ML strong converse]
\label{thm:ml-general-converse}
Fix $D\ge3$, a finite field $\mathbb F_q$, $c\in(0,1)$, and an even replacement distribution $\theta$. For the Linial--Simkin ensemble and independent errors with distribution \cref{eq:ml-general-noise},
\begin{equation}\label{eq:ml-general-strong}
    (D-1)Z_\theta(p)>1 \quad\implies\quad \mathbb E_G P_{\mathrm S}^{\mathrm{ML}}(G,p,\theta)\rightarrow0, \qquad 0<p\le1/2.
\end{equation}
In particular, $\delta_\mathrm{rel}(D,\theta)\le\delta_\mathrm{cyc}(D,\theta)$. For uniform replacement, the same strong converse holds throughout $\delta_\mathrm{cyc}(D,q)<p\le 1-q^{-1}$.
\end{theorem}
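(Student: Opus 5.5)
We will show that when $(D-1)Z_\theta(p)>1$, the true error $\mathbf E$ is, with high probability, not the unique most likely element of its syndrome coset. The source of competitors is the many disjoint short cycles near a typical vertex.

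\textbf{Step 1: local cycle competitors.} Fix $a\ne0$ attaining the maximum in \cref{eq:ml-general-Z}, and let $\mathbf f_C$ be the unit circulation on an oriented cycle $C$. The competitor $\mathbf E-a\mathbf f_C$ has the same syndrome. Its likelihood ratio against $\mathbf E$ is
\[
\Lambda_C\coloneqq\prod_{j\in C}\frac{\nu_p(E_j-af_j)}{\nu_p(E_j)} .
\]
ML fails whenever $\Lambda_C>1$ for some $C$. It therefore suffices to show that $\max_C\Lambda_C>1$ with probability $1-o(1)$. We use a simpler sufficient event: the coset mass of the competitors exceeds the mass of $\mathbf E$. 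Summing over $a$ and over $C$, a large value of $\sum_C\Lambda_C$ suggests, but does not by itself give, some $\Lambda_C>1$. So we work directly with $\max_C\log\Lambda_C$.

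The variables $\log\Lambda_C$ are sums of i.i.d.\ per-edge log-ratios $X_j=\log\bigl(\nu_p(E_j-a)/\nu_p(E_j)\bigr)$. By Cramér's theorem, tilting at exponent $1/2$ gives
\[
\Pr\Bigl[\sum_{j\le\ell}X_j\ge0\Bigr]=\bigl(Z_\theta(p)\bigr)^{\ell(1+o(1))}.
\]
Here we use that the evenness of $\theta$ makes $s=1/2$ the minimizer of $\log\mathbb E e^{sX}$. This follows because the map $z\mapsto a-z$ exchanges the law of $X$ with that of $-X$ under the appropriate reweighting, so the moment generating function is symmetric about $1/2$.

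\textbf{Step 2: many nearly independent cycles.} In a Linial--Simkin graph, the girth is at least $g(n)=\lfloor c\log_{D-1}n\rfloor$. The graph should nevertheless contain $(D-1)^{\ell(1-o(1))}$ cycles of length $\ell$ for suitable $\ell=\Theta(\log n)$, say $\ell=C'\log_{D-1}n$ with $C'>c$. For the lower bound I would use the decomposition \cref{eq:ls-decomposition} and the almost-randomness of the greedy matchings, relying on the analysis in~\cite{LS}: locally the graph is tree-like up to radius about $g/2$, and beyond that the later matchings close cycles roughly like a random regular graph. I expect this to be the main obstacle. It requires a second-moment estimate for the number of length-$\ell$ cycles $C$ with $\Lambda_C>1$. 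Let $N$ be this count. Then
\[
\mathbb E N\approx\bigl((D-1)Z_\theta(p)\bigr)^{\ell}\to\infty .
\]
We must also show $\mathbb E N^2=(1+o(1))(\mathbb E N)^2$. Pairs of cycles sharing $k$ edges contribute terms controlled by path-counting in the graph. For the ensemble this needs cycle-pair statistics of the kind available in~\cite{LS} or~\cite{mckay2004}. If the second moment proves too hard, I would fall back on a tree-based argument. Take $n^{\Omega(1)}$ vertex-disjoint neighbourhoods, each of which contains an edge that closes a cycle through a large near-tree. The branching process of paths with a positive partial log-likelihood survives exactly when $(D-1)Z_\theta(p)>1$, by the Biggins-type large-deviation criterion for branching random walks. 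The independence across disjoint neighbourhoods then makes the success probability $\to0$.

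\textbf{Step 3: consequences.} Paley--Zygmund gives $\Pr[N\ge1]\to1$. Hence $\mathbb E_G P_{\mathrm S}^{\mathrm{ML}}\to0$, which yields $\delta_\mathrm{rel}\le\delta_\mathrm{cyc}$ by monotonicity of $Z_\theta$ on $[0,1/2]$. For uniform replacement with $1/2<p\le1-q^{-1}$, I would repeat the same argument with $Z(p)=\max_{a\ne0}\sum_z\sqrt{\nu_p(z)\nu_p(z-a)}$ evaluated directly. This quantity stays $>1/(D-1)$ on the whole range, since it increases to $1$ at $p=1-1/q$. The only change is that the set defining $\delta_\mathrm{cyc}$ is extended beyond $1/2$.
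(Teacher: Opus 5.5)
\textbf{Verdict: there is a genuine gap, in Step 2.} Your architecture is the paper's: favorable cycles of length $\Theta(\log n)$, a per-cycle probability $Z^{\ell(1+o(1))}$ from tilting at $1/2$, and first and second moments. Your symmetry $M(s)=M(1-s)$ under $z\mapsto a-z$ is exactly the involution the paper uses. But the step you flag as ``the main obstacle'' is essentially the whole proof, and the references you cite do not supply it. \cite{LS} gives saturation plus density and path-boundedness estimates. From these one only gets \emph{upper} bounds of the form $\Pr(J\subseteq G)\le(1+o(1))n^{-w}$. \cite{mckay2004} concerns uniform random regular graphs, not this ensemble. Your first moment $\mathbb E N\to\infty$ needs a matching \emph{lower} bound for specific cycles longer than the girth. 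That is not automatic in the greedy process: a closing edge can be chosen only if its endpoints are at distance at least $g-1$ at that moment, and your proposal gives no control over this.

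\textbf{How the paper closes it.} It restricts to candidate cycles built from short paths (length at most $s_0=O(\log\log n)$) in the $(D-1)$-regular graph $A$ present before the last round. These paths are pairwise at distance at least $g$ and are joined cyclically by a matching $F$ of last-round edges. This guarantees that $A+F$ has girth at least $g$, and the count is exactly $(1+o(1))(D-1)^L/(2L)$. The paper then proves a planting lemma: $\Pr(F\subseteq M_T)=(1+o(1))n^{-|F|}$ during the first $T$ steps of the last round, while $n^{(1+c)/2+o(1)}\gg n^c$ vertices remain free. The proof compares likelihoods with a planted process. It then shows that unforced edges make a forced edge unavailable with probability only $n^{c-1+o(1)}$.

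\textbf{Further missing ingredients in the second moment.} You also need the joint bound $\Pr\{L_{C_1}\ge1,\,L_{C_2}\ge1\}\le Z^{2(L-j)}$ for cycles sharing $j$ edges. This follows from Cauchy--Schwarz on the shared edges together with $\mathbb E e^{S_b}\le1$. Up to polylogarithmic factors, the overlap ratio is then $((D-1)Z^2)^{-j}/n\le(D-1)^L/n$. This forces $L=\kappa\log_{D-1}n$ with $c<\kappa<1$, so your choice $C'>c$ also needs $C'<1$.

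\textbf{The fallback does not avoid any of this.} In a graph of girth $g$, every ball of radius below about $g/2$ is a tree. So each cycle must close through edges whose presence is exactly what the planting lemma controls. The greedy process also gives no independence between disjoint regions.

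\textbf{A smaller point: ties at the endpoints.} Your failure criterion uses strict $\Lambda_C>1$. The statement includes uniform-noise endpoints: $p=1/2$ with $q=2$, and $p=1-q^{-1}$ for uniform replacement. In both cases $(D-1)Z>1$, yet $\Lambda_C\equiv1$, so no cycle is strictly favorable and your argument gives nothing there. The paper handles ties in three steps. It extracts $K_n\to\infty$ vertex-disjoint cycles with $\Lambda_C\ge1$, via an independent set of size at least $Y^2/(Y+2J)$. With $k$ such disjoint cycles, the true error's posterior within its syndrome coset is at most $2^{-k}$. It then takes $n\to\infty$ followed by $k\to\infty$.
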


\begin{theorem}[Small-alphabet ML reliability]
\label{thm:ml-small-alphabet}
Fix $c\in(0,1)$ and an even replacement distribution $\theta$ on $\mathbb F_q^\times$. For the Linial--Simkin ensemble, if $\gamma=q/D^{2/3}\le1/32$, then
\begin{equation}\label{eq:ml-small-bracket}
    (1-16\gamma)\delta_\mathrm{cyc}(D,\theta) \le \delta_\mathrm{rel}(D,\theta) \le \delta_\mathrm{cyc}(D,\theta).
\end{equation}
Throughout the lower interval, a minimum-Hamming-weight syndrome decoder recovers the errors with probability tending to one, as does the optimal ML decoder. As $D\to\infty$, uniformly over even $\theta$ and finite fields with $q=o(D^{2/3})$,
\begin{equation}\label{eq:ml-small-asymptotic}
 \frac{\delta_\mathrm{rel}(D,\theta)}
      {\delta_\mathrm{cyc}(D,\theta)}\rightarrow1,
 \qquad
 \delta_\mathrm{rel}(D,\theta)
 =\frac{1+o(1)}{4\theta_{\max}D^2}.
\end{equation}
\end{theorem}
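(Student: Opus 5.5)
The upper bound $\delta_\mathrm{rel}\le\delta_\mathrm{cyc}$ is exactly \cref{thm:ml-general-converse}, so only the lower bound and the asymptotics need work. Fix $p<(1-16\gamma)\delta_\mathrm{cyc}(D,\theta)$. We will show that a minimum-Hamming-weight syndrome decoder succeeds with probability tending to one, averaged over $G\sim\mathcal L_{n,D,c}$. Since $p<1/2$, this decoder is an ML decoder, so the claim for ML follows at once.

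Failure of minimum-weight decoding means there is a nonzero circulation $\mathbf f$ with $|\mathbf E-\mathbf f|\le|\mathbf E|$. We first reduce to \emph{minimal} competitors. Any circulation decomposes into circulations on simple cycles. The minimal bad circulations can then be taken with connected support in which every vertex has degree at least two. We split these supports by size, using the threshold $g(n)$ from \cref{def:linial-simkin-ensemble}.

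\emph{Short supports} (size at most $\Theta(\log n)$, but at least $g(n)$). Linial--Simkin graphs are locally tree-like with girth at least $g(n)$. The number of such supports through a given vertex with cyclomatic number $k$ and $\ell$ edges is about $(D-1)^{\ell}\ell^{O(k)}n^{-(k-1)c'}$ for some $c'>0$. For a fixed support $S$, we bound the number of admissible value patterns together with the pairwise confusion. A circulation on a support of cyclomatic number $k$ has at most $q^{k}$ value patterns, up to scale. The Bhattacharyya bound then gives confusion probability at most $Z_\theta(p)^{|S|}$.

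For $k=1$ this union bound gives $\sum_{\ell\ge g(n)} n\,(D-1)^\ell Z_\theta(p)^\ell\,(q-1)$. This sum tends to zero whenever $(D-1)Z_\theta(p)<1-\eta$ and $c\log_{D-1}n\cdot\log(1/(1-\eta))$ beats $\log n$. Since $c<1$, this last condition fails when $\eta$ is small. So the union bound over all $\ell$ is too crude, and the plan must avoid it in the way described next.

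\emph{Long supports.} Here we will not union-bound over circulations. Instead we use a local percolation argument. Call an edge bad if $E_e\ne0$. A minimum-weight competitor must have at least half of its support bad. More precisely, the number of coordinates where $\mathbf f$ matches $\mathbf E$ must be at least the number of zero coordinates of $\mathbf E$ in the support. Along any path the bad edges have density about $p$, and the branching is $D-1$. A first-moment count of paths of length $\ell$ in which at least half the edges are bad gives roughly $(D-1)^\ell\binom{\ell}{\ell/2}p^{\ell/2}\approx\bigl(2(D-1)\sqrt p\bigr)^\ell$. This decays exactly near $p\approx1/(4D^2)$. One then checks that $Z_\theta(p)\approx2\sqrt{p\,\theta_{\max}}$ for small $p$; this is where $(D-1)Z_\theta<1$ matches the Bhattacharyya threshold. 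The refined count must also track the value constraints, since a pattern counts only when the values of $\mathbf E$ on the matched edges equal those of $\mathbf f$. This matching is where the factor $\theta$ enters, and it recovers $Z_\theta$ rather than the crude $2\sqrt p$.

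The correction term $p\sum_{z\notin\{0,a\}}\sqrt{\theta(z)\theta(z-a)}\le p$ and the $q^{k}$ multiplicity for high-cyclomatic supports both cost factors like $q\cdot p\cdot D$ and $q\,p^{1/2}D^{1/2}$ relative to the main term. At $p\sim D^{-2}$ these are $O(q/D)$ and $O(q/D^{1/2})$. Handling branchings (supports with $k\ge2$) costs about $D\cdot q\cdot\sqrt{p}^{\,\text{extra}}$. This is where I expect $q/D^{2/3}$ to arise, from optimizing a truncation of the cyclomatic number against the $n^{-(k-1)c'}$ gain and the local $q^{k}$ loss. Shrinking $p$ by the factor $(1-16\gamma)$ gives $(D-1)Z_\theta(p)\le1-\Omega(\gamma)$, with enough slack to absorb these terms.

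The main obstacle is the interplay at length $\Theta(\log n)$, where the naive union bound fails. I would resolve it by working with the conditional structure of bad edges, that is, percolation clusters of nonzero errors. In the subcritical regime $(D-1)p$ is tiny, so these clusters are small trees of size $O(\log n)$ with high probability. We can then union-bound only over cycles whose bad edges are spread out, using the girth to decouple the clusters. Finally, the asymptotic \cref{eq:ml-small-asymptotic} follows by solving $(D-1)Z_\theta(p)=1$ with the expansion $Z_\theta(p)=2\sqrt{p\theta_{\max}}(1+O(\sqrt p))+O(p)$. This gives $\delta_\mathrm{cyc}=(1+o(1))/(4\theta_{\max}D^2)$, and combining it with the bracket as $\gamma\to0$ completes the argument.
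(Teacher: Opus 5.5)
Your upper bound (via \cref{thm:ml-general-converse}) and the final asymptotic step are fine. The lower bound, however, has a genuine gap, and the obstacle you identify is not the real one.

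The failure of the union bound at length $\Theta(\log n)$ is an artifact of counting cycles in the worst case, with a factor $n$ for the base vertex. On the Linial--Simkin ensemble you should also take the first moment over the graph. \Cref{lem:subgraph} gives $\Pr(J\subseteq G_n,\mathcal A_n)\le e^{\xi_n w}/(n-2w)^w$. This cancels the $n^{w-t}$ vertex embeddings of a connected support with $w$ edges and excess $t$, leaving roughly $n^{-t}$. Cycles ($t=0$) of length $\ell$ therefore number $O((D-1)^\ell)$ in expectation, and their total contribution is $O(q\,R^{g})\to0$ as soon as $R<1$. \Cref{lem:enumerator,prop:criterion} extend this to every support with $w\le a_*n$ and arbitrary excess, by summing $\sum_t L_w^{2t}/(2t)!\le e^{L_w}$.

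Two smaller corrections. For nonuniform $\theta$ the minimum-weight decoder is not an ML decoder; ML recovery follows instead because ML maximizes the success probability. Also, the pairwise bound for the event $\wt{\mathbf E-\mathbf f}\le\wt{\mathbf E}$ is $B_{\theta_{\max}}(p)^w$, obtained from a Hamming tilt, with $B_{\theta_{\max}}(p)\le Z_\theta(p)+p$. The bound $Z_\theta(p)^w$ controls likelihood comparisons, not Hamming comparisons.

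The genuinely hard regime is supports of linear size with large excess. There, the $q^{t+1}$ flows per support and the number of high-excess subgraphs overwhelm $B^w$, and your plan has no working mechanism. A path or percolation count of the kind you describe certifies only $2(D-1)\sqrt p<1$, i.e.\ $p\approx1/(4D^2)$. That misses the factor $1/\theta_{\max}$ (equal to $q-1$ for uniform replacement) that the theorem asserts; it is the same $q$-independent criterion behind \cref{thm:lp-reliability}. Your claim that tracking values ``recovers $Z_\theta$'' has no supporting argument. A path certificate discards the circulation's values, and on a high-excess support those values are not locally determined.

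Your remark that clusters of nonzero errors are small is the right instinct, but it must be turned into a deterministic budget on every competitor (\cref{lem:budget}):
\begin{itemize}
\item Any Hamming-favorable $\mathbf f$ has $\wt{\mathbf f}\le2\wt{\mathbf E}$.
\item Its excess satisfies $t\le\mathcal D(\mathbf E)=2\wt{\mathbf E}-\wt{B^T\mathbf E}$. This holds because every vertex of degree at least three in $\supp\mathbf f$ has degree at least two in $\supp\mathbf E$ or in $\supp(\mathbf E-\mathbf f)$.
\item $\mathbb E\,\mathcal D(\mathbf E)\le u^2n$ with $u=Dp$, and bounded differences concentrates it.
\end{itemize}
Truncating the excess sum at $2u^2n$ for $a_*n\le w\le2un$ costs $b_0K^{1/3}$ per edge, with $K=2qu^2\le2\gamma^3$. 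This is where $q/D^{2/3}$ enters, not a trade-off between cyclomatic number and an $n^{-(k-1)c'}$ gain.
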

Both proofs are in \cref{app:ml-small-alphabet}. The recovery proof also shows that, with probability tending to one, every word differing from the error in at most one coordinate is uniquely recovered by the minimum-weight decoder; see \cref{lem:robust-minweight-recovery}.

For uniform replacement, \cref{eq:ml-general-Z,eq:ml-general-cycle-value} reduce to
\begin{align}
    Z_q(p) &= 2\sqrt{\frac{p(1-p)}{q-1}}+\frac{q-2}{q-1}p,
    \label{eq:ml-Z}\\
    \delta_\mathrm{cyc}(D,q) &= \frac{q-1}{(D-1)(\sqrt{D+q-2}+\sqrt{D-2})^2}. \label{eq:ml-cycle-value}
\end{align}
Since $Z_q(1/2)>1/2\ge1/(D-1)$, this crossing lies below $1/2$; the full-range strong converse therefore justifies the notation $\delta_\mathrm{rel}(D,q)$ used above. Thus \cref{thm:ml-small-alphabet} gives $\delta_\mathrm{rel}(D,q)\sim(q-1)/(4D^2)$ when $q=o(D^{2/3})$. For $q=2$, the reliability endpoint is exact at every $D\ge3$:
\begin{equation}\label{eq:ml-binary-exact}
 \delta_\mathrm{rel}(D,2)=\delta_\mathrm{cyc}(D,2)
 =\frac12\left(1-\sqrt{1-\frac1{(D-1)^2}}\right).
\end{equation}

\subsection{Decoding with Linear Programming}
\label{sec:lp-decoder}

Although minimum-weight decoding is NP-hard for $q>2$, a polynomial-time decoder can still recover from a positive fraction of randomly located symbol errors. We establish this for the standard nonbinary local-polytope LP decoder~\cite{FSBG09} on the Linial--Simkin ensemble. The recovery bound uses only the graph's degree and girth but is independent of the alphabet size.

\subsubsection{The decoder}
Let $H=B^T$, write $h_{ve}=H_{ve}$, and let $\partial v$ be the edges incident to $v$. Given a syndrome $\mathbf s\in\operatorname{im}H$, define the local constraint set
\begin{equation}
 \mathcal C_v(s_v)\coloneqq\left\{(b_e)_{e\in\partial v}\in\mathbb F_q^{\partial v}:
       \sum_{e\in\partial v}h_{ve}b_e=s_v\right\}.
\end{equation}
Instead of choosing a single error value on each edge $e$, the LP assigns a probability $x_{e,a}$ to each possible value $a\in\mathbb F_q$. These real-valued variables are called \emph{symbol marginals}; write $\mathbf x$ for the vector of all such variables. At each vertex $v$, the LP also chooses a probability distribution $\lambda_{v,\mathbf b}$ over the assignments in $\mathcal C_v(s_v)$. The two local distributions involving an edge must give the same marginal probabilities for its value. The constraints are
\begin{equation}\label{eq:lp-local-polytope}
\begin{aligned}
 &x_{e,a}\ge0,\qquad \sum_{a\in\mathbb F_q}x_{e,a}=1,\\
 &\lambda_{v,\mathbf b}\ge0,\qquad
       \sum_{\mathbf b\in\mathcal C_v(s_v)}\lambda_{v,\mathbf b}=1,\\
 &x_{e,a}=\sum_{\substack{\mathbf b\in\mathcal C_v(s_v)\\b_e=a}}
                 \lambda_{v,\mathbf b}
       \qquad(v\in V,\ e\in\partial v,\ a\in\mathbb F_q).
\end{aligned}
\end{equation}
The local distributions need not come from a joint distribution on complete words satisfying all checks. The LP minimizes the relaxed Hamming weight
\begin{equation}\label{eq:lp-objective}
 \mathcal L(\mathbf x)=\sum_{e\in E}(1-x_{e,0}).
\end{equation}
We call the marginal vector $\mathbf x$ \emph{integral} when every $x_{e,a}$ is either zero or one, and \emph{fractional} otherwise. Every syndrome-consistent word $\mathbf y$ gives an integral feasible point $x_{e,a}=\mathbf1\{a=y_e\}$, with objective value $|\mathbf y|$. Conversely, normalization and local consistency ensure that every integral feasible marginal vector represents a syndrome-consistent word. An integral optimum therefore gives a minimum-weight word with the specified syndrome, and hence an ML correction on the $q$SC~\cite{FSBG09}.

If the LP has several optimal solutions, select one using a fixed deterministic tie-breaking rule. If the selected symbol marginals are integral, output the corresponding word; otherwise declare failure. Let $P_{\mathrm W}^{\mathrm{LP}}(G,p)$ be the probability of either a declared failure or an incorrect output when each position is corrupted independently with probability $p$; the nonzero values may have any distribution. The recovery condition below guarantees that the true error has the unique optimal marginal vector, so recovery under this condition is independent of the tie-breaking rule.

An equivalent formulation uses a directed trellis whose state records the running sum in each check. Unit flows through this trellis represent probability distributions on satisfying local assignments. Equating their symbol marginals gives an LP with $O(mq^2)$ variables and constraints, solvable in polynomial time; see \cref{app:lp-reliability} for more details. The LP decoder has a failure probability which we characterize in the following theorem:   

\subsubsection{A positive error-rate guarantee}
\begin{theorem}[Finite-graph LP reliability]\label{thm:lp-reliability}
Let $G$ be a simple graph with $m$ edges, maximum degree at most $D\ge3$, and girth $g_G$. Consider its homological cycle code over $\mathbb F_q$ with each position corrupted independently with probability $0<p<1/2$ and arbitrary nonzero error values. For every integer $1\le h<g_G$, the probability that the LP decoder declares failure or returns an incorrect word satisfies
\begin{equation}\label{eq:lp-finite-bound}
 P_{\mathrm W}^{\mathrm{LP}}(G,p)
 \le 2m(D-1)^{h-1}\bigl[2\sqrt{p(1-p)}\bigr]^h.
\end{equation}
The same bound holds when exactly $t\le p m$ positions are chosen uniformly without replacement and assigned arbitrary nonzero error values.
\end{theorem}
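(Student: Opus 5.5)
The plan is a union bound over short non-backtracking paths, combined with a local-optimality certificate for the LP. The target is a purely combinatorial sufficient condition. Call a non-backtracking walk $P=(e_1,\ldots,e_h)$ of length $h$ in $G$ \emph{bad} if at least $h/2$ of its edges lie in $\supp\mathbf E$. We show (i) if no bad walk exists, the true error is the unique optimal marginal vector of the LP, and (ii) the probability that some bad walk exists is at most the right-hand side of \cref{eq:lp-finite-bound}.

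Step (ii) is routine. A walk is determined by its first directed edge ($2m$ choices) and then at most $D-1$ continuations at each of the $h-1$ later steps, giving at most $2m(D-1)^{h-1}$ walks. Since $h<g_G$, each walk visits $h$ distinct edges. So under independent corruption its error count is $\operatorname{Bin}(h,p)$. The Bhattacharyya/Chernoff estimate gives
\[
\sum_{k\ge h/2}\binom hk p^k(1-p)^{h-k}\le\sum_k\binom hk p^k(1-p)^{h-k}\Bigl(\tfrac{1-p}{p}\Bigr)^{k-h/2}\cdot\Bigl(\tfrac{p}{1-p}\Bigr)^{k-h/2}\le\bigl[2\sqrt{p(1-p)}\bigr]^h,
\]
using $(p/(1-p))^{k-h/2}\le(p/(1-p))^{0}$ termwise only for $k\ge h/2$, i.e. the standard bound $\Pr[K\ge h/2]\le\E[(\tfrac{1-p}{p})^{K/2-h/4}\cdots]$. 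For the fixed-$t$ model, the error count on $h$ distinct edges is hypergeometric with mean at most $ph$. Hoeffding's comparison of sampling without replacement to sampling with replacement (convex-order domination) transfers the same exponential-moment bound. In both models the nonzero values play no role, since (i) depends only on $\supp\mathbf E$.

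For step (i) I would first translate by the true error. Substituting $b_e\mapsto b_e+E_e$ maps the local polytope for syndrome $B^T\mathbf E$ to the one for syndrome $\mathbf 0$, with $\mathbf E$ sent to the all-zero word. The objective becomes a linear cost with $c_e(a)=\mathbf 1\{a+E_e\ne0\}-\mathbf 1\{E_e\ne0\}$. Thus a clean edge pays $1$ for any nonzero value, and an error edge pays at least $-1$ (exactly $-1$ only at $a=-E_e$). It suffices to show that every feasible point other than the zero word has strictly positive cost. By the Koetter--Vontobel/Feldman graph-cover characterization of the local polytope (as used in~\cite{FSBG09}), every rational pseudocodeword is the normalized image of a codeword $\tilde{\mathbf d}\ne0$ of the homological cycle code on a finite cover $\tilde G$. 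So it suffices to show that the lifted cost of every nonzero circulation on every cover is positive. The support $S$ of such a circulation has minimum degree at least two. Its cost is at least $|S\cap\text{clean}|-|S\cap\text{error}|$, where edges of the cover inherit their clean/error label from $G$.

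The main obstacle is exhibiting this positivity on $S$. I would use the non-backtracking random walk on the subgraph $S$ (each step chooses uniformly among the $\ge1$ other edges of $S$ at the current vertex). Take a stationary measure $\pi$ on directed edges of $S$, which has full support on some closed recurrent class. Every length-$h$ window of this walk projects to a non-backtracking walk in $G$. Since no bad walk exists, each window has fewer than $h/2$ error edges, so its cost is at least $1$. Averaging over all windows under stationarity expresses $h\sum_{\vec e}\pi(\vec e)\,c(\vec e)$ as a positive quantity. Passing from the $\pi$-weighted cost to the actual (unweighted) cost of $S$ is the delicate point. If $S$ is not Eulerian-uniform, $\pi$ is nonuniform. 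I would handle this as in the Arora--Daskalakis--Steurer local-optimality argument: decompose $\tilde{\mathbf d}$ (or a suitable multiple of the indicator of its support, using $\mathbb F_q$-flow conservation only through the degree-$\ge2$ property) into a nonnegative combination of closed non-backtracking walks in the cover. Then sum the window costs cyclically along each closed walk, so that each edge is counted exactly $h$ times.

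Finally, strict positivity for every nonzero pseudocodeword gives uniqueness of the optimum. So the LP returns the integral point $\mathbf E$ independently of the tie-breaking rule, completing the proof.
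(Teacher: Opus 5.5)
Your argument is correct. Its probabilistic half matches the paper: at most $2m(D-1)^{h-1}$ nonbacktracking walks, each a simple path because $h<g_G$, each with a Bhattacharyya tail. The LP certificate is where you take a different route.

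The paper stays on $G$ and treats every feasible point, rational or not, directly. It takes the disagreement vector $z_f=1-x_{f,e_f}$ and derives from local consistency that $z_f\le\sum_{f'\in\partial v\setminus\{f\}}z_{f'}$. It then uses a half-turn pairing on a circle to build a stationary nonbacktracking chain whose edge marginals are exactly $z_f/Z$ (\cref{lem:lp-path-representation}).

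You instead translate to syndrome zero and use the nonbinary LP/graph-cover equivalence of \cite{FSBG09}. This turns a rational pseudocodeword into a nonzero cover codeword whose support $S$ has minimum degree at least two. That property is the integral counterpart of the paper's triangle inequality, with cover multiplicities playing the role of the weights $z_f$. Your route makes the walk construction almost trivial, but it imports an external theorem. It also needs a short extra remark that positivity at rational points suffices, because the optimal face is the convex hull of rational vertices. The paper's route is self-contained and isolates the single algebraic fact it uses.

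Two corrections.

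First, the step you call delicate is not delicate. Consider the walk that leaves $v$ uniformly along one of the other $d_S(v)-1$ edges of $S$. Each directed edge $(v\to w)$ has exactly $d_S(v)-1$ predecessors, and each sends it probability $1/(d_S(v)-1)$. So the transition matrix on directed edges is doubly stochastic, and the uniform measure is stationary however irregular $S$ is. Every step's edge is then uniform on $S$. Averaging the window weights (with $\omega$ inherited from $G$) gives $\sum_{e\in S}\omega_e>0$ at once. Your closed-walk decomposition exists for the same reason, but it is unnecessary.

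Second, your Chernoff display is wrong as written. The two inserted factors multiply to one, so the middle sum equals $1$. Insert only $((1-p)/p)^{k-h/2}\ge1$ for $k\ge h/2$; this yields $\sum_k\binom hk p^{h/2}(1-p)^{h/2}=[2\sqrt{p(1-p)}]^h$. For the fixed-weight model, Hoeffding's comparison gives $\mathbb E s^X\le(1-t/m+st/m)^h\le(1-p+ps)^h$ for $s\ge1$. The paper obtains the same bound by direct expansion in \cref{eq:lp-hypergeom-mgf}.
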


The proof in \cref{app:lp-reliability} shows that the true error is the unique LP optimum whenever every simple path of $h$ edges contains fewer than $h/2$ corrupted positions. Counting paths and applying a binomial or hypergeometric tail bound gives \cref{eq:lp-finite-bound}. Related path-based conditions for binary LP decoding appear in~\cite{EH12}. Requiring fewer than $h/2-1$ corrupted positions on each path also guarantees recovery after every one-coordinate change. The failure probability for this stronger condition is at most
\begin{equation}\label{eq:lp-robust-finite-bound}
 2m(D-1)^{h-1}\frac{1-p}{p}
       \bigl[2\sqrt{p(1-p)}\bigr]^h.
\end{equation}
The extra factor is independent of block length.

On the Linial--Simkin ensemble, the girth grows logarithmically with $n$. Choosing $h$ just below the girth in \cref{eq:lp-finite-bound} gives the following positive error-rate guarantee.

\begin{corollary}[Linear-error LP recovery on the Linial--Simkin ensemble]
\label{cor:lp-linial}
Fix $D\ge3$, a finite field $\mathbb F_q$, and $c\in(0,1)$. Define the certified rate
\begin{equation}\label{eq:lp-certified-rate}
 p_{\mathrm{LP}}^{\mathrm{cert}}(D,c)
 \coloneqq \frac12\left(1-\sqrt{1-(D-1)^{-2(1+1/c)}}\right)>0.
\end{equation}
For every fixed $0<p<p_{\mathrm{LP}}^{\mathrm{cert}}(D,c)$, there is a constant $\kappa>0$ such that
\begin{equation}\label{eq:lp-linial-rate}
 \sup_{G\in\operatorname{supp}\mathcal L_{n,D,c}}
      P_{\mathrm W}^{\mathrm{LP}}(G,p)
 =O_{D,c,p}(n^{-\kappa})\rightarrow0.
\end{equation}
The same conclusion holds for a uniformly random support of exactly $\lfloor p m\rfloor$ nonzero errors, where $m=Dn/2$. Both conclusions also hold for the probability of failure on the error or any one-coordinate modification, uniformly over the nonzero error values.
\end{corollary}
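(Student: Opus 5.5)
We apply \cref{thm:lp-reliability} with $h$ chosen just below the girth. Every $G$ in the support of $\mathcal L_{n,D,c}$ is $D$-regular with $m=Dn/2$ edges and girth $g_G\ge g(n)=\max\{3,\lfloor c\log_{D-1}n\rfloor\}$. We therefore take $h=g(n)-1$, which satisfies $1\le h<g_G$ for all large $n$, and obtain the bound uniformly over the support, since it depends only on $m$, $D$ and the girth lower bound. Write $\beta\coloneqq 2\sqrt{p(1-p)}<1$. Then
\begin{equation*}
 P_{\mathrm W}^{\mathrm{LP}}(G,p)\le 2m(D-1)^{h-1}\beta^h
 = \frac{Dn}{D-1}\bigl[(D-1)\beta\bigr]^{h}.
\end{equation*}
We then rewrite $n=(D-1)^{\log_{D-1}n}$ and use $h\ge c\log_{D-1}n-2$, so that $\log_{D-1}n\le (h+2)/c$. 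This gives $P_{\mathrm W}^{\mathrm{LP}}=O_{D,c,p}\bigl([(D-1)^{1+1/c}\beta]^{h}\bigr)$, because $(D-1)\beta<1$ always holds in the regime below.

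Next we identify the rate. The condition $(D-1)^{1+1/c}\beta<1$ reads $4p(1-p)<(D-1)^{-2(1+1/c)}$. Solving the quadratic with $p<1/2$ gives exactly $p<p_{\mathrm{LP}}^{\mathrm{cert}}(D,c)$ as in \cref{eq:lp-certified-rate}. For such $p$, set $\eta\coloneqq(D-1)^{1+1/c}\beta<1$. Then $\eta^h\le \eta^{c\log_{D-1}n-2}=O(n^{-\kappa})$ with $\kappa=c\log_{D-1}(1/\eta)>0$. The only bookkeeping issue is the floor and the $-1$ in $h$, which cost constant factors absorbed into $O_{D,c,p}$. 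This gives \cref{eq:lp-linial-rate}, and the bound is independent of the nonzero error values because \cref{thm:lp-reliability} allows arbitrary nonzero values.

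For exactly $\lfloor pm\rfloor$ errors we use the second clause of \cref{thm:lp-reliability} with $t=\lfloor pm\rfloor\le pm$. The same bound and the same computation apply verbatim.

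For the robust statement (failure on the error or on any one-coordinate modification) we instead use \cref{eq:lp-robust-finite-bound}. Its extra factor $(1-p)/p$ is independent of $n$, and the proof of \cref{thm:lp-reliability} applies the same path-counting argument with the stronger condition. The bound is therefore $\tfrac{1-p}{p}$ times the previous one and is again $O(n^{-\kappa})$. The without-replacement version of this robust bound requires the hypergeometric tail used in \cref{app:lp-reliability}, which is dominated by the binomial one, so it carries the same factor.

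The main obstacle is bookkeeping rather than ideas: we must check that the exponent conversion yields exactly the threshold $(D-1)^{-2(1+1/c)}$ and a strictly positive $\kappa$. We must also check that the robust bound is available for the fixed-weight model, which relies on the appendix proof and not just the displayed formula.
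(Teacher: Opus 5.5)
Your proposal is correct and follows the paper's proof: apply \cref{thm:lp-reliability} with $h=g(n)-1$ and convert $\frac{Dn}{D-1}[(D-1)\beta]^h$ into a power of $n$ via $h=c\log_{D-1}n+O(1)$; your $\kappa=c\log_{D-1}(1/\eta)$ equals the paper's $-1-c\log_{D-1}R$ with $R=(D-1)\beta$, and the fixed-weight and one-coordinate clauses come, as in the paper, from the hypergeometric version of the bound and from \cref{eq:lp-robust-finite-bound}. The only quibble is that the substitution $n\le(D-1)^{(h+2)/c}$ needs no hypothesis such as $(D-1)\beta<1$; only $\eta<1$ is used, in the final step $\eta^h\le\eta^{c\log_{D-1}n-2}$.
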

\begin{proof}
Put $R=2(D-1)\sqrt{p(1-p)}$. The strict inequality in the hypothesis is equivalent to $R<(D-1)^{-1/c}$. Every successful output has girth at least $g(n)$ from \cref{eq:ml-girth-schedule}. For sufficiently large $n$, set $h=g(n)-1=c\log_{D-1} n+O(1)$ in \cref{eq:lp-finite-bound}. Uniformly over these graphs,
\begin{equation}
 P_{\mathrm W}^{\mathrm{LP}}(G,p)
 \le\frac{nD}{D-1}R^h
 =O_{D,c,p}\!\left(n^{1+c\log_{D-1} R}\right).
\end{equation}
Take $\kappa=-1-c\log_{D-1} R>0$. The fixed-weight statement follows from the same finite-graph bound, and the one-coordinate statement follows from \cref{eq:lp-robust-finite-bound}.
\end{proof}

The bound holds uniformly over every graph produced by the ensemble. The certified rate is independent of $q$, because the proof uses only the error locations and discards information about the nonzero error values. It can be smaller than the actual LP threshold and does not establish the factor-$(q-1)$ improvement proved for ML decoding. The guarantee applies to random error locations; but it does not imply recovery from every error pattern of linear weight. The numerical experiments in the next subsection compare the noise rates tolerated by LP decoding with the ML decoding threshold at accessible block lengths.

\subsubsection{Numerical experiments}
\label{sec:lp-experiments}
We tested the LP decoder on the Linial--Simkin ensemble with $c=0.9$, $D\in\{3,5,7\}$, and prime fields $\mathbb F_q$ with $q\in\{2,3,5\}$. For each degree and vertex count $n$, we sampled eight independent graphs from the LS ensemble (\cref{def:linial-simkin-ensemble}). Errors were drawn from the $q$SC (\cref{def:qsym}); at each noise level we used between one and eight error vectors per graph, for a total of 60,048 error samples. We solved an equivalent split-check formulation of \cref{eq:lp-local-polytope} using the dual-simplex solver in HiGHS~1.15.1. Integral outputs were checked against the syndrome using exact field arithmetic. Fractional optima were counted as decoding failures, as in the definition of $P_{\mathrm W}^{\mathrm{LP}}$.

\Cref{fig:lp-waterfalls} shows the resulting \emph{waterfall curves}, which plot the word-error rate as a function of the input symbol-error probability. Each point averages equally over the eight graphs.\footnote{The numerical choice among degenerate optima can depend on the solver state. Filled symbols use a reused solver object with its solver state cleared between calls; open symbols use a fresh object for each call, with all original samples at that point rerun. The latter points allow up to eight hours per solve. The 38 remaining unresolved samples occur at seven high-noise quinary points and are represented by intervals, never omitted or assigned a decoding outcome.} As expected, the transition sharpens with block length, most clearly for the cubic graphs. For $D=3,q=2$, the input error probabilities at 50\% word error are $0.0991,0.0919,0.0849,0.0761$ with increasing instance size. The corresponding 10\%--90\% transition widths decrease from $0.0476$ to $0.0292$, $0.0212$, and $0.0162$. These descriptive crossing estimates use monotone interpolation of the measured curves. Both the narrowing and the movement of the transition are appreciable, so the largest measured midpoint should not be identified with any asymptotic threshold.

\begin{figure}[p]
\centering
\includegraphics[width=0.325\textwidth]{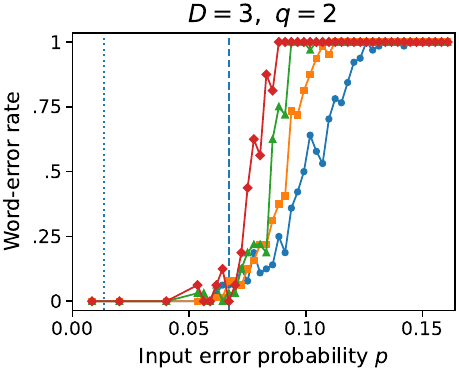}\hfill
\includegraphics[width=0.325\textwidth]{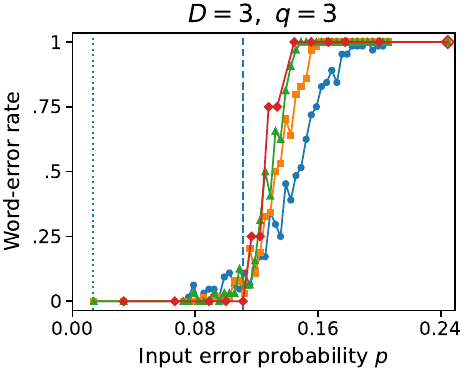}\hfill
\includegraphics[width=0.325\textwidth]{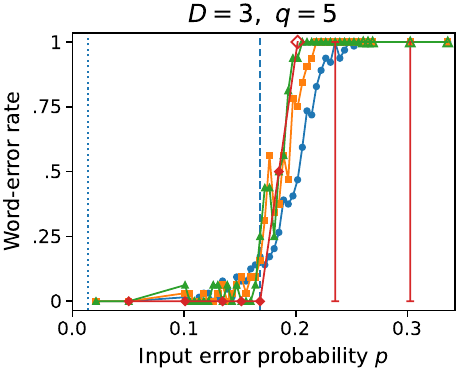}\\[3pt]
\includegraphics[width=0.325\textwidth]{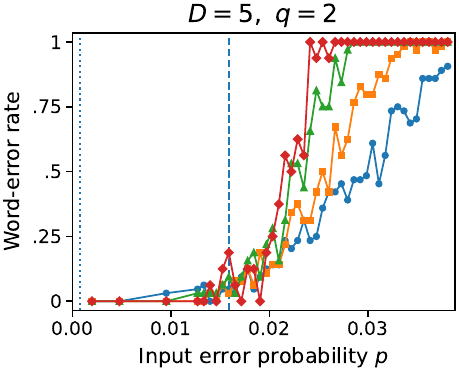}\hfill
\includegraphics[width=0.325\textwidth]{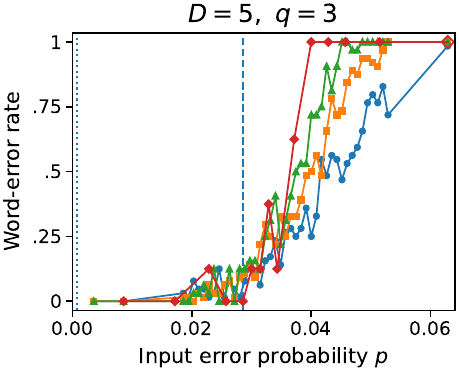}\hfill
\includegraphics[width=0.325\textwidth]{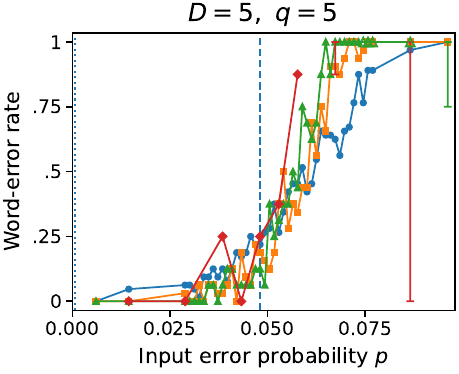}\\[3pt]
\includegraphics[width=0.325\textwidth]{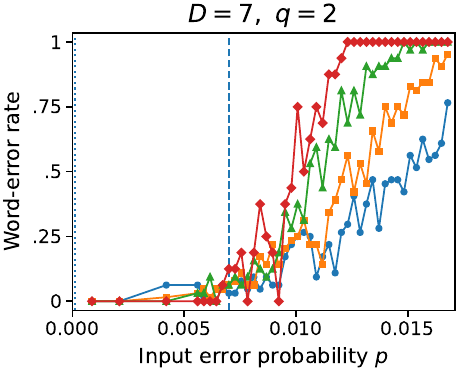}\hfill
\includegraphics[width=0.325\textwidth]{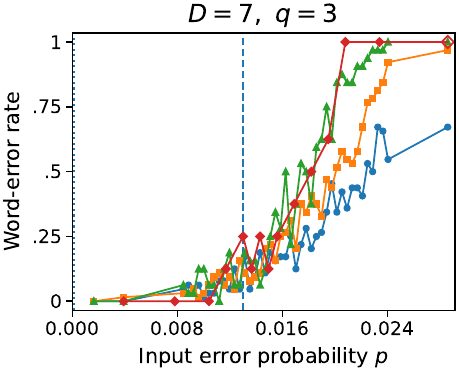}\hfill
\includegraphics[width=0.325\textwidth]{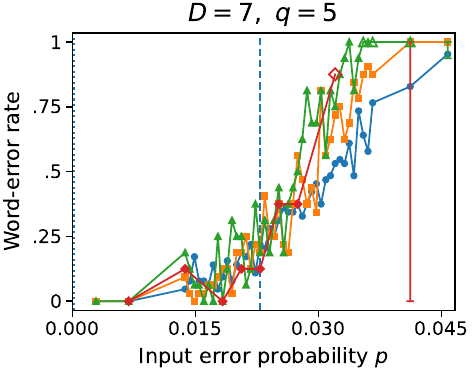}\\[3pt]
\includegraphics[width=\textwidth]{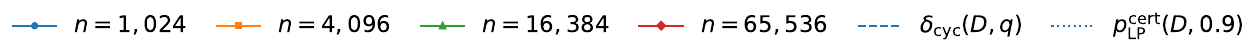}
\caption{LP word-error rates on Linial--Simkin graphs with $c=0.9$ and $m=Dn/2$ symbols. Each curve uses eight graphs. For $q=2,3,5$, respectively, the sample counts per noise level are $64,64,64$ at $n=1024$; $64,64,32$ at $n=4096$; $32,32,16$ at $n=16384$; and $16,8,8$ at $n=65536$. The dashed lines mark the ML cycle upper bound~\cref{eq:ml-cycle-value}; the dotted lines mark the certified LP rate~\cref{eq:lp-certified-rate}. Open symbols distinguish complete fresh-object solves from the standard runs. Vertical intervals span the possible full-sample word-error rates when some solves time out; they are not statistical error bars. Curves are not continued through such points.
}

\label{fig:lp-waterfalls}
\end{figure}

The observed useful decoding range extends well beyond the lower bound presented in \cref{cor:lp-linial}. For example, at $D=7,q=5$ the certified rate is approximately $1.30\times10^{-4}$, whereas all 56 errors tested at $p=0.00686$ across $n=4096,16384,65536$ were recovered. The eight largest-size samples contained between 1512 and 1620 erroneous symbols. The cycle value provides a second reference for the numerical curves, although the explicit small-alphabet lower bracket does not apply at these modest degrees. Finite-size transition points above the cycle value are consistent with an asymptotic strong converse.

\section{DQI performance and classical comparison}
\label{sec:performance-DQI}

In this concluding section, we combine the analysis of \cref{sec:dqi-imperfect-decoding} with the decoding results of \cref{sec:decoding-cycle-codes} to study DQI performance on problems dual, in the Regev sense, to decoding homological cycle codes. In particular, we focus on the Linial--Simkin ensemble and DQI instantiated by LP decoding in superposition. The following corollary provides the direct connection needed to make our performance claims below. 

\begin{corollary}[DQI from cycle-code reliability]
\label{cor:cycle-dqi-performance}
    Fix $D\ge3$, a finite field $\mathbb F_q$, and $c\in(0,1)$. Consider Max-2-LINSAT instances on the LS ensemble $G\sim\mathcal L_{n,D,c}$ (\Cref{def:linial-simkin-ensemble}) with target sets of common size $r$ such that the corresponding noise channels all have a shared error distribution $\theta$. Let $\rho=r/q$. Call a cutoff rate $p\in[0,1/2]$ reliable if there are syndrome decoders returning zero on the zero syndrome for which the failure probability in \cref{thm:fixed-target-semicircle}, with $\ell=\lfloor pm\rfloor$, tends to zero in ensemble mean. Let $F^{\mathrm{DQI}}_{D,\theta,r}$ be the supremum of the limiting expected satisfaction fractions obtained by optimizing the weights $\mathbf w$ at reliable cutoff rates. Then
    \begin{equation}\label{eq:cycle-dqi-upper}
        F^{\mathrm{DQI}}_{D,\theta,r}
        \le F_\rho\!\left(\min\{\delta_\mathrm{cyc}(D,\theta),1-\rho\}\right).
    \end{equation}
    If $\gamma=q/D^{2/3}\le1/32$, then also
    \begin{equation}\label{eq:cycle-dqi-lower}
        F^{\mathrm{DQI}}_{D,\theta,r} \ge F_\rho\!\left(\min\{(1-16\gamma)\delta_\mathrm{cyc}(D,\theta),1-\rho\}\right).
    \end{equation}
    For $q=2$, the upper bound is attained for every $D\ge3$.
\end{corollary}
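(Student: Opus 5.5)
The argument feeds the decoding thresholds of \cref{sec:decoding-cycle-codes} into the semicircle law of \cref{thm:fixed-target-semicircle}. The two bounds need different routes. The upper bound must convert reliability of a cutoff rate into a bound on the decoding threshold, and then into a bound on satisfaction. The lower bound must exhibit a decoder satisfying the one-coordinate robust hypothesis at every rate below $(1-16\gamma)\delta_\mathrm{cyc}$.

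\textbf{Lower bound.} Fix $p<(1-16\gamma)\delta_\mathrm{cyc}(D,\theta)$ and $\ell=\lfloor pm\rfloor$. I would use the minimum-Hamming-weight syndrome decoder, which returns zero on the zero syndrome. By \cref{thm:ml-small-alphabet} together with \cref{lem:robust-minweight-recovery}, every word within distance one of an i.i.d.\ $\nu_{p'}$ error is uniquely recovered with probability tending to one, for any $p'$ below the bracket. The semicircle hypothesis, however, concerns the fixed-weight laws $P_k$ for $k\le\ell$, not i.i.d.\ noise. I would bridge this in two steps.

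\begin{enumerate}
\item The bad event is monotone in the error: enlarging the support can only create more competing circulations.
\item Choose $p'$ strictly between $p$ and the threshold. A $\operatorname{Bin}(m,p')$ weight exceeds $\ell$ with probability $1-o(1)$. Conditioned on the weight being $K$, the i.i.d.\ error is distributed as $P_K$, and monotone coupling in $k$ then bounds the failure probability under $P_k$ for all $k\le\ell$ simultaneously.
\end{enumerate}

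This coupling step is where I expect the main obstacle. The failure event is not literally monotone under the value distribution $\theta$. I would therefore apply monotonicity only in the support, keeping the values i.i.d.\ from $\theta$. The recovery proof in the appendix is a union bound over circulation supports, which is monotone in the support, so this should suffice. The ensemble-mean statement converts into $\varepsilon\to0$ for most graphs via Markov's inequality. \Cref{thm:fixed-target-semicircle} then gives $f_\ell^*\ge F_\rho(p_\ell)-o(1)$ with $p_\ell=\min\{p,1-\rho\}$. Continuity of $F_\rho$ as $p\uparrow(1-16\gamma)\delta_\mathrm{cyc}$ gives \cref{eq:cycle-dqi-lower}.

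\textbf{Upper bound.} Suppose the cutoff rate $p$ is reliable, so the failure probability in the hypothesis of \cref{thm:fixed-target-semicircle} tends to zero in ensemble mean. I would show that this forces $p\le\delta_\mathrm{cyc}(D,\theta)$.

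\begin{enumerate}
\item If $p>\delta_\mathrm{cyc}$, pick $p''\in(\delta_\mathrm{cyc},p)$ with $(D-1)Z_\theta(p'')>1$.
\item The i.i.d.\ $\nu_{p''}$ error has weight at most $\ell$ with probability $1-o(1)$. It is a mixture of the laws $P_k$ with $k\le\ell$, each of which any decoder handles with error $o(1)$. Its success probability is therefore $1-o(1)$.
\item This contradicts the strong converse \cref{thm:ml-general-converse}, since no decoder beats ML.
\end{enumerate}

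Hence $p\le\delta_\mathrm{cyc}$. The upper half of \cref{eq:fixed-target-optimized-semicircle} with $\varepsilon\to0$ gives $f\le F_\rho(p_\ell)+o(1)$. Because $F_\rho$ is nondecreasing on $[0,1-\rho]$ and its argument is capped at $1-\rho$, we get $F_\rho(p_\ell)\le F_\rho(\min\{\delta_\mathrm{cyc},1-\rho\})$. Taking the supremum gives \cref{eq:cycle-dqi-upper}.

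\textbf{Binary case.} For $q=2$, the threshold $\delta_\mathrm{rel}(D,2)=\delta_\mathrm{cyc}(D,2)$ in \cref{eq:ml-binary-exact} is exact, and the $T$-join minimum-weight decoder is efficient. The robust recovery argument of the lower-bound step then applies at every $p<\delta_\mathrm{cyc}(D,2)$, using the binary reliability of~\cite{DZ97,NVZ16} and a one-coordinate version analogous to \cref{lem:robust-minweight-recovery}. Letting $p\uparrow\delta_\mathrm{cyc}$ attains the upper bound.
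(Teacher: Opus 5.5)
Your proposal is correct and follows the paper's proof essentially step for step. The upper bound goes by contradiction with \cref{thm:ml-general-converse} using binomial concentration below $\lfloor pm\rfloor$. The lower bound uses the minimum-weight decoder, \cref{lem:robust-minweight-recovery}, and a support-only monotone coupling from the i.i.d.\ channel to the fixed-weight shells, which is exactly \cref{lem:robust-channel-cutoff}. The binary case uses the exact threshold together with a one-coordinate robust recovery statement. The monotonicity $R_k\le R_{k+1}$ should rest on the deterministic fact from your first step, not on the appendix union bound being ``monotone in the support'': adding a nonzero coordinate can only decrease $|\mathbf e-\mathbf f|-|\mathbf e|$ for every circulation $\mathbf f$, and the one-coordinate-robust success set is closed under deletion. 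For $q=2$, the paper obtains the robust recovery from its own Linial--Simkin cycle-count bound with the $e^{2t}$ tilt, rather than from~\cite{DZ97,NVZ16}.
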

\noindent The bounds concern cutoffs for which decoding is asymptotically reliable. If $\delta_\mathrm{cyc}(D,\theta)<1/2$, the upper bound also excludes reliable cutoff rates above $1/2$. This includes uniform replacement at every $D,q$.
\begin{proof}
    At a reliable rate $p$, the optimized fraction converges in probability to $F_\rho(\min\{p,1-\rho\})$ by \cref{thm:fixed-target-semicircle}. It also converges in mean, because satisfaction fractions lie in $[0,1]$. If $p>\delta_\mathrm{cyc}(D,\theta)$, choose $\delta_\mathrm{cyc}(D,\theta)<p'<\min\{p,1/2\}$. A binomial error count at rate $p'$ is at most $\lfloor pm\rfloor$ with probability tending to one. The assumed uniform shell recovery would therefore imply reliable channel decoding at $p'$, contradicting \cref{thm:ml-general-converse}. This proves \cref{eq:cycle-dqi-upper}.
    
    Conversely, \cref{thm:ml-small-alphabet,lem:robust-minweight-recovery} give one-coordinate recovery for the lexicographically first minimum-weight decoder at every $p'\le(1-16\gamma)\delta_\mathrm{cyc}(D,\theta)$. For any smaller $p$, \cref{lem:robust-channel-cutoff} transfers this guarantee to all shells through $\lfloor pm\rfloor$. Apply \cref{thm:fixed-target-semicircle} and let $p$ increase to the certified endpoint to obtain \cref{eq:cycle-dqi-lower}. For $q=2$, the binary recovery proof establishes the same property at every $p'<\delta_\mathrm{cyc}(D,2)$. Substitute \cref{eq:ml-binary-exact} into $F_{1/2}(p)=1/2+\sqrt{p(1-p)}$.
\end{proof}
Note that the analysis of \cref{sec:dqi-imperfect-decoding} requires a deterministic, syndrome-consistent output for every valid syndrome, whereas the LP decoder declares failure in certain scenarios. To make it a valid DQI decoder, we modify it as follows. When the LP declares failure, return a solution of $H\mathbf y=\mathbf s$ found by Gaussian elimination, with free coordinates set to zero in a fixed order. This defines a polynomial-time map $D_\mathrm{LP}$ with $D_\mathrm{LP}(0)=0$. The word-error probability of the resulting decoder is at most $P_{\mathrm W}^{\mathrm{LP}}$, since declared failures were already counted as errors, so it inherits \cref{eq:lp-linial-rate}.

For a fixed-target DQI instance with $\ell=\lfloor pm\rfloor$ and $p<p_{\mathrm{LP}}^{\mathrm{cert}}$, the fixed-weight version makes the failure probability in \cref{thm:fixed-target-semicircle} tend to zero, uniformly over weights through $\ell$. Consequently the optimized satisfaction fraction tends to $F_{r/q}(\min\{p,1-r/q\})$ for arbitrary target sets of size $r$.

\subsection{MaxCut and \texorpdfstring{Max-$k$-Cut}{Max-k-Cut}}\label{sec:max_k_cut}
Having established a firm connection between decoding LS cycle codes and DQI performance, we turn to our first application of interest: the well-known problem of MaxCut and its generalization, Max-$k$-Cut.
\begin{definition}[Max-$k$-Cut]\label{def:maxkcut}
    Let $G=(V,E)$ be a simple graph. Max-$k$-Cut asks for an assignment $\mathbf x:V\to[k]$ maximizing the number of edges $\{u,v\}\in E$ with $x_u\ne x_v$. The cut fraction is this number divided by $m=|E|$. MaxCut is Max-2-Cut.
\end{definition}
\noindent As the name suggests, MaxCut admits a visualization as the number of edges ``cut'' by a bipartition of the vertices. More generally, Max-$k$-Cut can be understood as a vertex coloring problem in which, given a fixed number of colors $k$, one seeks to color the vertices to minimize the number of adjacent vertices with the same color.

Max-$k$-Cut is an instance of Max-2-LINSAT when $k=q$ is a prime power. We identify the labels with $\mathbb F_q$ and orient the edges. The constraints are $x_v-x_u\in\mathbb F_q\setminus\{0\}$, so $B^T$ is an oriented incidence matrix, $r=q-1$, and \cref{eq:target-noise-law} gives $\theta(a)=1/(q-1)$. Thus the semicircle formula is
\begin{equation}\label{eq:maxqcut-binomial-benchmark}
 b_q(p)\coloneqq F_{(q-1)/q}(p) = \frac{\bigl(\sqrt{(q-1)(1-p)}+\sqrt p\bigr)^2}{q}.
\end{equation}
It increases from the random-assignment fraction $(q-1)/q$ to $1$ on $0\le p\le1/q$.

Write $F_{q,D}^{\mathrm{DQI}}$ for the optimal asymptotic cut fraction among the reliably decoded cutoff families of \cref{cor:cycle-dqi-performance} on the Linial--Simkin ensemble. For this Max-$k$-Cut comparison, take $q<D$, as in \cref{tab:tpm-dqi}. Here $\delta_\mathrm{cyc}(D,q)<1/(D-1)\le1/q$. Substituting the ML decoding threshold bound \cref{eq:ml-cycle-value} into the semicircle formula \cref{eq:maxqcut-binomial-benchmark} gives
\begin{equation}\label{eq:maxqcut-reliable-optimum}
 F_{q,D}^{\mathrm{DQI}}\le U_q(D)
 \coloneqq b_q\bigl(\delta_\mathrm{cyc}(D,q)\bigr).
\end{equation}
For $q/D^{2/3}\le1/32$, the same corollary gives the lower bound $b_q\bigl((1-16q/D^{2/3})\delta_\mathrm{cyc}(D,q)\bigr)$. Consequently, for $q=o(D^{2/3})$,
\begin{equation}\label{eq:DQI-MaxqCut-asymptotic-fraction}
 F_{q,D}^{\mathrm{DQI}}
 =\frac{q-1}{q}\left(1+\frac1{D-1}\right)+o(D^{-1}).
\end{equation}
For MaxCut, the exact binary threshold in \cref{eq:ml-binary-exact} gives the exact reliable-decoding value
\begin{equation}\label{eq:DQI-MaxCut-fraction}
 F_{2,D}^{\mathrm{DQI}}
 =\frac12+\sqrt{\delta_\mathrm{cyc}(D,2)
                         (1-\delta_\mathrm{cyc}(D,2))}
 =\frac12+\frac1{2(D-1)}.
\end{equation}
The polynomial-time LP decoder attains cut fractions approaching $b_q\bigl(p_{\mathrm{LP}}^{\mathrm{cert}}(D,c)\bigr)$ by \cref{cor:lp-linial,thm:fixed-target-semicircle}. For $q>2$, the sharper lower bound above uses minimum-weight decoding, which is NP-hard in general.

We first compare DQI performance to the classical vector algorithm of Thompson, Parekh, and Marwaha (TPM)~\cite{TPM22}, followed by Gaussian rounding~\cite{Frieze1997}. We observe that TPM gives a larger improvement at large degree on the same ensemble. As the girth grows, the inner product of adjacent unit vectors tends to $\sigma_D=-2\sqrt{D-1}/D$. Assigning each vertex the label with largest inner product among $q$ independent standard Gaussian vectors gives the limiting expected cut fraction
\begin{equation}\label{eq:tpm-limit}
 T_q(D)=1-q\int_{\mathbb R^2}
       \phi_{\sigma_D}(x,y)\Phi_{\sigma_D}(x,y)^{q-1}\,dx\,dy,
 \qquad \sigma_D=-\frac{2\sqrt{D-1}}D,
\end{equation}
where $\phi_\sigma$ and $\Phi_\sigma$ are the standard bivariate Gaussian density and distribution function with correlation $\sigma$. The integral is the probability that one specified label wins at both endpoints. For $q=2$, $T_2(D)=\arccos(\sigma_D)/\pi$. Expanding at zero correlation gives, with independent $Z_a\sim\mathcal N(0,1)$,
\begin{equation}\label{eq:tpm-large-degree}
 T_q(D)=\frac{q-1}{q}+\frac{\kappa_q}{\sqrt D}+O_q(D^{-1}),
 \qquad \kappa_q=\frac{2}{q-1}\left(\mathbb E\max_{1\le a\le q}Z_a\right)^2>0.
\end{equation}
This exceeds the $O_q(D^{-1})$ improvement of reliably decoded DQI at large $D$. The TPM limit also exceeds $U_q(D)$ at every parameter pair in \cref{tab:tpm-dqi}.

Second, we compare DQI with simulated annealing (SA). We start each run from a uniformly random assignment and return the best assignment visited. A sweep visits every vertex, proposing a uniformly chosen different label and accepting it by the Metropolis rule, with energy equal to the number of unsatisfied constraints. The inverse temperature increases linearly between sweeps, with endpoints chosen on separate tuning instances. For Max-$k$-Cut, the same evaluation graphs are used for every field at a given degree; their sizes, girth, and sampling errors are given in \cref{tab:tpm-dqi}.

\begin{table}[H]
\centering
\footnotesize
\setlength{\tabcolsep}{4pt}
\renewcommand{\arraystretch}{1.12}
\captionsetup{font=footnotesize,skip=4pt}
\begin{tabular*}{\textwidth}{@{\extracolsep{\fill}}clrrrrr@{}}
\toprule
$D$ & Method & $q=2$ & $q=3$ & $q=5$ & $q=7$ & $q=11$\\
\midrule
 & Random & 0.50000 & 0.66667 & 0.80000 & 0.85714 & 0.90909\\
\midrule
3 & DQI & 0.75000 & \textendash & \textendash & \textendash & \textendash\\
 & TPM & 0.89183 & \textendash & \textendash & \textendash & \textendash\\
 & SA & 0.92243 & \textendash & \textendash & \textendash & \textendash\\
\addlinespace[3pt]
4 & DQI & 0.66667 & 0.85553 & \textendash & \textendash & \textendash\\
 & TPM & 0.83333 & 0.95549 & \textendash & \textendash & \textendash\\
 & SA & 0.86671 & 1.00000 & \textendash & \textendash & \textendash\\
\addlinespace[3pt]
5 & DQI & 0.62500 & 0.81427 & \textendash & \textendash & \textendash\\
 & TPM & 0.79517 & 0.93381 & \textendash & \textendash & \textendash\\
 & SA & 0.83374 & 0.99893 & \textendash & \textendash & \textendash\\
\addlinespace[3pt]
6 & DQI & 0.60000 & 0.78764 & 0.92160 & \textendash & \textendash\\
 & TPM & 0.76772 & 0.91595 & 0.98105 & \textendash & \textendash\\
 & SA & 0.80354 & 0.98571 & 1.00000 & \textendash & \textendash\\
\addlinespace[3pt]
7 & DQI & 0.58333 & 0.76911 & 0.90588 & \textendash & \textendash\\
 & TPM & 0.74675 & 0.90112 & 0.97455 & \textendash & \textendash\\
 & SA & 0.78393 & 0.96962 & 1.00000 & \textendash & \textendash\\
\addlinespace[3pt]
13 & DQI & 0.54167 & 0.71999 & 0.85919 & 0.91607 & 0.96358\\
 & TPM & 0.67891 & 0.84677 & 0.94473 & 0.97328 & 0.99060\\
 & SA & 0.70968 & 0.90178 & 0.99950 & 1.00000 & 1.00000\\
\addlinespace[3pt]
17 & DQI & 0.53125 & 0.70707 & 0.84569 & 0.90341 & 0.95320\\
 & TPM & 0.65596 & 0.82638 & 0.93148 & 0.96439 & 0.98596\\
 & SA & 0.68357 & 0.87517 & 0.98809 & 1.00000 & 1.00000\\
\bottomrule
\end{tabular*}
\caption{\label{tab:tpm-dqi}Fractions of edges cut for Max-$k$-Cut. DQI is the upper bound $b_q\bigl(\delta_\mathrm{cyc}(D,q)\bigr)$ for the optimized reliable-cutoff performance in \cref{eq:maxqcut-reliable-optimum}; TPM is the infinite-girth limit in \cref{eq:tpm-limit}. SA is the mean best cut fraction from one 65,536-sweep anneal on each of 16 independent Linial--Simkin graphs with $n=65536$ and girth at least $\max\{5,\lfloor0.9\log_{D-1}n\rfloor\}$. Temperature endpoints were selected on separate tuning graphs. All SA standard errors are below $3.4\times10^{-5}$; every SA entry of $1.00000$ represents perfect cuts on all 16 graphs. Values are rounded to five decimals, and dashes mark the excluded cases $q\ge D$.}
\end{table}

\Cref{tab:tpm-dqi} presents a comparison of all three methods. SA gives the largest mean cut fraction in all 21 cases, exceeding the TPM limit by $0.94$--$6.98$ percentage points and the DQI upper bound by $3.64$--$20.87$ points. Seven cases yield perfect cuts on all 16 graphs. The theoretical entries are asymptotic; the SA entries are finite-size measurements. On four evaluation graphs per parameter pair, the SA cut fractions after only 1,024 sweeps exceed both theoretical references. Increasing the budget from 65,536 to 262,144 sweeps improves the nonperfect means by less than $0.095$ percentage points. At $n=16384$, using eight additional graphs per degree and 65,536 sweeps, the means differ from those at $n=65536$ by at most $0.071$ percentage points.

\subsection{Random Max-2-LINSAT} \label{subsec:random-max2LINSAT}
For other Max-2-LINSAT objectives, we first chose independent uniform singleton targets $C_i=\{v_i\}$, so an oriented edge is satisfied when $x_v-x_u=v_i$. Here $r=1$ and $\theta(a)=1/(q-1)$ for every target vector. Substituting \cref{eq:ml-Z,eq:ml-cycle-value} into the semicircle formula gives the reliable-decoding upper bound
\begin{equation}\label{eq:numerical-linsat-reference}
 F_{1/q}(\delta_\mathrm{cyc}(D,q))
 =\frac{1+(q-1)Z_q(\delta_\mathrm{cyc}(D,q))}{q}
 =\frac1q+\frac{q-1}{q(D-1)}.
\end{equation}
For $D\in\{3,5,7\}$ and $q\in\{2,3,5\}$, SA exceeds this bound on every tested instance at each budget of 16, 1,024, and 65,536 sweeps. Each instance has one anneal per budget, with its targets held fixed. At $n=65536$, increasing the budget from 1,024 to 65,536 sweeps improves the mean satisfaction fraction by at most $2.74$ percentage points across these nine singleton-target cases, with the largest increase at $(q,r,D)=(5,1,3)$. At 65,536 sweeps, the means from eight instances at $n=4096$ differ from those at $n=65536$ by less than $0.08$ percentage points.

We also chose each target independently and uniformly among the $r$-element subsets of $\mathbb F_q$, for $q=3,5$ and $r=2,\ldots,q-1$. \Cref{tab:sa-lists} collects the 65,536-sweep SA results for all studied target sizes, alongside the DQI values $F_{r/q}(\min\{\delta_\mathrm{cyc}(D,q),1-r/q\})$ from \cref{cor:DQI-error-random-binomial-weights} under the decoder assumption in the caption. SA exceeds these DQI values in every case.

\begin{table}[H]
\centering
\begin{tabular}{rr rr rr rr}
\toprule
 & & \multicolumn{2}{c}{$D=3$} & \multicolumn{2}{c}{$D=5$} & \multicolumn{2}{c}{$D=7$}\\
\cmidrule(lr){3-4}\cmidrule(lr){5-6}\cmidrule(lr){7-8}
$q$ & $r$ & SA & DQI & SA & DQI & SA & DQI\\
\midrule
2 & 1 & 0.9223 & 0.7500 & 0.8337 & 0.6250 & 0.7840 & 0.5833 \\
3 & 1 & 0.8870 & 0.6667 & 0.7636 & 0.5000 & 0.6969 & 0.4444 \\
3 & 2 & 1.0000 & 0.9259 & 0.9990 & 0.8143 & 0.9696 & 0.7691 \\
5 & 1 & 0.8497 & 0.6000 & 0.6968 & 0.4000 & 0.6148 & 0.3333 \\
5 & 2 & 0.9999 & 0.8000 & 0.9159 & 0.6192 & 0.8438 & 0.5511 \\
5 & 3 & 1.0000 & 0.9328 & 1.0000 & 0.8000 & 0.9812 & 0.7419 \\
5 & 4 & 1.0000 & 0.9983 & 1.0000 & 0.9423 & 1.0000 & 0.9059 \\
\bottomrule
\end{tabular}
\caption{Mean satisfied fractions for random target sets of size $r$ at $n=65536$ and $c=0.9$. SA uses one 65,536-sweep anneal per instance, with eight independent graph/target instances for $r=1$ and four for $r>1$; all standard errors are below $1.4\times10^{-4}$. DQI gives the asymptotic mean satisfaction fraction with optimized binomial amplitudes, assuming a target-independent decoder reliable up to the ML cycle bound. For intermediate target sizes, these values do not bound decoders that exploit the individual target sets. Values are rounded to four decimal places; a displayed $1.0000$ can therefore include a small number of unsatisfied constraints.}
\label{tab:sa-lists}
\end{table}

\paragraph{Acknowledgments} We thank Noah Shutty, Gilles Zemor, and Earl Campbell for helpful discussions. The JPMorganChase team thanks Rob Otter for the executive support of the work and all of their colleagues at the Global Technology Applied Research center for helpful feedback and support throughout the project.

O.P.\ was supported by the U.S.\ Department of Energy, Office of Science, Accelerated Research in Quantum Computing, Fundamental Algorithmic Research toward Quantum Utility (FAR-Qu). This article has been authored by an employee of National Technology \& Engineering Solutions of Sandia, LLC under Contract No.\ DE-NA0003525 with the U.S. Department of Energy (DOE). The employee owns all right, title and interest in and to the article and is solely responsible for its contents. The United States Government retains and the publisher, by accepting the article for publication, acknowledges that the United States Government retains a non-exclusive, paid-up, irrevocable, world-wide license to publish or reproduce the published form of this article or allow others to do so, for United States Government purposes. The DOE will provide public access to these results of federally sponsored research in accordance with the DOE Public Access Plan \url{https://www.energy.gov/downloads/doe-public-access-plan}.

\paragraph{AI statement} Various AI models were used throughout this project, but all work done by AI was supervised and carefully checked by the authors, who take responsibility for the correctness of all claims.

\section*{Disclaimer}
This paper was prepared for informational purposes by the Global Technology Applied Research center of JPMorgan Chase \& Co. This paper is not a product of the Research Department of JPMorgan Chase \& Co. or its affiliates. Neither JPMorgan Chase \& Co. nor any of its affiliates makes any explicit or implied representation or warranty and none of them accept any liability in connection with this paper, including, without limitation, with respect to the completeness, accuracy, or reliability of the information contained herein and the potential legal, compliance, tax, or accounting effects thereof. This document is not intended as investment research or investment advice, or as a recommendation, offer, or solicitation for the purchase or sale of any security, financial instrument, financial product or service, or to be used in any way for evaluating the merits of participating in any transaction.

\begin{appendices}

\section{Proofs of the Maximum-Likelihood bounds}
\label[appendix]{app:ml-small-alphabet}
We prove \cref{thm:ml-general-converse,thm:ml-small-alphabet} for the Linial--Simkin ensemble. Fix $D\ge3$, $q$, $c\in(0,1)$, and the channel during the $n$-limit. Write $g=g(n)$.

Let $\mathcal E_n$ be the event of successful completion of all $D-2$ matching rounds in \cref{def:linial-simkin-ensemble}. Thus $\Pr(\mathcal E_n)=1-o(1)$ by~\cite{LS}. We may randomize the initial Hamilton cycle $H$ and its direction: this randomizes labels without changing graph-invariant decoding probabilities. Unless explicitly conditioned, probabilities refer to the unconditioned process. A failed run contributes zero to a count requiring a completed graph; conditioning on $\mathcal E_n$ is imposed at the end. The published graph-construction and process estimates are cited below; the matching, planting, and coding arguments are additional arguments of this manuscript.

\subsection{Dense matching rounds and prescribed subgraphs}\label{ls:sec:subgraphs}

For a given Hamilton cycle $H$ on $V=[n]$, fixed target degree $D\in\mathbb{N}$, a target girth $g:\mathbb{N}\to\mathbb{N}$ with $g(n) \leq n$ and even $n\in\mathbb{N}$, \cite{LS} define their $(H,g,D)$\textit{-high-girth process} as a graph sequence $(G_0, G_1, \ldots)$ that begins with $G_0=H$ and where $G_{t+1}$ is constructed from $G_t$ as follows: If $G_t=(V, E_t)$ is $D$-regular, then $G_{t+1}=G_t$. Otherwise, let $H_t=(W_t, \mathscr A_t)$ where
\begin{equation}
    W_t \coloneqq \{u\in V\,|\,d_{G_t}(u) = \min_{v\in V} d_{G_t}(v)\}
\end{equation}
is the set of \textit{unsaturated} vertices and
\begin{equation}
    \mathscr A_t \coloneqq \{\{u,v\}\subset W_t\,|\,\delta_{G_t}(u,v) \geq g(n)-1\}
\end{equation}
is the set of \textit{available} edges. If $\mathscr A_t = \emptyset$, then $G_{t+1}=G_t$. Otherwise, $E_{t+1} = E_t \cup \{e_{t+1}\}$ where $e_{t+1}$ is chosen uniformly at random from $\mathscr A_t$. The process \textit{freezes} at time $t$ if $G_t=G_{t+1}$ and \textit{saturates} at time $t$ if $G_t$ is $D$-regular. As shown in \cite[Theorem 1.1]{LS}, if $D\geq 3$ and $g(n) = c\log_{D-1}(n)$ for a fixed $c\in(0,1)$, then the $(H,g,D)$-high-girth process saturates \textit{w.h.p.} (with high probability, i.e. with probability tending to $1$). It proceeds in up to $D-2$ successive \textit{rounds}, where round $k \in \{3, \dots, D\}$ corresponds to adding a random matching $M_{k-2}$ appearing in \eqref{eq:ls-decomposition}. If completed, this round takes the process from a $(k-1)$-regular graph to a $k$-regular graph.

We require the following uniform lower bound on the degree of vertices in $H_t$. Its proof relies on various intermediate results in~\cite{LS}.

\begin{lemma}[Uniform density along a round]\label[lemma]{lem:density}
Consider a round from degree $k-1$ to degree $k$, where $3 \le k \le D$. For any $a>0$, let $\mathcal A_n$ denote the event that every state of the round satisfies
\begin{equation}\label{ls:eq:density}
 d_{\mathscr A_t}(v)\ge(1-\zeta_n)(|W_t|-1)
 \quad(v\in W_t),\qquad \zeta_n=n^{-a}.
\end{equation}
Then, there exists a constant $a=a(D,c)>0$ such that $\Pr[\mathcal A_n] = 1 - o(1)$ uniformly over all $(k-1)$-regular graphs of girth at least $g$ at the start of the round. Moreover, $\mathcal A_n$ implies saturation.
\end{lemma}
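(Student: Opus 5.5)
The plan is to derive the density bound from the differential-equation style tracking estimates of \cite{LS}, and then to check that density forces saturation. Throughout, fix a round taking the graph from degree $k-1$ to degree $k$, started from a $(k-1)$-regular graph of girth at least $g$. The argument has three steps.

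\textbf{Step 1: reduce density to a count of close pairs.} A vertex $v\in W_t$ fails to be adjacent in $\mathscr A_t$ to some $u\in W_t\setminus\{v\}$ exactly when $\delta_{G_t}(u,v)\le g-2$. In $G_t$ every degree is at most $k\le D$. Hence the ball of radius $g-2$ around $v$ has at most
\[
1+D\sum_{j<g-1}(D-1)^j=O\bigl((D-1)^{g}\bigr)=O(n^{c})
\]
vertices, using $g\le c\log_{D-1}n$. So deterministically $v$ misses at most $O(n^c)$ unsaturated partners. It therefore suffices to show that, w.h.p., throughout the round we have $|W_t|\ge n^{c+2a}$, say. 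Then $(|W_t|-1)-d_{\mathscr A_t}(v)\le O(n^c)\le n^{-a}(|W_t|-1)$ for large $n$.

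\textbf{Step 2: handle the late phase.} The crude ball bound stops working once $|W_t|$ drops below about $n^{c}$. This late phase is the main obstacle. Here I would invoke the finer estimates of \cite{LS}, namely their Proposition~2.1 and the trajectory lemmas behind Theorem~1.1. These state that, w.h.p. and uniformly over starting graphs of girth at least $g$, the number of unsaturated vertices within distance $g-2$ of any unsaturated $v$ tracks its deterministic prediction. That prediction is roughly $|W_t|/n$ times the ball size $(D-1)^{g}\le n^{c}$, up to an error $n^{-a'}|W_t|$. Since $c<1$, the ratio $|W_t|n^{c-1}/|W_t|=n^{c-1}$ is polynomially small. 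Together with the tracking error this gives the bound with $a=\min\{a',1-c\}/2$. The process also stops once $|W_t|\le1$, and the claim is vacuous at $|W_t|=1$, so the only delicate states are those with $|W_t|$ small but at least $2$. For these I would use the statement in \cite{LS} that the process never freezes with a bounded number of unsaturated vertices, adapted to every state of the round by a union bound over the at most $n/2$ steps. The uniformity over starting graphs comes from the fact that the \cite{LS} estimates depend only on the girth and regularity of the starting graph.

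\textbf{Step 3: density implies saturation.} On $\mathcal A_n$, whenever $|W_t|\ge2$ we have $d_{\mathscr A_t}(v)\ge(1-\zeta_n)(|W_t|-1)>0$, so $\mathscr A_t\neq\emptyset$. The process therefore never freezes mid-round. Because $n$ is even and the round starts $(k-1)$-regular, $|W_t|$ stays even and the round completes with a perfect matching. Applying this to each of the $D-2$ rounds gives saturation.
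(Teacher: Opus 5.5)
\textbf{Gap in Step 2.} Your Steps 1 and 3 are fine and match the paper: the Moore-bound count handles the initial interval where $|W_t|\ge n^{c_0+\epsilon}$, and density forces non-freezing, hence completion of the round. The gap is Step 2, which carries all the difficulty.

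The estimate you attribute to \cite{LS} is not available in that form. You assume that the number of unsaturated vertices within distance $g-2$ of $v$ equals $|W_t|n^{c-1}$ up to an error $n^{-a'}|W_t|$, uniformly over all states. But \cite{LS} only gives path-boundedness bounds (their Lemmas~2.9, 2.11 and Claim~2.12(c)). These bound the threatened partners of $v$ by $\log^{O(1)}(n)\sum_{\ell}\max\{1,(k-1)^\ell|W_{t_j}|/n^{c_0+\epsilon}\}$. That bound has an additive polylogarithmic floor, and it is relative to a reference time $t_j$. The paper therefore re-establishes path-boundedness inductively along a schedule $|W_{t_{j+1}}|=|W_{t_j}|n^{-\alpha}$. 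This covers only the range $|W_t|\ge n^{\epsilon}$, with a union bound over $O(1)$ intervals. Your union bound over $n/2$ steps would instead need per-step failure probabilities $o(1/n)$, which nothing in your argument supplies.

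Note also that an error bound $n^{-a'}|W_t|$ at $|W_t|<n^{a'}$ asserts zero forbidden partners. So your claimed estimate silently assumes the conclusion in the hardest phase.

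\textbf{The endgame fallback is too weak.} Once $|W_t|<n^{a}$, the condition $d_{\mathscr A_t}(v)\ge(1-n^{-a})(|W_t|-1)$ forces $d_{\mathscr A_t}(v)=|W_t|-1$. In other words, $H_t$ must be complete. Non-freezing only says that some available pair exists, which is far weaker.

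The missing idea is the safety argument. At the end of the schedule, $|W_{t_m}|\le n^{\epsilon}$. The global count of bad pairs then satisfies $|B_{t_m}|\le|W_{t_m}|^2n^{c_0-1}\log^{O(1)}n=o(1)$, which needs $2\epsilon<1-c_0$. Being a nonnegative integer, $|B_{t_m}|$ is zero w.h.p., so $G_{t_m}$ is safe. Safety persists for the rest of the round by \cite[Lemma~2.5]{LS}. Hence every pair of unsaturated vertices stays available, $H_t$ is complete, and density holds for all later $t$.

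Without this step, and without the inductive middle-range argument, your proof does not establish the lemma for $|W_t|\lesssim n^{c}$.
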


\begin{remark}
The proof of the Lemma mirrors the general strategy employed in the proof of~\cite[Proposition 2.1]{LS}. It splits the round into intervals using a finite sequence of moments $t_0, t_1,\ldots, t_m$ engineered so that $|W_{t_{j+1}}| = |W_{t_j}|n^{-\alpha}$.
This is achieved by exploiting the fact that if the process is not frozen, then $|W_t|=n-2t$. To prove the process does not freeze before the end of the round, the proof uses induction and the notion of $C$-\textit{path-boundedness}, introduced in~\cite[Definition 2.10]{LS}. Note that this conclusion cannot be obtained by conditioning on saturation, which is a future event.
\end{remark}

\begin{proof}
Set
\begin{equation}
    c_0=\frac{1+c}{2},\qquad
    \epsilon=\frac{c_0(1-c_0)}3,\qquad
    \alpha=\frac{\epsilon}{1000},\qquad
    a=\frac{\epsilon}{3},
\end{equation}
so that $g\le c_0\log_{k-1}n$ and $c_0>1/2$. We split the round into intervals, one initial, one terminal, and $m = \left\lceil\frac{c_0}{\alpha}\right\rceil \leq \frac{1}{\alpha} = O(1)$ intervening ones, that meet at times $t_j$, defined recursively as
\begin{equation}
    t_0 = \left\lceil\frac12 \left(n - n^{c_0+\epsilon}\right)\right\rceil,\qquad
    t_j = \left\lceil\frac12 \left(n - (n - 2 t_{j-1}) n^{-\alpha}\right)\right\rceil
\end{equation}
for $j=1,\ldots,m$. We will prove that \eqref{ls:eq:density} holds w.h.p. for all times in each interval.

Fix $t\in\{0,\ldots, t_0\}$ in the first interval. A pair $\{u,v\}\subset W_t$ is available in $\mathscr A_t$ if and only if $\delta_{G_t}(u, v) \geq g-1$. But $G_t$ has maximum vertex degree $k$, so the number of forbidden partners for $v\in W_t$ satisfies the Moore bound, \cite[Observation 2.2]{LS}
\begin{equation}
    |W_t\setminus\mathscr A_t(v)| \leq \sum_{\ell=0}^{g-3} k(k-1)^{\ell} = O_k((k-1)^{g-2}) = O_k(n^{c_0}).
\end{equation}
By~\cite[Lemma 2.3]{LS}, we have $|W_{t_0}|=n-2t_0=n^{c_0+\epsilon}-O(1)$, so
\begin{equation}
    \frac{|W_t\setminus\mathscr A_t(v)|}{|W_t|-1} \leq \frac{O(n^{c_0})}{n^{c_0+\epsilon} - O(1)} = O(n^{-\epsilon}) \leq \zeta_n
\end{equation}
and thus
\begin{equation}
    d_{\mathscr A_t}(v) \geq (1 - \zeta_n)\left(|W_t| - 1\right)
\end{equation}
for all $t\in\{0,\ldots, t_0\}$. Moreover, by~\cite[Lemma 2.9]{LS}, $G_{t_0}$ is $C_0$-path-bounded with $C_0=3$ w.h.p.

Fix $j\in\{0,\ldots,m-1\}$ and assume $G_{t_j}$ is $C_j$-path-bounded. For any $t \in \{t_j, \ldots, t_{j+1}\}$ and $v \in W_t$, a vertex $u \in W_t$ is forbidden for $v$ in $G_t$ only if $\delta_{G_t}(u, v) \le g-2$, which occurs only if the pair $\{u, v\}$ is $\ell$-threatened for some $\ell \le g-2$ (see the proof of~\cite[Lemma 2.11]{LS} and the preceding discussion). Applying~\cite[Claim 2.12(c)]{LS},
\begin{equation}\label{eq:bound_using_L_function}
    |W_t \setminus \mathscr A_t(v)| \le \sum_{\ell=1}^{g-2} T_\ell(v) \le \log^{O(1)}(n) \sum_{\ell=1}^{g-2} L(\ell, t_j) ,
\end{equation}
where
\begin{equation}\label{eq:def_L_function}
    L(\ell, t) = \max\left\{1, \frac{(k-1)^\ell (n - 2t)}{n^{c_0+\epsilon}}\right\} \leq 1 + \frac{(k-1)^\ell (n - 2t)}{n^{c_0+\epsilon}}.
\end{equation}
Using $\sum_{\ell=1}^{g-2} (k-1)^\ell = O((k-1)^g) = O(n^{c_0})$ and $n - 2t_j = |W_{t_j}|$, the bound \eqref{eq:bound_using_L_function} becomes
\begin{equation}\label{eq:bound_for_intermediate_intervals}
    |W_t \setminus \mathscr A_t(v)| \leq \left(1 + \frac{|W_{t_j}|}{n^\epsilon}\right)\log^{O(1)}(n).
\end{equation}
We derive two consequences from this bound. First, note that
\begin{equation}
    |W_t| \geq |W_{t_{j+1}}| = |W_{t_j}|n^{-\alpha} - O(1)
\end{equation}
and
\begin{equation}
    |W_t \setminus \mathscr A_t(v)| \leq O\left(|W_{t_j}|n^{-\epsilon}\log^{O(1)}(n)\right)
\end{equation}
which implies that
\begin{equation}
    |\mathscr A_t(v)| = |W_t| - |W_t \setminus \mathscr A_t(v)| \geq |W_{t_j}|n^{-\alpha}\left(1 - O(n^{-(\epsilon-\alpha)}\log^{O(1)}(n))\right) = |W_{t_j}|n^{-\alpha}(1-o(1))
\end{equation}
but $|W_{t_j}| \geq |W_{t_{m-1}}| \geq n^{\epsilon}$, so
\begin{equation}
    |\mathscr A_t(v)| \geq n^{\epsilon-\alpha}(1-o(1)) > 0
\end{equation}
and we conclude that, for sufficiently large $n$, the process does not freeze in the $j$th interval $\{t_j,\ldots,t_{j+1}\}$ w.h.p. Consequently,
\begin{equation}
    |W_t| \geq |W_{t_{j+1}}|=n-2t_{j+1}=(n - 2t_j)n^{-\alpha} - O(1) = |W_{t_j}|n^{-\alpha} - O(1)
\end{equation}
which leads to the second consequence of \eqref{eq:bound_for_intermediate_intervals}, namely
\begin{equation}
    \frac{|W_t \setminus \mathscr A_t(v)|}{|W_t|-1} \le \frac{\left(1 + \frac{|W_{t_j}|}{n^\epsilon}\right)\log^{O(1)}(n)}{|W_{t_j}|n^{-\alpha} - O(1)} \le 2n^{-(\epsilon-\alpha)}\log^{O(1)}(n) \le n^{-(\epsilon-2\alpha)} \le \zeta_n
\end{equation}
which implies that
\begin{equation}
    d_{\mathscr A_t}(v) \geq (1 - \zeta_n)\left(|W_t| - 1\right)
\end{equation}
for all $t\in\{t_j,\ldots, t_{j+1}\}$ w.h.p. Moreover, by~\cite[Lemma 2.11]{LS}, $G_{t_{j+1}}$ is $C_{j+1}$-path-bounded w.h.p. for some constant $C_{j+1}$.

Finally, by~\cite[Definition 2.10(b)]{LS} and~\cite[Lemma 2.11]{LS} we have
\begin{equation}
    |B_{t_m}| \leq \sum_{\ell=1}^{g-2} P_\ell(G_{t_m}) \leq |W_{t_m}|^2n^{c_0-1}\log^{O(1)}(n).
\end{equation}
However, $|W_{t_m}| \leq n^\epsilon$, so $|B_{t_m}| = o(1)$. Since $|B_{t_m}|$ is a non-negative integer, $B_{t_m}=\emptyset$ w.h.p., implying that $G_{t_m}$ is safe in the sense of~\cite[Definition 2.4]{LS}. By~\cite[Lemma 2.5]{LS}, $G_t$ is safe for all $t > t_m$ and the process saturates. In particular, for any $t\in\{t_m,\ldots,n/2\}$ the graph $H_t$ is complete w.h.p., which implies \eqref{ls:eq:density}.

We have shown that \eqref{ls:eq:density} holds w.h.p. for each of the $m+2=O(1)$ intervals. By a union bound it holds simultaneously for every time $t$ in the round and for every vertex $v\in W_t$ w.h.p.
\end{proof}

\subsubsection{The matching-completion potential}
For even $v$ and $0\le k\le v/2$, set
\begin{equation}\label{ls:eq:phi}
 \Phi_k(v)\coloneqq\prod_{j=0}^{k-1}(v-2j-1)^{-1},\qquad \Phi_0(v)=1.
\end{equation}
This is the probability that a uniform perfect matching on $v$ vertices contains a specified matching of size $k$.

\begin{lemma}[Prescribed matching]\label[lemma]{lem:matching}
Consider a round on $n$ vertices in which the next edge is uniform among available pairs. For a specified matching $F$ of size $s\le n/2$,
\begin{equation}\label{ls:eq:matching}
 \Pr(F\subseteq M,\ \cref{ls:eq:density}\text{ holds throughout}\mid\text{past})
 \le \exp(C\zeta_n s\log n)\Phi_s(n),
\end{equation}
where $C$ is an absolute constant. The assertion is uniform in the starting graph and in $F$.
\end{lemma}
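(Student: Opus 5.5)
We will bound the probability step by step along the round, revealing the edges one at a time and multiplying conditional probabilities. We stop the process at the first time the density condition \cref{ls:eq:density} fails; on the event in \cref{ls:eq:matching} this never happens, so it suffices to bound the probability that the stopped process places every edge of $F$. At a state $t$ in which \cref{ls:eq:density} holds, the set $W_t$ has $w=|W_t|=n-2t$ vertices. The number of available pairs is then
\[
|\mathscr A_t|=\tfrac12\sum_{v\in W_t}d_{\mathscr A_t}(v)\ge(1-\zeta_n)\,w(w-1)/2 .
\]
Consequently any fixed pair is selected with probability at most $\bigl((1-\zeta_n)w(w-1)/2\bigr)^{-1}$.

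To turn this into the matching potential, we track the edges of $F$ that are still \emph{pending}, meaning both endpoints are in $W_t$ and the edge has not yet been added. If some edge of $F$ has an endpoint saturated by a different edge, the event fails, and we discard that trajectory. Let $j$ be the number of pending $F$-edges at state $t$. On the good trajectories the next step either adds a pending $F$-edge, with probability at most $j\cdot 2/((1-\zeta_n)w(w-1))$, or adds a non-$F$ edge. In the latter case that edge must avoid the $2j$ endpoints of pending $F$-edges, and the number of available pairs touching those endpoints is at least roughly $2j(1-\zeta_n)(w-1)-j^2$. We therefore aim to show, by backward induction on $(w,j)$, that the survival probability $S(w,j)$ satisfies
\[
S(w,j)\le (1-\zeta_n)^{-j}\,(1+O(\zeta_n))^{j}\,\Phi_j(w)\, e^{C\zeta_n j\log w}.
\]
The model is the exact recursion for a uniform random matching, built edge by edge: there a pending pair is hit with probability $2/(w(w-1))$ per pair, and the recursion is solved exactly by $\Phi_j(w)=\prod_{i<j}(w-2i-1)^{-1}$. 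Our transition probabilities differ from these by factors $1+O(\zeta_n)$ in each of the $O(n)$ steps. A naive product would therefore give $e^{O(\zeta_n n)}$, which is too lossy.

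The key step is to charge the errors only to the ``informative'' steps, which we expect to be the main obstacle. A cleaner route is to compare the hitting times directly. Conditional on survival, the probability that the next $F$-edge is added before any conflicting edge is the ratio of the number of pending $F$-edges to the number of available pairs meeting the pending endpoints. By the density bound, this ratio is at most
\[
\frac{j}{(1-\zeta_n)\,2j(w-1)-j^2}\approx \frac{1}{(1-\zeta_n)(2w-2-j)} .
\]
Steps that touch no pending endpoint are irrelevant, because they only decrease $w$. We therefore consider the embedded chain of steps that touch pending endpoints. At each such step, $j$ decreases by one on success, and the per-step ratio is at most $(1+2\zeta_n)/(w-1)$, evaluated at the current $w$.

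Multiplying over the $s$ successes, and summing over the values of $w$ at which they occur, reduces the bound to
\[
\sum \prod_{i}\frac{1+2\zeta_n}{w_i-1}
\]
over the configurations of which $W$-sizes host the $F$-edges. This sum is bounded by $(1+2\zeta_n)^s\Phi_s(n)$ times the correction factor coming from the harmonic-type sums in $w$. Those sums contribute $\prod(1+O(\zeta_n))^{\log n}$ per edge, which gives $\exp(C\zeta_n s\log n)$. Uniformity in the starting graph and in $F$ is automatic, because only \cref{ls:eq:density} is used. In writing this up, the delicate point is to make the $\log n$ per edge rigorous, using a telescoping comparison with the exact uniform-matching identity $\Phi_j(w)=\Phi_j(w-2)\cdot\frac{w-2j-1}{w-1}$ plus the hitting term.
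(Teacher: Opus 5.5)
The opening setup is fine: you stop at the first density failure and discard trajectories on which a pending endpoint is matched by a non-$F$ edge. But the ``cleaner route'' through the embedded chain of touching steps does not work, and the step you defer as ``the delicate point'' is where the whole proof lives.

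Non-touching steps are not irrelevant. Each one must avoid the $2j$ pending endpoints, and the product of these avoidance probabilities is what makes the answer of order $n^{-s}$. Read literally, your reduced sum $\sum\prod_i(1+2\zeta_n)/(w_i-1)$ carries no weights for when the touching steps occur; already for $s=1$ it is of order $\log n$ rather than $1/n$. Read as an expectation over the random touching times, it needs the law of those times in the perturbed process. Comparing each avoidance probability with its complete-graph value only up to a factor $(1-\zeta_n)^{-1}$ brings back exactly the $e^{O(\zeta_n n)}$ loss you set out to avoid.

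Independently, your per-touch bound is too weak. In the complete graph exactly $j(2w-2j-1)$ pairs meet the pending endpoints, not about $2j(w-1)-j^2$. So the success probability at a touching step is $1/(2w-2j-1)\approx 1/(2w)$ when $j\ll w$, and replacing it by $(1+2\zeta_n)/(w-1)$ costs a factor close to $2$ per edge. Even with the exact timing law of the complete graph, $s=1$ then gives about $2/(n-1)$. The lemma asserts $(1+o(1))/(n-1)$, since $\zeta_n\log n\to0$.

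The missing idea is that unavailable pairs disjoint from the pending endpoints cannot increase the drift. Hence the per-step excess is $O(\zeta_n k/v)$ rather than $O(\zeta_n)$. Concretely, take the potential $\Phi_k(v)$ for a state with $v$ free vertices and $k$ pending edges, set to zero after a wrong match or a density failure.
\begin{itemize}
\item Selecting a pending $F$-edge multiplies the potential by $v-1$.
\item Selecting a pair disjoint from the $2k$ protected endpoints multiplies it by $(v-1)/(v-2k-1)$.
\item In the complete graph these contributions sum to exactly $N=\binom v2$.
\end{itemize}
Suppose $M_0$ pairs are unavailable, $a_0$ of them prescribed and $b$ of them disjoint from the protected endpoints. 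The expected one-step ratio is then $\bigl(N-a_0(v-1)-b(v-1)/(v-2k-1)\bigr)/(N-M_0)$. Its excess over one is at most $(M_0-b)/(N-M_0)\le 2k\zeta_n(v-1)/\bigl((1-\zeta_n)N\bigr)\le 8\zeta_n k/v$. This holds because only the at most $2k\zeta_n(v-1)$ missing pairs at protected endpoints are uncompensated. Iterating over $v=n,n-2,\dots$ with $k\le s$ gives $\exp(8\zeta_n s\sum_v v^{-1})\,\Phi_s(n)$, which is the claim.

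Your backward induction on $(w,j)$ closes once you feed it this bound instead of a uniform $1+O(\zeta_n)$ per step. That is the telescoping comparison you mention in your last sentence, and it has to be carried out rather than deferred.
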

\begin{proof}
As long as no endpoint of a pending edge of $F$ has been matched incorrectly, assign the state with $v$ free vertices and $k$ pending edges the potential $\Phi_k(v)$. Otherwise assign it zero. Also set it to zero when the density condition first fails. This is a killed process, not a process conditioned on a future event.

Put $N=\binom v2$. Suppose $M_0$ pairs are missing, including $b$ pairs wholly outside the $2k$ protected endpoints and $a_0$ prescribed pairs. For $v>2k$, the expected next potential divided by the current one, before any additional killing, equals
\begin{equation}\label{ls:eq:driftratio}
 \frac{N-a_0(v-1)-b(v-1)/(v-2k-1)}{N-M_0}.
\end{equation}
Indeed, selecting a prescribed edge multiplies the potential by $v-1$, selecting an outside edge multiplies it by $(v-1)/(v-2k-1)$, and every other choice kills it. For the complete graph these contributions sum to $N$.

The excess of \cref{ls:eq:driftratio} above one is at most
\begin{equation}
 \frac{M_0-b}{N-M_0}
 \le\frac{2k\zeta_n(v-1)}{(1-\zeta_n)\binom v2}
 \le\frac{8\zeta_n k}{v}
\end{equation}
for $\zeta_n\le1/2$. When $v=2k$, the ratio is at most $(1-\zeta_n)^{-1}$, which obeys the same final bound. Killing can only decrease the expectation.

Now $k\le s$ and $\sum_{v=n,n-2,\ldots,2}v^{-1}=O(\log n)$. Iterating the one-step bound gives \cref{ls:eq:matching}, since the terminal potential is the indicator of the stated event.
\end{proof}

\subsubsection{Port labels and small subgraphs}
Use the $D$ symbols
\begin{equation}
 \mathcal P=\{H^+,H^-,M_1,\ldots,M_{D-2}\},
\end{equation}
with involution $\overline{H^+}=H^-$ and $\overline{M_i}=M_i$. Traversing an edge with symbol $a$ uses port $a$ at its initial vertex and port $\bar a$ at its terminal vertex. A labelling is \emph{admissible} if incident ports are distinct. This records which round supplied each edge, and the direction of Hamilton edges.

\begin{lemma}[Prescribed labeled subgraph]\label[lemma]{lem:subgraph}
There is $\xi_n=n^{-\Omega(1)}\log n$ such that every specified admissibly port-labelled simple graph $J$ on specified vertices, with $w\le n/4$ edges, satisfies
\begin{equation}\label{ls:eq:subgraph}
 \Pr(J\subseteq G_n,\mathcal A_n)
 \le\frac{e^{\xi_n w}}{(n-2w)^w}.
\end{equation}
Consequently, uniformly for $w\le K_0\log n$, for any fixed $K_0$,
\begin{equation}\label{ls:eq:smallsubgraph}
 \Pr(J\subseteq G_n,\mathcal A_n)\le(1+o(1))n^{-w}.
\end{equation}
A connected unlabelled $w$-edge graph has at most $D(D-1)^{w-1}$ admissible port-labellings.
\end{lemma}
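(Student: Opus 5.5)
The plan is to decompose $J$ by port label and use the Linial--Simkin decomposition \cref{eq:ls-decomposition}. Write $J=J_H\cup J_1\cup\cdots\cup J_{D-2}$, where $J_H$ collects the edges labelled $H^\pm$ and $J_i$ those labelled $M_i$. Admissibility makes each $J_i$ a matching, of size $s_i$ say. Since $H$ is a uniformly random directed Hamilton cycle (we randomized it at the start of the appendix), the probability that the directed path system $J_H$, with $s_H$ edges, lies in $H$ is at most $\prod_{j<s_H}(n-1-j)^{-1}\le(n-2w)^{-s_H}$. Such a system is a union of directed paths. A directed cycle can be fixed by $n$ ways of placing the first image vertex, and each subsequent prescribed successor then has conditional probability at most $1/(n-1-j)$.

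Next, condition on the past, round by round. For round $i$, \cref{lem:matching} bounds the conditional probability that $J_i\subseteq M_i$ while the density condition \cref{ls:eq:density} holds by $\exp(C\zeta_n s_i\log n)\Phi_{s_i}(n)$. Using $\Phi_{s}(n)=\prod_{j<s}(n-2j-1)^{-1}\le(n-2w)^{-s}$ for $s\le w\le n/4$, and noting that $\mathcal A_n$ is the intersection over rounds of the density events (so it can be killed round by round), the tower property multiplies these bounds. This gives \cref{ls:eq:subgraph} with $\xi_n=C\zeta_n\log n=n^{-a}\log n$ from \cref{lem:density}, since $\sum_i s_i+s_H=w$.

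\Cref{lem:matching} is stated for a round whose graph on $n$ vertices starts from scratch. Within round $i$, however, the available pairs exclude earlier edges, and this exclusion is already absorbed in the density bound. The step I expect to be delicate is checking that the lemma's uniformity in the starting graph really allows this chaining: the prescribed pairs of $J_i$ might be unavailable at the start of the round, which only kills the potential and so helps the bound. Once that is verified, the multiplication goes through.

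For \cref{ls:eq:smallsubgraph}, take $w\le K_0\log n$. Then $(n/(n-2w))^w=\exp(O(w^2/n))=1+o(1)$, and $e^{\xi_n w}=\exp(n^{-\Omega(1)}\log^2 n)=1+o(1)$, uniformly in $J$.

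For the label count, fix a root edge. Its label and direction together give at most $D$ choices of port at the root vertex. Then explore $J$ by a spanning-tree order in which each new edge shares a vertex with an already-labelled edge. At that shared vertex, one port is already used by the earlier edge, and the new edge must take a distinct port, leaving at most $D-1$ choices. The new edge's label then fixes the port at its other end via the involution. This gives at most $D(D-1)^{w-1}$ labellings; non-tree edges are counted by the same rule, so the bound still holds.
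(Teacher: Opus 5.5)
Your proposal is correct and follows essentially the same route as the paper: you bound the Hamilton arcs by $(n-s_H-1)!/(n-1)!\le(n-2w)^{-s_H}$, apply the prescribed-matching lemma round by round conditional on the preceding rounds (each denominator of $\Phi_s(n)$ is at least $n-2w$), and count labellings along an edge order in which each edge meets an earlier one. Two small points: the paper gets the Hamilton bound by contracting the prescribed arcs, which is cleaner than your somewhat garbled sequential-revelation sentence; and the chaining step you flag as delicate needs no extra verification, because the prescribed-matching lemma is explicitly stated to be uniform in the starting graph and in $F$.
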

\begin{proof}
The prescribed Hamilton arcs either are impossible or form vertex-disjoint directed paths. If their number is $b<n$, contracting them gives probability
\begin{equation}
 \frac{(n-b-1)!}{(n-1)!}\le(n-b)^{-b}.
\end{equation}
In each matching round the prescribed edges form a matching, or the probability is zero. Apply \cref{lem:matching} successively, conditional on the preceding rounds. Each denominator in $\Phi_s(n)$ is at least $n-2w$, and the sum of all prescribed edge counts is $w$. This proves \cref{ls:eq:subgraph}. The stated rate for $\xi_n$ gives \cref{ls:eq:smallsubgraph}.

For the last assertion, order the edges so that each edge after the first meets a preceding one. The first edge has $D$ possible symbols. Each subsequent edge has at least one forbidden port at a previously encountered vertex, and therefore at most $D-1$ choices.
\end{proof}

\subsection{Recovery with the full alphabet dependence}\label{ls:sec:recovery}
For recovery, each coordinate may have its own conditional nonzero-error distribution $\theta_e$, with the same corruption probability $p$. Let $\Theta\in(0,1]$ be a fixed upper bound on $\max_{e,a\ne0}\theta_e(a)$ along the graph-size sequence. The common channel in \cref{eq:ml-general-noise} has $\Theta=\theta_{\max}$. Evenness is not needed for the recovery criterion or its one-coordinate extension. A nonzero flow $f\in C^\perp$ is \emph{Hamming-favorable} if $\wt{E-f}\le\wt E$. If a Hamming-favorable flow is disconnected, at least one connected component of its support is Hamming-favorable. It is therefore enough to consider connected supports. We prove recovery by minimum Hamming weight, which also bounds the error probability of the optimal ML decoder.

\subsubsection{Pairwise comparison and a budget on the support excess}
Write
\begin{equation}\label{eq:ml-hamming-affinity}
 B_\Theta(p)=2\sqrt{(1-p)p\Theta}+p(1-\Theta).
\end{equation}
For a common even distribution $\theta$ and $0\le p\le1/2$, Cauchy--Schwarz on the terms with $z\notin\{0,a\}$ in \cref{eq:ml-general-Z-expanded} gives
\begin{equation}\label{eq:ml-affinity-comparison}
 2\sqrt{(1-p)p\theta_{\max}}
 \le Z_\theta(p)\le B_{\theta_{\max}}(p)\le Z_\theta(p)+p.
\end{equation}
Indeed, the remainder for a shift $a$ is at most $p(1-\theta(a))$, and $2\sqrt{(1-p)px}+p(1-x)$ is nondecreasing for $0\le x\le1$ when $p\le1/2$.
\begin{lemma}[Pairwise bound and syndrome budget]\label[lemma]{lem:budget}
    For $0<p<1/2$ and a fixed nonzero flow of weight $w$,
    \begin{equation}\label{ls:eq:pairwise}
     \Pr\{\wt{E-f}\le\wt E\}\le B_\Theta(p)^w.
    \end{equation}
    Let $u=Dp>0$. Uniformly over simple $D$-regular graphs, outside an event of probability at most
    \begin{equation}\label{ls:eq:budgettail}
     \eta_n=e^{-u^2n/D}+e^{-u^4n/(4D)},
    \end{equation}
    every favorable connected flow support has $w$ edges and $w-t$ vertices with
    \begin{equation}\label{ls:eq:budget}
     w\le2un,\qquad t\le2u^2n.
    \end{equation}
\end{lemma}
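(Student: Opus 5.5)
Two separate arguments are needed: a Bhattacharyya-type bound for the pairwise estimate \cref{ls:eq:pairwise}, and Chernoff bounds on the number of corrupted edges for the budget \cref{ls:eq:budget}.

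\emph{Pairwise bound.} Fix the flow $f$ with support $S$, $|S|=w$. Coordinates outside $S$ contribute equally to $\wt{E-f}$ and $\wt E$, so the event depends only on $E_S$. Write the event as $\sum_{j\in S}\bigl(\mathbf 1\{E_j\neq f_j\}-\mathbf 1\{E_j\neq0\}\bigr)\le0$. Apply the exponential Chernoff trick with parameter $1/2$ in the form
\[
\Pr\{\wt{E-f}\le\wt E\}\le\prod_{j\in S}\mathbb E\,\exp\Bigl(\tfrac{\lambda}{2}\bigl[\mathbf 1\{E_j\neq0\}-\mathbf 1\{E_j\neq f_j\}\bigr]\Bigr),
\qquad e^{\lambda}=\frac{1-p}{p\Theta}\ge1 .
\]
The factor for coordinate $j$ is computed by cases:
\begin{itemize}
\item $E_j=0$, with probability $1-p$, contributing weight $e^{-\lambda/2}$;
\item $E_j=f_j$, with probability $p\theta_j(f_j)\le p\Theta$, contributing weight $e^{\lambda/2}$;
\item any other value, with total probability at most $p(1-\theta_j(f_j))$, contributing weight $1$.
\end{itemize}
This gives at most $(1-p)e^{-\lambda/2}+p\theta_j(f_j)e^{\lambda/2}+p(1-\theta_j(f_j))$. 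Since $p\le1/2$ gives $e^{\lambda/2}\ge1$, the expression is nondecreasing in $\theta_j(f_j)$. We may therefore replace $\theta_j(f_j)$ by $\Theta$. With the chosen $\lambda$ the value becomes exactly $2\sqrt{(1-p)p\Theta}+p(1-\Theta)=B_\Theta(p)$. Multiplying over the $w$ coordinates gives \cref{ls:eq:pairwise}.

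\emph{Budget on $w$.} A Hamming-favorable $f$ satisfies $\wt{E-f}\le\wt E$. Each coordinate in $\supp f\setminus\supp E$ contributes $+1$ to $\wt{E-f}-\wt E$, and each coordinate in $\supp f\cap\supp E$ contributes at least $-1$. Hence $|\supp f\cap\supp E|\ge w/2$, so $w\le2\wt E$. Now $\wt E\sim\operatorname{Bin}(m,p)$ with mean $pm=un/2$. A Hoeffding bound gives $\Pr\{\wt E>un\}\le\exp(-2(un/2)^2/m)=e^{-u^2n/D}$. This accounts for the first term of $\eta_n$ and yields $w\le2un$.

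\emph{Budget on the excess $t$.} A connected support with $w$ edges and $w-t$ vertices has cycle rank $t+1$. Since every vertex of $\supp f$ has degree at least two, the excess is driven by high-degree vertices: $2t=\sum_v(\deg_S v-2)$. I would bound $t$ by the number of vertices at which the error has at least two nonzero incident edges. Every such vertex is a candidate branching point of a favorable flow. A favorable flow has at least half its edges corrupted, so its branch vertices should be charged to corrupted edge pairs.

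For the count, I would use the fact that, for a fixed vertex $v$, the number of corrupted edges at $v$ is $\operatorname{Bin}(D,p)$. Hence $\Pr\{\ge2\}\le\binom D2p^2\le u^2/2$. Summing $\binom{\deg_E v}{2}$ over vertices gives a quantity with mean at most $u^2n/2$. It is a function of independent edge variables, each affecting it by at most $2D$, so a bounded-differences inequality gives concentration below $u^2n$. I expect this to produce a tail of the form $e^{-u^4n/(4D)}$ after tuning constants, but this charging step is the main obstacle: relating $t$ for an arbitrary favorable flow to local corrupted-edge pairs. If the direct charging fails, I would instead contract the corrupted-edge forest and argue via the cycle rank of $\supp f\cap\supp E$ plus the uncorrupted edges, which number at most $w/2$.
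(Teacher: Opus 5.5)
Your pairwise bound and the weight budget are correct and follow the paper. You use the same tilt $\tfrac12\log\frac{1-p}{p\Theta}$ with monotonicity in $\theta_j(f_j)$, the same deterministic inequality $w\le2\wt E$, and the same $e^{-u^2n/D}$ tail.

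The gap is the excess budget $t\le2u^2n$, which you leave as an unresolved ``charging step''. This is the substance of the second half of the lemma: you need a bound on $t$ that depends on $E$ alone and holds simultaneously for every favorable connected flow.

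The charging you propose first, $t\le\#\{v:v\text{ meets at least two corrupted edges}\}$, is false. For even $D\ge6$, let $f$ be a sum of circulations on $D/2$ odd cycles through a vertex $v_0$ that pairwise meet only at $v_0$. Let $E$ agree with $f$ on the $D$ edges at $v_0$ and on alternate edges along each cycle, and vanish elsewhere. Then $\supp f$ is connected and $\wt{E-f}-\wt E=-D/2$, so $f$ is favorable. Also $2t=D-2$, yet $v_0$ is the only vertex meeting two corrupted edges. Branching of $\supp f$ can also occur at vertices meeting a single corrupted edge, so $t$ cannot be charged locally at the branch points of $f$. Your fallback is too weak as well: the uncorrupted edges of $f$ can number about $w/2\approx un$, a factor of order $1/u$ above the target $u^2n$.

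The missing idea is that $E$ and $E-f$ have the same syndrome. Put $R(x)=2\wt x-|V(\supp x)|=\sum_v(\deg_{\supp x}v-1)_+$.
\begin{itemize}
\item Since $\supp f\subseteq\supp E\cup\supp(E-f)$ and $(a+b-2)_+\le(a-1)_++(b-1)_+$, one has $2t=\sum_v(\deg_{\supp f}v-2)_+\le R(E)+R(E-f)$.
\item Every vertex with nonzero syndrome meets both supports, so $|V(\supp x)|\ge\wt{B^TE}$ for $x\in\{E,E-f\}$.
\item Together with $\wt{E-f}\le\wt E$, this gives $R(E),R(E-f)\le\mathcal D(E)\coloneqq2\wt E-\wt{B^TE}$. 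Hence $t\le\mathcal D(E)$ for all favorable flows at once.
\item A vertex meeting exactly one corrupted edge has nonzero syndrome, so $\mathbb E\,\mathcal D(E)\le nDp[1-(1-p)^{D-1}]\le u^2n$.
\item Changing one symbol moves $\mathcal D$ by at most $4$, so McDiarmid gives $\Pr\{\mathcal D(E)>2u^2n\}\le e^{-u^4n/(4D)}$. This is exactly the second term of $\eta_n$.
\end{itemize}
By contrast, your statistic $\sum_v\binom{\deg_E(v)}{2}$ has bounded differences of order $D$. A bounded-differences bound for it yields an exponent of order $u^4n/D^3$, which does not reproduce the stated $\eta_n$ for large $D$.
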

\begin{proof}
    On an edge where $f_e\ne0$, the score increment is
    \begin{equation}
     S_e=\mathbb{1}_{\{E_e=f_e\}}-\mathbb{1}_{\{E_e=0\}}.
    \end{equation}
    Its probabilities at $1$ and $-1$ are $p\theta_e(f_e)$ and $1-p$. Set
    \begin{equation}\label{eq:ml-hamming-tilt}
     t_\Theta=\frac12\log\frac{1-p}{p\Theta}>0.
    \end{equation}
    For every edge in the support of $f$,
    \begin{equation}
     \mathbb E e^{t_\Theta S_e}
     =(1-p)e^{-t_\Theta}
       +p\theta_e(f_e)e^{t_\Theta}+p(1-\theta_e(f_e))
     \le B_\Theta(p).
    \end{equation}
    The same tilt works for all flow values, so independence and Markov's inequality prove \cref{ls:eq:pairwise}.
    
    For a deterministic error $e$, put $\mathcal D(e)=2\wt e-\wt{B^Te}$. Let $e'=e-f$, and write
    \begin{equation}
     R(x)\coloneqq 2\wt x-|V(\supp x)|=\sum_v(\deg_{\supp x}(v)-1)_+.
    \end{equation}
    A nonzero flow support has minimum degree at least two. Since its edges are contained in $\supp e\cup\supp e'$, a vertexwise comparison gives
    \begin{equation}
     2t=\sum_v(\deg_{\supp f}(v)-2)_+\le R(e)+R(e').
    \end{equation}
    Every nonzero syndrome coordinate is incident to both error supports, and favorability gives $\wt{e'}\le\wt e$. Hence $R(e),R(e')\le\mathcal D(e)$, so $t\le\mathcal D(e)$.
    
    A vertex incident to exactly one corrupted edge has nonzero syndrome. Consequently, its probability $s$ of a nonzero syndrome satisfies
    \begin{equation}
     s\ge Dp(1-p)^{D-1},\qquad
     Dp-s\le Dp\bigl[1-(1-p)^{D-1}\bigr]
           \le D(D-1)p^2\le u^2.
    \end{equation}
    Therefore $\E\mathcal D(E)\le u^2n$, for every replacement distribution. Changing one error symbol changes $\mathcal D$ by at most four. The bounded-differences inequality gives the second exponential in \cref{ls:eq:budgettail}; applied to $\wt E$, it gives $\Pr(\wt E>un)\le e^{-u^2n/D}$. Finally $\wt f\le\wt E+\wt{E-f}\le2\wt E$.
\end{proof}

\subsubsection{A connected-flow enumerator}
Let $A_{w,t}(G)$ count nonzero flows whose connected support has $w$ edges and $w-t$ vertices. The next bound keeps the excess $t$ explicit.

\begin{lemma}\label[lemma]{lem:enumerator}
For $w\le n/4$, set $a=w/n$ and
\begin{equation}
 L_w=w\sqrt{\frac{2qw}{n-w}}.
\end{equation}
Then
\begin{equation}\label{ls:eq:enumerator}
 \E[\one_{\mathcal A_n}A_{w,t}]
 \le\frac{qD}{D-1}\,(D-1)^w e^{(4a+\xi_n)w}
       \frac{L_w^{2t}}{(2t)!}.
\end{equation}
\end{lemma}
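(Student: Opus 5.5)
We bound the expectation by a first-moment count over \emph{labelled} connected supports, feeding each candidate support into \cref{lem:subgraph}. A connected graph $J$ with $w$ edges and $w-t$ vertices has cyclomatic number $t+1$, so it is a spanning tree plus $t+1$ extra edges. The number of nonzero flows supported exactly on a given connected support is at most $(q-1)^{t+1}\le q^{t+1}$, since the cycle space has dimension $t+1$ over $\mathbb F_q$. One factor $q$ goes into the prefactor $qD/(D-1)$, and the remaining $q^t$ will be absorbed into $L_w^{2t}$ (note $L_w^{2t}$ contains $(2q)^t$). Thus we need
\[
\E[\one_{\mathcal A_n}\#\{\text{connected subgraphs with }w\text{ edges}, w-t\text{ vertices}\}] \lesssim \frac{D}{D-1}(D-1)^w e^{(4a+\xi_n)w}\frac{(w^3/(n-w))^t 2^t}{(2t)!}.
\]

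To count embedded subgraphs, we encode each one by a rooted exploration. Choose a root vertex ($n$ ways) and an admissible port-labelled closed structure. Concretely, a connected graph with minimum degree $\ge2$ (a flow support) and excess $t$ is a ``kernel'' multigraph with at most $2t$ vertices of degree $\ge3$ and at most $3t$ threads, with paths subdivided. Instead of kernel bookkeeping, I would use an exploration that walks along edges choosing ports. The $w$ edges are revealed in an order where each new edge leaves a visited vertex through one of at most $D-1$ unused ports, giving the $D(D-1)^{w-1}$ factor of \cref{lem:subgraph}. Each edge either reaches a new vertex, whose identity is free among fewer than $n$ labels and costs a factor $n$, or closes onto an already visited vertex. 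Exactly $t+1$ edges are closing edges. Their identities are chosen among at most $w$ visited vertices, and their positions in the edge sequence among $\binom{w}{t+1}$ slots. Multiplying by the probability bound $e^{\xi_n w}(n-2w)^{-w}$ from \cref{ls:eq:subgraph}, the vertex factor $n^{w-t}$ against $(n-2w)^{-w}$ leaves $(n-2w)^{-t}\,(n/(n-2w))^{w-t}\le (n-2w)^{-t}e^{4aw}$ (using $w\le n/4$). The closing-edge choices give roughly $w^{t+1}\binom{w}{t+1}\le w\cdot w^{2t}/t!\cdots$. One closing edge pairs with the root and is absorbed by the factor $1/n$ from the root choice, which cancels the extra vertex factor.

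The main obstacle is sharpening this combinatorial factor to exactly $L_w^{2t}/(2t)!$, i.e. $(2qw^3/(n-w))^t/(2t)!$ rather than $w^{2t}\binom{w}{t}/(n-2w)^t$. Note $\binom{w}{t}w^t/t!$ roughly matches $w^{2t}\cdot 2^{t}\cdot\text{const}/(2t)!$ via $(2t)!\le 4^t t!^2$, so it loses only constants $2^t$ absorbed into the $2^t$ inside $L_w^{2t}$. I would therefore organise the count by choosing the $2t$ positions (in the spanning-walk order) of ``branch/closing events'' as an unordered set, giving $w^{2t}/(2t)!$. Each of the $t$ closing targets then costs $w$ choices, and each closing edge costs $1/(n-w)$ in probability. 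This yields $(w^3/(n-w))^t/(2t)!$ times $2^t q^t$ from port/orientation ambiguity and flow values. Finally, I would check that the extra factor $(D-1)$ deficits from closing edges (which need no new port choice) only help, and that the $e^{4aw}$ correction from $n/(n-2w)$ is as stated. Summing nothing further is needed, since the bound is for fixed $(w,t)$.
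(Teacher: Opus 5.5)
\textbf{There is a genuine gap} at the step you yourself call the main obstacle. Nothing in the proposal produces the factor $1/(2t)!$, and the accounting you do give is off by much more than a constant per unit of excess.

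In your exploration, a new edge ``leaves a visited vertex through one of at most $D-1$ unused ports''. The port does not say \emph{which} visited vertex the edge leaves from. A connected minimum-degree-two support explored by walks has $t$ restarts. Paying up to $w$ for each restart source, $\binom{w}{t+1}$ for the closing positions, and $w$ for each closing target gives a count of order $w^{3t}/(t!\,n^t)$; this is your $w^{2t}\binom{w}{t}/(n-2w)^t$. That exceeds $(2qw^3/(n-w))^t/(2t)!$ by roughly $(2t)!/(t!\,(2q)^t)$, which grows superexponentially in $t$. Your comparison through $(2t)!\le4^t(t!)^2$ covers only $\binom{w}{t}w^t/t!$, and nothing in your count supplies the extra $1/t!$.

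This matters. In \cref{prop:criterion} the enumerator is summed over $t$ using $\sum_t L_w^{2t}/(2t)!\le e^{L_w}$, where $L_w=w\sqrt{2qa/(1-a)}$ is linear in $w$. A $1/t!$ in its place gives $e^{L_w^2}$, which is quadratic in $w$, and the union bound fails once $w\gg\sqrt n$. Your final reorganization (an unordered set of $2t$ event positions, then $w$ per closing target) states the right total, but it is not an encoding. It does not say how the restart sources, the closing positions, or the roles of the $2t$ events are recovered.

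\textbf{The missing idea} is to charge the $2t$ excess half-edges as an \emph{unordered} distribution over the $s=w-t$ vertices. The paper does this with degree sequences.
\begin{itemize}
\item For $d_1,\dots,d_s\ge2$ with $\sum_i d_i=2w$, pairing half-edges gives at most $(2w-1)!!/\prod_i d_i!$ simple realizations.
\item The coefficientwise bound $\sum_{d\ge2}z^d/d!\preceq\frac{z^2}{2}e^z$ gives $\sum_{\mathbf d}\prod_i 1/d_i!\le 2^{-s}[z^{2t}]e^{sz}=2^{-s}s^{2t}/(2t)!$. This is where the $(2t)!$ comes from.
\item Multiply by $q^{t+1}$ flows, by $\binom{n}{w-t}$ vertex sets, by the connected port-labelling bound $D(D-1)^{w-1}$, and by the probability $e^{\xi_n w}(n-2w)^{-w}$ from \cref{lem:subgraph}. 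The port-labelling bound is kept separate from the structural count rather than merged into an exploration.
\item Then $\binom{n}{w-t}/\binom{n}{w}\le(w/(n-w))^t$, $\binom{n}{w}(2w-1)!!\,2^{-w}\le n^w$, $-\log(1-2a)\le4a$ and $s\le w$ give \cref{ls:eq:enumerator} exactly.
\end{itemize}

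If you want to keep an exploration, it can be repaired as follows. List the degrees in discovery order, which costs at most $\binom{s+2t-1}{2t}$ choices. Always continue from a newly found vertex, and restart deterministically from the most recent vertex that still has unused half-edges. Restart sources are then forced. A closing edge can only land on the root or on one of the at most $2t$ vertices of degree at least three, so each target costs $O(t)$ rather than $w$. The $w^t$ then comes from the $\binom{w}{t+1}$ closing positions. This yields a bound of the same shape, $(Cqw^3/n)^t/(2t)!$, but with a larger constant $C$ and possibly an extra polynomial prefactor. Matching the stated constants would still need further work.
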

\begin{proof}
Put $s=w-t$. A connected support has flow-space dimension $w-s+1=t+1$, so it supports at most $q^{t+1}$ flows. For degrees $d_1,\ldots,d_s\ge2$ with sum $2w$, pairing half-edges bounds the number of simple realizations by $(2w-1)!!/\prod_i d_i!$. Coefficientwise,
\begin{equation}
 \sum_{d\ge2}\frac{z^d}{d!}\preceq\frac{z^2}{2}e^z,
\end{equation}
so summing over these degree sequences costs at most
\begin{equation}
 (2w-1)!!\,2^{-s}\frac{s^{2t}}{(2t)!}.
\end{equation}
Apply \cref{lem:subgraph}, including its connected port-labelling bound, to obtain
\begin{equation}\label{ls:eq:rawenumerator}
 \E[\one_{\mathcal A_n}A_{w,t}]
 \le q^{t+1}D(D-1)^{w-1}
 \frac{e^{\xi_n w}}{(n-2w)^w}
 \binom n{w-t}(2w-1)!!\,2^{-(w-t)}
 \frac{(w-t)^{2t}}{(2t)!}.
\end{equation}
We applied the colouring bound only to connected supports; subsequently allowing all pairings merely overcounts them. Now use
\begin{equation}
 \frac{\binom n{w-t}}{\binom nw}\le\left(\frac{w}{n-w}\right)^t,
 \qquad
 \frac{\binom nw(2w-1)!!2^{-w}}{n^w}\le1,
 \qquad -\log(1-2a)\le4a.
\end{equation}
These inequalities turn \cref{ls:eq:rawenumerator} into \cref{ls:eq:enumerator}.
\end{proof}

\subsubsection{The reliability criterion}
Write $R=(D-1)B_\Theta(p)$ and define
\begin{equation}\label{ls:eq:parameters}
 u=Dp,\quad K=2qu^2,\quad
 a_*=\frac{4K^{2/3}}q,\quad
 \Psi(a)=4a+\sqrt{\frac{2qa}{1-a}},\quad
 b_0=\frac{1+\log8}{2}.
\end{equation}

\begin{proposition}\label[proposition]{prop:criterion}
For fixed $p>0$, recovery is reliable if
\begin{equation}\label{ls:eq:criterion}
 u<\frac18,\quad a_*\le\frac14,\qquad
 \log R+\Psi(a_*)<0,\qquad
 \log R+8u+b_0K^{1/3}<0.
\end{equation}
\end{proposition}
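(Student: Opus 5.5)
The plan is a first-moment union bound over Hamming-favorable connected flows, split by support size $w$. The first step is to reduce to connected supports, as noted above. I then restrict to the good event $\mathcal A_n$ of \cref{lem:density}, whose complement has probability $o(1)$. I also intersect with the complement of the budget event of \cref{lem:budget}, which costs at most $\eta_n\to0$. On the remaining event every favorable connected flow has $w\le 2un$ edges and excess $t\le 2u^2n$. Since $u<1/8$, this gives $w\le n/4$, so \cref{lem:enumerator} applies. Because the girth is at least $g(n)$, we also have $w\ge g(n)$. The failure probability is therefore at most
\begin{equation*}
 o(1)+\eta_n+\sum_{w=g}^{2un}\sum_{t\le\min(w,2u^2n)} B_\Theta(p)^w\,\E[\one_{\mathcal A_n}A_{w,t}].
\end{equation*}
Here I use \cref{ls:eq:pairwise} and independence of the graph from the noise. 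By \cref{ls:eq:enumerator} each summand is at most $\tfrac{qD}{D-1}R^w e^{(4a+\xi_n)w}L_w^{2t}/(2t)!$.

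Next I split the $w$-range at $a_*n$. For small supports, $w\le a_* n$, I sum over all $t$ and use $\sum_t L^{2t}/(2t)!\le e^{L}$. With $L_w=w\sqrt{2qa/(1-a)}$ and $a=w/n$, the term becomes at most $\exp\bigl(w[\log R+\Psi(a)+\xi_n]\bigr)$. Since $\Psi$ is increasing, $\Psi(a)\le\Psi(a_*)$, and the third condition gives a uniform negative exponent $-\kappa w$. The sum over $w\ge g(n)$ is then $O(e^{-\kappa g})=o(1)$, because $g\to\infty$.

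For large supports, $a_*n<w\le 2un$, I use the budget $t\le 2u^2n=Kn/q$. The map $t\mapsto L^{2t}/(2t)!$ is increasing for $2t\le L$, and the ratio $L_w^2/(2t)$ is large here, so the sum over $t$ is at most $(2u^2n+1)$ times the term at $t_{\max}=Kn/q$. Stirling then gives
\begin{equation*}
 \log\frac{L_w^{2t}}{(2t)!}\le 2t\log\frac{eL_w}{2t}.
\end{equation*}
Substituting $L_w\le w\sqrt{8qa}$, $2t\le 2Kn/q$ and $w=an$, this term is a multiple of $w$. The threshold $a_*=4K^{2/3}/q$ is chosen so that the resulting per-edge exponent is at most $b_0K^{1/3}$: the factor $\log 8$ in $b_0$ comes from the $\sqrt{8}$, and the $1$ comes from the $e$ in Stirling. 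I also bound $4a\le 8u$ using $a\le 2u$. Combined with the fourth condition, this makes the exponent negative linearly in $w\ge a_*n$, so this range contributes $e^{-\Omega(n)}$ after absorbing the polynomial prefactors and $\xi_n$.

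The main obstacle is the bookkeeping in the large-$w$ regime. There I must check that the monotonicity in $t$ actually holds up to $t_{\max}$, or else handle the case $2t_{\max}>L_w$ separately using $L^{2t}/(2t)!\le e^{L}$ with the third condition. I must also check that the constant from optimizing $2t\log(eL/2t)/w$ over $a\ge a_*$ really is $b_0K^{1/3}$. I would first try the direct substitution $a=a_*$ as the worst case, since the bound decreases in $a$ for fixed $t_{\max}$. Finally, minimum-weight decoding fails only if a favorable flow exists, so the same bound controls ML failure and gives reliability.
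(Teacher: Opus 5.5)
Your skeleton matches the paper's proof. You restrict to $\mathcal A_n$ and to the complement of the budget event, so the budget only limits the summation indices. You take a union bound of \cref{ls:eq:pairwise} against the enumerator and split at $w=a_*n$. In the small range you use $\sum_t L_w^{2t}/(2t)!\le e^{L_w}$, $\Psi(a)\le\Psi(a_*)$ and $w\ge g\to\infty$.

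In the large range the paper bounds the truncated series by $h^{-2T_0}e^{hL_w}$ with $h=K/y^{3/2}$ and $y=qa$. You instead take the number of terms times the largest Stirling bound $(eL_w/2t)^{2t}$. These are the same estimate. Minimizing $h^{-2T}e^{hL}$ over $h$ gives exactly $(eL/2T)^{2T}$, attained at $h=2T/L$, which is the paper's $h$. Your version costs an extra $O(n)$ factor, which is harmless because $w\ge a_*n$.

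\textbf{The gap is in your constant.} With $L_w\le w\sqrt{8qa}$ you get $eL_w/(2t_{\max})\le e\sqrt2\,y^{3/2}/K$. At $y=qa_*=4K^{2/3}$ the per-edge exponent is then $\tfrac12K^{1/3}\bigl(1+\log8+\tfrac12\log2\bigr)$, which exceeds $b_0K^{1/3}$. The fourth condition in \cref{ls:eq:criterion} has no slack for the extra $\tfrac14K^{1/3}\log2$, so the argument as written does not close. The $\log 8$ in $b_0$ does not come from $\sqrt8$; it is $\log(y^{3/2}/K)$ at the endpoint $y=4K^{2/3}$, i.e.\ it comes from the choice of $a_*$.

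\textbf{The fix.} Use $a\le 2u<1/4$, so $2qa/(1-a)\le 4qa$ and $L_w\le 2w\sqrt y$. Together with $2t_{\max}\le 2Kw/y$, the per-edge exponent becomes $\frac{2K}{y}\bigl(1+\log\frac{y^{3/2}}{K}\bigr)=2K^{1/3}s^{-2/3}(1+\log s)$ with $s=y^{3/2}/K\ge 8$. This is decreasing in $s$ and equals $b_0K^{1/3}$ at $s=8$. Adding $4a\le 8u$ and $\xi_n\to0$ then matches the fourth condition exactly.

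\textbf{The monotonicity worry goes away.} From $L_w\ge w\sqrt{2y}$ you get $L_w/(2t_{\max})\ge y^{3/2}/(\sqrt2 K)\ge 4\sqrt2$. Hence $x\mapsto x\log(eL_w/x)$ is increasing up to $x=2t_{\max}$, and the max-term bound is legitimate. Your proposed fallback via $e^{L_w}$ and the third condition would not have worked in this range anyway, since $\Psi(a)>\Psi(a_*)$ for $a>a_*$.

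\textbf{One last step.} The ensemble is conditioned on successful completion, so divide your unconditioned bound by $\Pr(\mathcal E_n)=1-o(1)$.
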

\begin{proof}
On the graph event $\mathcal A_n$, apply \cref{ls:eq:pairwise} to the enumerator. For $w\le a_*n$, summing over all excesses costs at most $e^{L_w}$, giving
\begin{equation}
 \sum_t\E[\one_{\mathcal A_n}A_{w,t}]B_\Theta(p)^w
 \le\frac{qD}{D-1}
       \exp\{w[\log R+\Psi(a_*)+\xi_n]\}.
\end{equation}
All nonzero flows have weight at least $g\to\infty$, so the sum of these bounds tends to zero.

For the remaining weights allowed by \cref{ls:eq:budget}, put $a=w/n$ and $y=qa$. Since $a\le2u<1/4$, we have $L_w\le2w\sqrt y$. Set $T_0=\lfloor2u^2n\rfloor$ and $h=K/y^{3/2}$. For $a\ge a_*$, $h\le1/8$, and
\begin{align}
 \sum_{t=0}^{T_0}\frac{L_w^{2t}}{(2t)!}
 &\le h^{-2T_0}e^{hL_w},\\
 \frac1w\log(h^{-2T_0}e^{hL_w})
 &\le\frac{2K}{y}\left(1+\log\frac{y^{3/2}}K\right)
 \le b_0K^{1/3}.\label{ls:eq:truncation}
\end{align}
The last function is decreasing for $y\ge4K^{2/3}$, and its value at that endpoint is $b_0K^{1/3}$. Thus each remaining weight contributes at most
\begin{equation}
 \frac{qD}{D-1}\exp\{w[\log R+8u+b_0K^{1/3}+\xi_n]\}.
\end{equation}
Summing from $w\ge a_*n$ gives an exponentially small bound. If that range is empty, nothing is required there.

Finally add $\Pr(\mathcal A_n^c)=o(1)$ and the channel-budget probability \cref{ls:eq:budgettail}, and divide by $\Pr(\mathcal E_n)=1-o(1)$. No independence between the budget and favorability has been assumed: the budget was used only to restrict the indices of a union bound.
\end{proof}

\begin{lemma}[Recovery after one-coordinate changes]
\label{lem:robust-minweight-recovery}
Under \cref{ls:eq:criterion}, with probability tending to one over the Linial--Simkin graph and the errors, every word at Hamming distance at most one from $E$ is the unique minimum-weight representative of its syndrome. In particular, a fixed lexicographic minimum-weight syndrome decoder corrects all these words.
\end{lemma}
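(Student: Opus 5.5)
The plan is to rerun the union bound behind \cref{prop:criterion}, replacing ``Hamming-favorable'' by a one-coordinate-robust version of favorability. A word $E'$ with $\wt{E'-E}\le1$ fails to be the unique minimum-weight representative of its syndrome only if some nonzero flow $f$ satisfies $\wt{E'-f}\le\wt{E'}$. As in the main argument, passing to a connected component of $\supp f$ preserves this inequality, so it suffices to consider flows with connected support.

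Since $E'$ and $E$ differ in at most one coordinate $j$, the triangle inequality gives $\wt{E-f}\le\wt{E'-f}+1$ and $\wt{E'}\le\wt E+1$. Any such $f$ therefore satisfies the relaxed condition
\[
\wt{E-f}\le\wt E+2.
\]
The event to control is thus the existence of a nonzero connected flow $f$ that is \emph{2-favorable} for $E$ in this sense. This formulation removes any dependence on the adversarial coordinate and value.

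\textbf{Step 1: pairwise bound.} Following \cref{lem:budget}, write $\wt{E-f}-\wt E=-\sum_{e\in\supp f}S_e$. The event becomes $\sum_e S_e\ge-2$. Markov's inequality with the same tilt $t_\Theta$ gives
\[
\Pr\Bigl\{\sum_e S_e\ge-2\Bigr\}\le e^{2t_\Theta}B_\Theta(p)^w=\frac{1-p}{p\Theta}\,B_\Theta(p)^w.
\]
The prefactor is independent of $n$, which matches the extra factor seen in \cref{eq:lp-robust-finite-bound}.

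\textbf{Step 2: budget.} Redo the support-excess bound. Here $\wt{E-f}\le\wt E+2$, and the syndrome of $E-f$ is the same as that of $E$. The vertexwise comparison then gives $R(E-f)\le\mathcal D(E)+4$, hence $t\le\mathcal D(E)+2$, and also $\wt f\le2\wt E+2$. On the same high-probability event as in \cref{ls:eq:budgettail}, this yields
\[
w\le2un+2,\qquad t\le2u^2n+2.
\]
These additive constants only shift $T_0$ by $2$ in the truncation estimate \cref{ls:eq:truncation}. That changes the bound by a factor $h^{-4}$, which is polynomial in $n$ and therefore dominated by the exponentially decaying terms for $w\ge a_*n$.

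\textbf{Step 3: summing.} The small-weight range $w\le a_*n$ is treated as before. Summing $\E[\one_{\mathcal A_n}A_{w,t}]$ times the pairwise bound, starting from $w\ge g\to\infty$, gives a geometric tail multiplied by the constant $(1-p)/(p\Theta)$. This still tends to zero under \cref{ls:eq:criterion}. For the large-weight range, we add $\Pr(\mathcal A_n^c)$ and the budget failure probability, then divide by $\Pr(\mathcal E_n)$.

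On the complementary event, no nonzero connected flow is 2-favorable. Every $E'$ within distance one of $E$ is then the strict unique minimizer of weight in its coset, so the lexicographic minimum-weight decoder returns it.

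The main obstacle is Step 2. The budget argument uses that every nonzero syndrome vertex touches both supports and that $\wt{E'}\le\wt E$. With slack $2$, each slack unit adds a constant to $R$, so care is needed that the excess bound degrades only additively. The first check is whether $2t\le R(E)+R(E-f)$ survives unchanged, which should hold because it is a purely vertexwise statement about supports.
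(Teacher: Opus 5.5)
Your proposal is correct and follows essentially the same route as the paper. Both relax favorability to $\wt{E-f}\le\wt E+2$; the paper phrases this as $Q_f$ changing by at most two under a one-coordinate change, which is equivalent to your triangle-inequality step. Both then pay the fixed factor $e^{2t_\Theta}$ in the tilted Markov bound, enlarge the budget to $w\le 2un+2$ and $t\le 2u^2n+2$, and rerun the union bound of \cref{prop:criterion}.

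The one detail you leave implicit, which the paper states, is that the enlarged weight range still lies below $n/4$ for large $n$ because $u<1/8$. This keeps \cref{lem:enumerator} and the estimate $L_w\le 2w\sqrt y$ applicable. The $O(1/n)$ shift in the $4a$ term is then only $o(1)$ per edge.
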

\begin{proof}
For a connected nonzero flow $f$, write $Q_f(E)=\wt E-\wt{E-f}$. Changing one error coordinate changes $Q_f$ by at most two. If a word at distance at most one from $E$ is not uniquely minimum-weight, a connected component of a competing flow therefore satisfies $Q_f(E)\ge-2$. The common tilt in \cref{eq:ml-hamming-tilt} gives
\begin{equation}\label{eq:ml-near-favorable}
 \Pr\{Q_f(E)\ge-2\}\le e^{2t_\Theta}B_\Theta(p)^w.
\end{equation}
For such a flow, $\wt{E-f}\le\wt E+2$. The deterministic comparisons in \cref{lem:budget} now give
\begin{equation}
 w\le2\wt E+2,\qquad
 t\le\mathcal D(E)+2.
\end{equation}
Outside the same exceptional event of probability $\eta_n$, the budget is therefore $w\le2un+2$, $t\le2u^2n+2$. Repeat the union bound in \cref{prop:criterion}. The factor $e^{2t_\Theta}$ is fixed, and the additive changes to the budget contribute $o(1)$ to the exponent per edge. The two strict exponent inequalities in \cref{ls:eq:criterion} still make the sum tend to zero. Since $u<1/8$, the enlarged weight range lies below $n/4$ for all sufficiently large $n$.
\end{proof}

\begin{proof}[Proof of the lower bound in \cref{thm:ml-small-alphabet}]
Return to a common even distribution $\theta$, so $\Theta=\theta_{\max}$. Let $\gamma=q/D^{2/3}\le1/32$, put $\delta=\delta_\mathrm{cyc}(D,\theta)$, and set $p=(1-16\gamma)\delta$. Since $q\ge2$, we have $D\ge512$. The lower bound in \cref{eq:ml-affinity-comparison} and $\theta_{\max}\ge1/(q-1)$ show that
\begin{equation}\label{eq:ml-general-small-cycle}
 \delta\le\frac q{D^2}<\frac12,\qquad
 (D-1)Z_\theta(\delta)=1.
\end{equation}
Indeed, at $p=q/D^2$ the lower bound $2\sqrt{p(1-p)/(q-1)}$ is larger than $1/(D-1)$. Consequently,
\begin{equation}\label{ls:eq:elementary}
 u\le\gamma D^{-1/3}\le\frac1{256},\qquad
 K^{1/3}<\frac43\gamma,\qquad
 a_*\le\frac{64\gamma^2}{9q}\le\frac1{288}.
\end{equation}
By \cref{eq:ml-affinity-comparison}, $(D-1)B_\Theta(\delta)\le1+(D-1)\delta$. Comparing the square-root and linear terms in \cref{eq:ml-hamming-affinity} gives
\begin{align}
 \frac{B_\Theta((1-16\gamma)\delta)}{B_\Theta(\delta)}
 &\le\sqrt{\frac{1-16\gamma}{1-\delta}},\\
 \log R
 &\le-8\gamma+\frac{\delta}{2(1-\delta)}+(D-1)\delta
 <-7\gamma.
\end{align}
Here $\delta\le\gamma D^{-4/3}$ and $(D-1)\delta\le\gamma D^{-1/3}\le\gamma/8$. The remaining terms in \cref{ls:eq:criterion} obey
\begin{equation}
 \frac{4a_*}{\gamma}\le\frac49,\qquad
 \frac1\gamma\sqrt{\frac{2qa_*}{1-a_*}}<3.8,\qquad
 \Psi(a_*)<5\gamma,
\end{equation}
and
\begin{equation}
 8u+b_0K^{1/3}
 \le\gamma+\frac43b_0\gamma<4\gamma.
\end{equation}
Thus \cref{ls:eq:criterion} holds with strict slack, giving recovery and the stronger conclusion of \cref{lem:robust-minweight-recovery}. All its left-hand sides are nondecreasing functions of $p$ in this interval, so the result also holds at every smaller positive $p$. At $p=0$, growing girth gives the conclusion directly.

Finally, at the cycle crossing, \cref{eq:ml-affinity-comparison} gives
\begin{equation}
 \frac{1}{D-1}
 =2\sqrt{\delta(1-\delta)\theta_{\max}}+O(\delta).
\end{equation}
Uniformly for $q=o(D^{2/3})$, \cref{eq:ml-general-small-cycle} implies $(D-1)\delta=o(1)$ and hence $\delta=(1+o(1))/(4\theta_{\max}D^2)$. Combining the lower bound just proved with the strong converse proves \cref{eq:ml-small-asymptotic}.
\end{proof}

\begin{lemma}[From channel reliability to a cutoff]
\label{lem:robust-channel-cutoff}
Fix independent nonzero-symbol distributions $\theta_1,\ldots,\theta_m$ and use the lexicographically first minimum-Hamming-weight representative of each syndrome. Let $R_k$ be the probability under $P_k$ that the decoder fails on at least one word at Hamming distance at most one from the sampled error. Then $R_k$ is nondecreasing in $k$. For independent errors with $\Pr(E_i=0)=1-p'$ and $\Pr(E_i=a)=p'\theta_i(a)$, write $R(p')$ for the same failure probability. If $K\sim\operatorname{Bin}(m,p')$, then
\begin{equation}\label{eq:robust-channel-cutoff-transfer}
 \max_{k\le\ell}R_k=R_\ell
 \le\frac{R(p')}{\Pr(K\ge\ell)}.
\end{equation}
In particular, for fixed $p<p'$, a vanishing $R(p')$ implies uniformly vanishing $R_k$ through $\ell=\lfloor pm\rfloor$.
\end{lemma}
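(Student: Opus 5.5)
The plan has two parts: a monotonicity argument in the weight $k$, proved by a coupling that adds one error coordinate at a time, and then a mixture identity that averages $R_k$ over the binomial weight.

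\textbf{Step 1: reduce failure to a monotone event.} Fix a syndrome. The decoder outputs the lexicographically first minimum-weight representative, so it fails on a word $y$ exactly when some other word $y'$ with the same syndrome has $|y'|<|y|$, or has $|y'|=|y|$ and precedes $y$ lexicographically. Failure on some neighbour of $e$ is therefore the event that the set $N(e)=\{e+z:|z|\le1\}$ contains a word that is not the lex-first minimum of its coset. I expect to show that this event is monotone under adding a further nonzero coordinate to $e$. Suppose $y\in N(e)$ fails, with a witness codeword $f$ such that $y-f$ beats $y$. Let $e'$ agree with $e$ except that coordinate $j\notin\operatorname{supp}e$ is set to some nonzero $b$. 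Then $y+b\mathbf e_j$ (or $y$ itself, if $y$ already changed coordinate $j$) lies in $N(e')$. The competitor $y+b\mathbf e_j-f$ differs from $y-f$ only at coordinate $j$. The weight comparison changes by at most the contribution of that coordinate, and this contribution is equal on both sides unless $f_j\ne0$.

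This is where I expect the main obstacle. A strict weight win could become a tie, and the lexicographic tie-break could then flip. A cleaner route may be to express $R_k$ through the condition used in Lemma~\ref{lem:robust-minweight-recovery}, namely $Q_f(E)\ge-2$ for some connected flow $f$. Under the coupling, $Q_f$ is nondecreasing when zero coordinates become nonzero: a new coordinate $j$ with $f_j\ne0$ changes $\wt E$ by $+1$ and $\wt{E-f}$ by at most $+1$. But that condition is only sufficient for failure, not equivalent to it. So I would first try to prove exact monotonicity of the decoder event itself via the coupling below, and fall back to monotonicity of the sufficient event only if that fails.

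\textbf{Step 2: the coupling.} Realize $P_{k+1}$ from $P_k$ by choosing a uniform extra coordinate outside the support and drawing its value from the corresponding $\theta_j$. This produces exactly $P_{k+1}$, because the uniform $k$-set plus a uniform extra element is a uniform $(k+1)$-set, and the values are independent. Monotonicity of the failure event along this coupling then gives $R_k\le R_{k+1}$, and hence $\max_{k\le\ell}R_k=R_\ell$.

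\textbf{Step 3: the transfer inequality.} Conditional on $K=k$, the channel error at rate $p'$ has law $P_k$, since the support is uniform given its size and the values are independent. Therefore $R(p')=\sum_k\Pr(K=k)R_k$. Dropping the terms with $k<\ell$ and using monotonicity gives
\[
R(p')\ge\sum_{k\ge\ell}\Pr(K=k)R_k\ge R_\ell\Pr(K\ge\ell),
\]
which is \eqref{eq:robust-channel-cutoff-transfer}.

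\textbf{Step 4: the final claim.} For fixed $p<p'$ and $\ell=\lfloor pm\rfloor$, the law of large numbers (or Chernoff) gives $\Pr(K\ge\ell)\to1$. A vanishing $R(p')$ therefore forces $\max_{k\le\ell}R_k=R_\ell\to0$, uniformly in $k\le\ell$.
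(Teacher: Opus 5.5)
\textbf{Gap: the key monotonicity step is not proved.} Your Steps 2--4 are exactly the paper's argument: the one-coordinate-at-a-time coupling, the mixture identity $R(p')=\sum_k\Pr(K=k)R_k$, and binomial concentration. So is the case split in Step 1 (use $y+b\mathbf e_j$, or $y$ itself if $y$ already changed coordinate $j$). But the one nontrivial step is left unproved. That step is showing that the \emph{exact} lexicographic-decoder failure event persists when a zero coordinate of the error becomes nonzero. Instead you flag an obstacle and offer a fallback.

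The fallback does not prove the lemma as stated. The event $\{\exists f:\,Q_f(E)\ge-2\}$ contains the failure event; it is necessary for failure, not sufficient. Its monotonicity would therefore only bound $R_\ell$ by that larger event's probability at rate $p'$, not by $R(p')$. It also would not show that $R_k$ itself is nondecreasing.

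\textbf{How to close it.} The obstacle you anticipate does not occur, and the computation you already did for $Q_f$ settles it. Take the remaining case $y_j=0$, let $f\in\ker B^T\setminus\{0\}$ witness that $y-f$ beats $y$, and put $y'=y+b\mathbf e_j$.
\begin{itemize}
\item If $f_j=0$: coordinate $j$ adds one to both $|y'|$ and $|y'-f|$. Also, $y'$ and $y'-f$ differ exactly on $\supp f$, with the same values there as $y$ and $y-f$. So both the weight comparison and the lexicographic comparison are unchanged.
\item If $f_j\ne0$: $|y'|=|y|+1$, while $|y'-f|\le|y-f|$, because the nonzero entry $(y-f)_j=-f_j$ is replaced by $b-f_j$. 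The competitor's weight advantage therefore grows by at least one.
\end{itemize}
So any win, whether strict or by lexicographic tie-break, becomes a strict weight win; a strict win can never degrade to a tie. Hence $y'$ fails with the same witness $f$.

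This is the contrapositive of the paper's ``deletion closure'' of the selected representatives, and of the set of words all of whose one-coordinate neighbours are selected. With it, your proposal is complete.
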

\begin{proof}
The selected representatives are closed under deleting nonzero coordinates. Indeed, replacing a nonzero coordinate by zero increases the weight disadvantage of any competing representative, unless the competing word agrees on that coordinate. In the latter case the lexicographic comparison is unchanged. The stronger set of words whose one-coordinate neighbors are all selected is also closed under deletion: a change on a different coordinate commutes with deletion, and a change on the deleted coordinate is already a neighbor of the original word.

To couple $P_k$ and $P_{k+1}$, first draw independent nonzero symbols from the $\theta_i$, choose a uniform ordering of the coordinates, and retain the first $k$ or $k+1$ symbols. Deletion closure gives $R_k\le R_{k+1}$. Conditioning the independent channel on $K$ gives $R(p')=\mathbb E R_K\ge R_\ell\Pr(K\ge\ell)$.
\end{proof}


\subsection{Planting a small matching early in a round}
\label{ls:sec:planting}
The converse needs a lower probability bound. Throughout this section, $H$ is an arbitrary $(D-1)$-regular graph of girth at least $g$. Choose an even integer
\begin{equation}\label{ls:eq:earlyclock}
 m=n^{(1+c)/2+o(1)},\qquad T=(n-m)/2,
\end{equation}
and let $M_T$ be the matching selected in the first $T$ steps of the final round. A degree-$D$ graph has at most $C_D(D-1)^g=O_D(n^c)$ vertices in a ball of radius $g-2$. Since $m\gg n^c$, the process cannot freeze before $T$.

\begin{lemma}[Early matching probabilities]\label[lemma]{lem:planting}
Uniformly over $H$ and specified matchings $F$ with $s=|F|\le K_0\log n$,
\begin{equation}\label{ls:eq:earlyupper}
 \Pr(F\subseteq M_T\mid H)\le(1+o(1))n^{-s}.
\end{equation}
If in addition $H+F$ is simple and has girth at least $g$, then
\begin{equation}\label{ls:eq:planting}
 \Pr(F\subseteq M_T\mid H)=(1+o(1))n^{-s}.
\end{equation}
Here $K_0$ is any fixed constant.
\end{lemma}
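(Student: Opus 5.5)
The plan is to run the potential-function argument of \cref{lem:matching} over only the first $T$ steps of the final round. Then I add a matching lower bound, using the fact that early in the round almost every pair is available.

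For the state with $v$ free vertices and $k$ pending edges of $F$, I take the potential $\Phi_k(v)$ from \cref{ls:eq:phi}. The potential is set to zero once an endpoint of $F$ is matched incorrectly. At time $T$ there remain $v_T=m$ free vertices. The event $F\subseteq M_T$ then equals the event that every prescribed edge has been chosen by time $T$, which is the event that the potential has reached $\Phi_0=1$ with no pending edges.

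The precise statement I need is
\begin{equation}
 \Pr(F\subseteq M_T\mid H)=\E\bigl[\Phi_{k_T}(m)^{-1}\cdot\text{potential at }T\bigr]\big/\ldots
\end{equation}
Rather than work with that identity, I will directly bound the one-step drift ratio \cref{ls:eq:driftratio}.

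\textbf{Uniform density up to time $T$.} At every $t\le T$ we have $|W_t|\ge m$. Each vertex has at most $C_D(D-1)^{g}=O(n^c)$ forbidden partners within distance $g-2$. So deterministically $d_{\mathscr A_t}(v)\ge(1-\zeta)(|W_t|-1)$ with $\zeta=O(n^{c}/m)=n^{-(1-c)/2+o(1)}$, and no killing for density is needed.

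\textbf{Upper bound.} The excess of the drift ratio over one is at most $8\zeta k/v$, exactly as in \cref{lem:matching}. Summing over $v=n,n-2,\dots,m$ gives a factor $\exp(O(\zeta s\log n))=1+o(1)$, since $s\le K_0\log n$. The potential at time $T$ is at least $\Phi_{k_T}(m)$ on the event that no prescribed edge was killed, and it equals $1$ on $F\subseteq M_T$. This gives
\begin{equation}
 \Pr(F\subseteq M_T)\le(1+o(1))\Phi_s(n)=(1+o(1))n^{-s},
\end{equation}
because $\Phi_s(n)=n^{-s}(1+O(s^2/n))$. I will not need the full potential at time $T$, only its indicator part. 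More cleanly, I will use a potential that counts completion by time $T$, namely the probability that a uniform random matching process of $(n-v_T)/2$ steps contains the pending edges. This equals $\Phi_k(v)$ times a hypergeometric-type ratio, $\prod_{j}(\text{steps left})$-factors. To avoid that complication, I would define the potential as $\Psi_k(v)\coloneqq\Pr[\text{specified }k\text{-matching }\subseteq\text{first }(v-m)/2\text{ edges of a uniform sequential matching on }v\text{ vertices}]$. Under the complete-graph dynamics $\Psi$ is an exact martingale. The drift argument transfers, giving $\Pr\le(1+o(1))\Psi_s(n)$. Finally I need $\Psi_s(n)=\Phi_s(n)\,\Pr[\text{the }s\text{ edges fall among the first }(n-m)/2]=(1+o(1))n^{-s}$, because $(n-m)/2$ out of $n/2$ is a fraction $1-o(1)$ and $s=O(\log n)$.

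\textbf{Lower bound.} This is the main obstacle. Here the drift ratio must be bounded below, so I need that the prescribed pairs themselves remain available and that few "outside" pairs are missing. Missing pairs outside the protected endpoints only raise the ratio. The step that can lower it is $a_0>0$, meaning a prescribed pair that is unavailable. The hypothesis that $H+F$ is simple with girth at least $g$ gives availability at time $0$. The danger is that edges added later create a short path between the endpoints of a pending prescribed pair.

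I would control this as follows. The probability that some pending pair becomes unavailable before $T$ is at most $s$ times the probability that a given pair $\{x,y\}$ at distance $\ge g-1$ is brought within distance $g-2$. That requires an added edge to join the balls of radius $<g$ around $x$ and $y$. The balls have size $O(n^c)$ and grow only by $o(1)$ factors, so each step does this with probability $O(n^{2c}/m^2)$ per step. Summed over $T\le n$ steps this is $O(n^{1+2c}/m^2)=n^{-o(1)}\cdot n^{c-c}$. With $m=n^{(1+c)/2+o(1)}$ this is $n^{-o(1)}$, which is borderline. I would therefore take the $o(1)$ in \cref{ls:eq:earlyclock} positive, say $m=n^{(1+c)/2}\log n$, so that the failure probability is $o(1/\log n)$, and multiply by $s$.

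On the complementary event, each step has ratio at least $1-O(\zeta k/v)$. Killing occurs with small probability, and iterating the multiplicative drift, for example via the Doob decomposition, yields $\Pr(F\subseteq M_T)\ge(1-o(1))n^{-s}$, which proves \cref{ls:eq:planting}.
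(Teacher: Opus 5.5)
\textbf{Upper bound.} This part is sound and is essentially the paper's argument. You run the potential of \cref{lem:matching} up to time $T$ with missing degree $O(n^c)$. Your uniform $\zeta=O(n^c/m)$ is cruder than the paper's per-step $O(kn^c/v^2)$, but it still gives $1+o(1)$. Nonnegativity of the potential at $T$ then does the job of the paper's uniform completion of the residual matching. One minor correction: missing outside pairs \emph{lower} the $\Phi$-ratio, since each carries weight $(v-1)/(v-2k-1)>1$. The effect is only $O(kn^c/v^2)$ per step, so it is harmless.

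\textbf{Lower bound: the estimate.} Here there is a genuine gap, starting with the number you rely on. With $m^2=n^{1+c+o(1)}$ one gets $n^{1+2c}/m^2=n^{c+o(1)}\to\infty$, not $n^{-o(1)}$. Changing $m$ by a $\log n$ factor cannot absorb a polynomial loss. Even the correctly summed bound $\sum_{v\ge m}n^{2c}/v^2\approx n^{2c}/m=n^{(3c-1)/2+o(1)}$ diverges for $c>1/3$. The cause is the count $|B(x)||B(y)|=O(n^{2c})$. It ignores that an added edge $ab$ makes $\{x,y\}$ unavailable only if $\delta(x,a)+1+\delta(b,y)\le g-2$. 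With that constraint only $O_D(g(D-1)^g)=O(gn^c)$ pairs qualify.

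\textbf{Lower bound: the structural problem.} An unconditional bound $\Pr(\text{some pending pair becomes unavailable})=o(1)$ cannot be subtracted from an event of probability $n^{-s}$. Moreover, ``the complementary event'' is a global event, not a per-step condition on which a multiplicative drift can be iterated.

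The bad event has to be controlled under the measure tilted toward $F$, where the prescribed edges \emph{are} inserted. There you must also rule out that inserting one prescribed edge makes another pending prescribed pair unavailable. Such a step kills mass of order at least $1/v$ rather than $O(gn^c/v^2)$. It is invisible in an unconditional estimate because prescribed edges are rarely chosen. This is where the full hypothesis $\girth(H+F)\ge g$ is needed, not merely availability at time $0$.

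If you use $\Phi$, note also that its terminal value $\Phi_{k_T}(m)$ is not the indicator of $F\subseteq M_T$. A lower bound on its expectation alone therefore does not bound $\Pr(F\subseteq M_T)$ from below.

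\textbf{How your route can be repaired.} Use $\Psi$, whose terminal value is exactly $\mathbb 1\{F\subseteq M_T\}$. Kill the potential as soon as some pending $xy$ has $\delta_{G_t+F_{\mathrm{pend}}-xy}(x,y)\le g-2$. Then prescribed insertions never kill, and the initial state is alive precisely by the girth hypothesis. In every alive state the killed fraction of next steps is $O_D(kgn^c/v^2)$, because $G_t+F_{\mathrm{pend}}$ has maximum degree $D$. These fractions sum to $O(sgn^c/m)=n^{-(1-c)/2+o(1)}=o(1)$, which gives the lower bound with $\Psi_s(n)=(1+o(1))n^{-s}$.

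\textbf{The paper's route.} The paper argues differently. It plants $F$ at uniformly random times and shows the original-to-planted likelihood ratio is $(1+o(1))n^{-s}$ on valid trajectories. It then bounds invalidity under the planted measure by counting $x$--$y$ paths in $(J-xy)+U$, with weight $1/n$ per unforced edge, which gives $O_D(sg(D-1)^g/n)$.
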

\begin{proof}
    The upper bound follows from the potential in \cref{lem:matching}, using the state-dependent missing-degree bound $C_Dn^c$. Stop at $T$, then complete the residual matching uniformly without a girth constraint for this calculation only. The cumulative multiplicative error is
    \begin{equation}\label{ls:eq:earlydrift}
     \exp\left(O_D\left(s n^c\sum_{v\ge m}v^{-2}\right)\right) = \exp(O_D(sn^c/m))=1+o(1).
    \end{equation}
    Together with $\Phi_s(n)=(1+o(1))n^{-s}$ this proves \cref{ls:eq:earlyupper}.
    
    For the lower bound, we use a planted process. Choose a uniform injection from the $s$ labelled edges of $F$ to the first $T$ selection times. Insert each prescribed edge at its assigned time. At every other step select a uniform available pair avoiding all endpoints of still-pending edges of $F$. For purposes of defining a probability measure, insert a forced edge even when it is unavailable; call such a trajectory \emph{invalid}. Forced endpoints are either protected or already matched, so no unforced edge ever touches $V(F)$. Even on invalid trajectories the graph is simple and has maximum degree $D$, and the unforced choices exist because at least $m-2s\gg n^c$ outside vertices remain free.

    \emph{Likelihood ratio.} A valid original trajectory containing $F$ determines its forcing schedule uniquely. At a state with $v$ free vertices and $k$ pending prescribed edges, let
    \begin{equation}
     N=\binom v2,\quad N^{\circ}=\binom{v-2k}{2},\qquad
     A=|\mathscr A_t|,\quad A^{\circ}=\#\{\text{available outside pairs}\}.
    \end{equation}
    The original-to-planted likelihood ratio on that trajectory is
    \begin{equation}\label{ls:eq:likelihood-product}
     (T)_s\prod_{\text{forced times}}\frac1A
           \prod_{\text{unforced times}}\frac{A^{\circ}}A.
    \end{equation}
    In the complete graph, this product telescopes to
    \begin{equation}\label{ls:eq:reference}
     p_{\mathrm{ref}}=\frac{(T)_s2^s}{(n)_{2s}}=(1+o(1))n^{-s},
    \end{equation}
    independently of the schedule. For example, the identity follows by writing each outside ratio as $(v-2k)(v-2k-1)/(v(v-1))$; its numerator cancels successive outside-vertex counts.
    
    Put $\Delta=C_Dn^c$. At a forced step, replacing $N$ by $A$ changes the logarithm of the product by $O_D(\Delta/v)$. At an unforced step write $A=N-M$ and $A^{\circ}=N^{\circ}-M^{\circ}$. Since
    \begin{equation}
        M-M^{\circ}\le2k\Delta,\qquad M^{\circ}\le(v-2k)\Delta/2,
    \end{equation}
    we have
    \begin{equation}
        \left|\log\frac{A^{\circ}/A}{N^{\circ}/N}\right| = O_D(k\Delta/v^2).
    \end{equation}
    Summing these estimates shows uniformly over valid trajectories that
    \begin{equation}\label{ls:eq:ratio}
        \frac{\Pr_{\mathrm{original}}(\text{trajectory})}{\Pr_{\mathrm{planted}}(\text{trajectory})} = p_{\mathrm{ref}}\exp(O_D(sn^c/m))=(1+o(1))n^{-s}.
    \end{equation}
    
    \emph{Validity of the planted process.} Let $U$ be its unforced matching. For any prescribed matching $U_0$ of $t=O(\log n)$ edges outside $V(F)$,
    \begin{equation}\label{ls:eq:unforced}
        \Pr_{\mathrm{planted}}(U_0\subseteq U)\le(1+o(1))n^{-t}.
    \end{equation}
    To see this, fix the forcing schedule and apply the matching potential on the $n-2s$ outside vertices. Forced steps leave that potential unchanged; every unforced step has missing degree at most $\Delta$, and at least $m-2s$ outside vertices remain. The accumulated error is $\exp(O_D(t\Delta/(m-2s)))=1+o(1)$. Uniform artificial completion of the outside matching gives \cref{ls:eq:unforced}; averaging over schedules preserves it. No assertion about the girth of a possibly invalid trajectory is needed for this degree bound.

    Set $J=H+F$. If a forced edge $xy$ is unavailable when inserted, then $(J-xy)+U$ contains a simple $x$--$y$ path of length at most $g-2$. Such a path must use at least one edge of $U$, since $\girth(J)\ge g$. We bound all these paths by their \emph{last} unforced edge.
    
    For precision, give each unforced edge in a path weight $1/n$, and each edge of $J-xy$ weight one. The total weight of nonbacktracking prefixes of length $a$ is at most $C_D(D-1)^a$. At a vertex outside $V(F)$, arrival along an old edge leaves at most $D-2$ old continuations and one unforced port, whose destinations have total weight at most one; arrival along an unforced edge leaves $D-1$ old continuations and no second unforced port. At a vertex of $V(F)$ there are at most $D-1$ old continuations and no unforced port. These observations prove the prefix bound by induction, even if global self-avoidance constraints are dropped.
    
    A suffix of length $b$ after the last unforced edge lies in $J-xy$ and has at most $C_D(D-1)^b$ choices when explored backwards from $y$. The intervening unforced edge is fixed by the two endpoints and contributes $1/n$. Summing over its position, and using \cref{ls:eq:unforced} for each simple path, bounds the probability of an alternate path of length $\ell$ by
    \begin{equation}
        O_D(\ell (D-1)^{\ell}/n).
    \end{equation}
    The prescribed unforced edges of a simple path form a matching; paths that violate this contribute zero. Summing over $\ell\le g-2$ and over the $s$ forced edges gives
    \begin{equation}\label{ls:eq:invalid}
        \Pr_{\mathrm{planted}}(\text{invalid}) \le O_D(sg (D-1)^g/n)=n^{c-1+o(1)}=o(1).
    \end{equation}
    Finally sum \cref{ls:eq:ratio} over valid trajectories. \Cref{ls:eq:reference,ls:eq:invalid} prove the lower bound.
\end{proof}

We also record an upper bound for the partial graph used in the converse. Let $\mathcal B_n$ denote the density event for the first $D-3$ matching rounds (the sure event when $D=3$), and let
\begin{equation}
    G_n^T \coloneqq H\cup M_1\cup\cdots\cup M_{D-3}\cup M_T.
\end{equation}
Combining the proof of \cref{lem:subgraph} for the preceding rounds with \cref{ls:eq:earlyupper} for the final prefix gives, for every specified port-labelled $J$ with $w\le K_0\log n$,
\begin{equation}\label{ls:eq:prefix-subgraph}
    \Pr(J\subseteq G_n^T,\mathcal B_n)\le(1+o(1))n^{-w}.
\end{equation}
The error is uniform in $J$, including disconnected $J$. Only upper bounds are being multiplied here, conditional on the graph already constructed.

\subsection{A supply of logarithmic-length cycles}\label{ls:sec:cycles}
Choose once and for all
\begin{equation}\label{ls:eq:L}
    c<\kappa<1,\qquad L = 2\lfloor\kappa\log_{D-1} n/2\rfloor,\qquad s_0 = \left\lceil\frac{3\log L}{\log((D-1)/(D-2))}\right\rceil.
\end{equation}
Then $s_0=O_D(\log\log n)$, $2s_0<g$, and $L\ge g$ for large $n$.

Condition on the $(D-1)$-regular graph $A$ present before the final round. A \emph{candidate cycle} consists of $k$ oriented paths in $A$, of lengths
\begin{equation}\label{ls:eq:runs}
    1\le b_i\le s_0,\qquad \sum_{i=1}^k(b_i+1)=L,
\end{equation}
whose vertex sets are pairwise at distance at least $g$, joined cyclically by $k$ proposed final-round matching edges. Two descriptions differing only by a cyclic starting path or reversal describe the same candidate.

The joining edges form a matching $F$, and $A+F$ has girth at least $g$. Indeed, an old path between distinct path groups has length at least $g$. A shorter new cycle would therefore have to travel within each group between its two endpoints and traverse the whole ring of joining edges. Each chosen path is the unique shortest connection between its endpoints, because its length is at most $s_0<g/2$. Such a ring has length at least $L\ge g$.

\begin{lemma}[Candidate-cycle first moment]\label[lemma]{lem:cyclemean}
    Uniformly over $A$, the expected number of candidate cycles whose joining edges lie in $M_T$ is
    \begin{equation}\label{ls:eq:cyclemean}
     (1+o(1))\frac{(D-1)^L}{2L}.
    \end{equation}
\end{lemma}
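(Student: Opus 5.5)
By linearity of expectation, the mean equals a sum over candidate cycles $Y$ of $\Pr(F_Y\subseteq M_T\mid A)$, where $F_Y$ is the joining matching of $Y$. Each $F_Y$ has $k\le L/2=O(\log n)$ edges, and, as shown just before the lemma, $A+F_Y$ is simple with girth at least $g$. So the planting estimate \cref{ls:eq:planting} applies uniformly and gives $\Pr(F_Y\subseteq M_T\mid A)=(1+o(1))n^{-k}$. The lemma therefore reduces to a counting statement: the weighted count $\sum_Y n^{-k(Y)}$ must be $(1+o(1))(D-1)^L/(2L)$, uniformly over $(D-1)$-regular $A$ of girth at least $g$.

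I will count ordered descriptions and then divide by $2k$ for cyclic shift and reversal. Since the joining edges form a ring of length $k\ge 2$ on a cycle of length $L\ge g>2$, each candidate has exactly $2k$ descriptions. Fix a composition $(b_1,\dots,b_k)$ of the form \cref{ls:eq:runs}. Choose the starting vertex of path $1$ in $n$ ways. A non-backtracking oriented path of length $b_i$ from a given start has $(D-1)(D-2)^{b_i-1}$ choices, since the first step is free among $D-1$ edges and later steps avoid the reverse edge. The girth condition makes every such walk of length $\le s_0<g/2$ a simple geodesic.

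The joining edge from the end of path $i$ to the start of path $i+1$ is specified by a new start vertex. This gives about $n$ choices for each of paths $2,\dots,k$, with the cycle closed by the last joining edge back to the first start. That yields $n^{k}$ vertex choices in total, matching the $n^{-k}$ from planting. The separation constraint forbids only $O(k\,s_0 (D-1)^{g})=n^{c+o(1)}$ vertices per choice. It therefore costs a factor $(1-n^{c-1+o(1)})^{k}=1+o(1)$, and the join-pair distinctness costs are similar. Hence the weighted count is
\[
(1+o(1))\sum_{k}\frac{1}{2k}\sum_{(b_i)}\prod_{i=1}^k (D-1)(D-2)^{b_i-1}.
\]

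Now compare with $(D-1)^L/(2L)$. Each run of length $b_i$ together with its joining step has weight $(D-1)(D-2)^{b_i-1}$, while an unrestricted non-backtracking step count gives $(D-1)^{b_i+1}$. Summing over run lengths $b\ge1$, the generating function gives
\[
\sum_{b\ge1}(D-1)(D-2)^{b-1}x^{b+1}=\frac{(D-1)x^2}{1-(D-2)x}.
\]
Setting this equal to one shows that $x=1/(D-1)$ is the dominant singularity. This is the standard decomposition of non-backtracking closed walks at the final round, with one joining port among $D$ and old steps among $D-2$ continuations. The transfer theorem then gives, for the sequence over $k$ weighted by $1/k$, the asymptotic $(D-1)^L/L$ times $1+o(1)$.

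The truncation $b_i\le s_0$ loses a relative amount of about $L\,((D-2)/(D-1))^{s_0}\le L^{-2}$, which is $o(1)$ by the choice of $s_0$ in \cref{ls:eq:L}. The $1/(2k)$ versus $1/(2L)$ bookkeeping is resolved by cyclically rooting at one of the $L$ cycle positions instead of one of the $k$ runs: the rooted sequences number $L\cdot[x^L]$ divided by $L$. Finally, $L$ is even, which avoids any parity issue.

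The main obstacle is this exact combinatorial constant. Getting $(D-1)^L/(2L)$ precisely, and not just up to a constant, requires the rooting argument and the generating-function asymptotic to align exactly. I would verify the identity
\[
\sum_k\frac{L}{k}\,[x^L]\Bigl(\frac{(D-1)x^2}{1-(D-2)x}\Bigr)^k\sim(D-1)^L
\]
via the log-derivative form $-\log\bigl(1-\phi(x)\bigr)$ with $\phi(x)=\frac{(D-1)x^2}{1-(D-2)x}$. Here $1-\phi(x)$ has a simple zero at $x=1/(D-1)$, and the other root of $1-\phi(x)=0$ is $x=-1$, giving a parity correction $(-1)^L$ that is absent because $L$ is even. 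This step also confirms the combinatorial model used.
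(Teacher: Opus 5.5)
Your proposal is correct and follows the paper's proof essentially step for step. The shared steps are: linearity of expectation with the planting estimate $(1+o(1))n^{-k}$; the count $n^k\prod_i(D-1)(D-2)^{b_i-1}$ of ordered path tuples with an $o(1)$ separation loss; division by the $2k$ descriptions; and extraction of $[x^L]\bigl(-\tfrac12\log(1-B(x))\bigr)$, with a one-long-run bound of relative size $O(L[(D-2)/(D-1)]^{s_0})$ for the truncation.

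Two small slips, neither affecting the result. First, the forbidden ball around each earlier vertex has radius about $g+s_0$, not $g$; it still contains $n^{c+o(1)}$ vertices because $(D-1)^{s_0}$ is polylogarithmic. Second, the exact coefficient is $((D-1)^L+(-1)^L-(D-2)^L)/L$, so the $(-1)^L$ and $-(D-2)^L$ terms are present but negligible rather than absent.
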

\begin{proof}
    For $b\le s_0$, the number of oriented length-$b$ paths in $A$ is exactly $n(D-1)(D-2)^{b-1}$. There is no self-intersection at these lengths. When choosing the paths in \cref{ls:eq:runs}, the separation condition discards a fraction at most
    \begin{equation}
        O_D(L^2(D-1)^{g+s_0}/n)=n^{c-1+o(1)}=o(1).
    \end{equation}
    For example, if a new path comes within distance $g-1$ of an earlier one, its starting vertex is within distance $g+s_0$ of an already selected vertex. A ball count gives the displayed bound.
    
    For a fixed ordered tuple of paths, \cref{lem:planting} gives joining probability $(1+o(1))n^{-k}$. This cancels the $n^k$ in the path count. Each candidate has exactly $2k$ ordered, oriented descriptions.
    
    Without the upper cutoff on run lengths, introduce
    \begin{equation}
     B(x)=\sum_{b\ge1}(D-1)(D-2)^{b-1}x^{b+1}
         =\frac{(D-1)x^2}{1-(D-2)x}.
    \end{equation}
    The resulting weighted count is
    \begin{align}
        [x^L]\frac12\sum_{k\ge1}\frac{B(x)^k}{k}
        &= [x^L]\left[-\frac12\log(1-B(x))\right]\notag\\
        &=\frac{(D-1)^L+(-1)^L-(D-2)^L}{2L} = (1+o(1))\frac{(D-1)^L}{2L}.\label{ls:eq:cycle-series}
    \end{align}
    Here $1-(D-2)x-(D-1)x^2=(1-(D-1)x)(1+x)$.
    
    To justify truncation, let
    \begin{equation}
        B_{>s_0}(x)=\frac{(D-1)(D-2)^{s_0}x^{s_0+2}}{1-(D-2)x}.
    \end{equation}
    Marking one long run bounds the discarded coefficient by
    \begin{equation}
        [x^L]\frac{B_{>s_0}(x)}{2(1-B(x))} \le C_D(D-1)^L\left(\frac{D-2}{D-1}\right)^{s_0}.
    \end{equation}
    Its ratio to \cref{ls:eq:cycle-series} is $O_D(L[(D-2)/(D-1)]^{s_0})=o(1)$. All errors above are uniform over the path lengths and over $A$, proving \cref{ls:eq:cyclemean}.
\end{proof}

\subsection{The strong converse}\label{ls:sec:converse}
Write $\nu=\nu_p$ for the noise distribution in \cref{eq:ml-general-noise}. Let the independent edge errors have this common distribution, with $\nu(z)=\nu(-z)$, and choose $a\ne0$ maximizing
\begin{equation}\label{ls:eq:converse-Z}
    Z=\sum_{z\in\F_q}\sqrt{\nu(z)\nu(z-a)},\qquad R=(D-1)Z>1.
\end{equation}
For each simple cycle, fix an orientation using its labeled edges and assign the circulation $\mathbf f$ with coefficients $\pm a$. We call the cycle favorable when its likelihood ratio satisfies
\begin{equation}\label{ls:eq:cycle-likelihood-ratio}
 L_C(\mathbf E)=\prod_{e\in C}
 \frac{\nu(E_e-f_e)}{\nu(E_e)}\ge1.
\end{equation}
The denominators are positive almost surely; a zero numerator sets the ratio to zero. Evenness of $\nu$ makes the score distributions for coefficients $a$ and $-a$ identical. Thus the probability $p_L$ of favorability depends only on the cycle length $L$.

\begin{lemma}[Single-cycle and joint score bounds]\label[lemma]{lem:scores}
For some fixed $C_0>0$ and all sufficiently large even $L$,
\begin{equation}\label{ls:eq:scorelower}
 Z^L L^{-C_0}\le p_L\le Z^L.
\end{equation}
For two distinct length-$L$ cycles sharing $j$ edges,
\begin{equation}\label{ls:eq:jointscore}
    \Pr\{L_{C_1}(\mathbf E)\ge1,\ L_{C_2}(\mathbf E)\ge1\} \le Z^{2(L-j)}.
\end{equation}
\end{lemma}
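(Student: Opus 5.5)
The plan is to treat the single-cycle score as a sum of independent log-likelihood ratios along the $L$ edges, and to get each bound from a Chernoff-type argument. Since all coefficients are $\pm a$ and $\nu$ is even, the single-cycle problem reduces to one i.i.d. sum. For each edge define
\[
X_e=\log\frac{\nu(E_e-f_e)}{\nu(E_e)},
\]
allowing the value $-\infty$. These are i.i.d. with a common law that does not depend on the sign of $f_e$. Favorability is the event $\sum_e X_e\ge0$.

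The upper bound $p_L\le Z^L$ is the Bhattacharyya/Markov step already used in the text. We have $\mathbb E e^{X_e/2}=\sum_z\sqrt{\nu(z)\nu(z-a)}=Z$. Hence
\[
\Pr\Big\{\sum X_e\ge0\Big\}\le\big(\mathbb E e^{X_e/2}\big)^L=Z^L.
\]

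For the lower bound I would use exponential tilting at $\lambda=1/2$. Under the tilted law $\tilde\nu(z)\propto\sqrt{\nu(z)\nu(z-a)}$, the map $z\mapsto a-z$ sends $X$ to $-X$. This uses evenness: $\nu(a-z)=\nu(z-a)$ and $\nu(-z)=\nu(z)$. So $X$ is symmetric under the tilted measure, has mean zero and finite variance, and takes finitely many values. The standard change of measure gives
\[
p_L=Z^L\,\tilde{\mathbb E}\big[e^{-S/2}\mathbb 1_{\{S\ge0\}}\big],\qquad S=\sum X_e.
\]
Restricting to $0\le S\le\sigma\sqrt L$ and using symmetry, $\tilde\Pr(S\ge0)\ge1/2$. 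Together with a local CLT or Berry–Esseen bound for the lattice or non-lattice finite-valued case, this gives $\tilde\Pr(0\le S\le C)\gtrsim L^{-1/2}$. For non-lattice values I would instead accept the window $[0,\log L]$, which costs a factor $L^{-1/2}\cdot L^{-1/2}$ from $e^{-S/2}$. Either way the loss is polynomial in $L$, giving $Z^LL^{-C_0}$ with, say, $C_0=2$.

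The degenerate case where $X$ is identically $0$ under the tilt still has $p_L=1=Z^L$. The evenness of $L$ matters only in an atom/parity case, where $S$ lies on a lattice shifted by $L$ times a constant. Even $L$ guarantees that $0$ is reachable after symmetrization. I expect this lattice/parity bookkeeping to be the main obstacle; I would handle it by splitting on whether the support of $X$ generates a lattice, and in the lattice case apply the local limit theorem at the lattice point nearest $0$ within $O(1)$.

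For the joint bound, take cycles $C_1,C_2$ sharing an edge set $J$ with $|J|=j$. On the event, both sums are $\ge0$, so
\[
\mathbb 1\le e^{S_1/2}\,e^{S_2/2}\cdot\big(\text{factor on }J\big).
\]
I would instead bound $\mathbb 1\{S_1\ge0,S_2\ge0\}\le e^{(S_1+S_2)/2}$ only after splitting by edges:
\[
\mathbb E\, e^{(S_1+S_2)/2}=Z^{2(L-j)}\prod_{e\in J}\mathbb E e^{(X^{(1)}_e+X^{(2)}_e)/2}.
\]
On a shared edge the two shifts are $\pm a$ (either $f^{(1)}_e=f^{(2)}_e$ or they are opposite). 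The factor is then either $\sum_z\nu(z-a)=1$ or $\sum_z\sqrt{\nu(z-a)\nu(z+a)}\le1$ by Cauchy–Schwarz. Hence each shared edge contributes at most $1$, yielding $Z^{2(L-j)}$.
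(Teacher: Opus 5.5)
Your upper bound and joint bound are the paper's argument. The upper bound is Markov with the half-tilt $\mathbb{E}e^{X_e/2}=Z$. The joint bound is Markov on $\sqrt{L_{C_1}L_{C_2}}$, with each shared edge contributing at most one by Cauchy--Schwarz. There are two small slips, neither of which matters. The equal-sign factor is $\sum_{z:\nu(z)>0}\nu(z-a)\le 1$, not necessarily $=1$. In the degenerate case your change-of-measure identity gives $p_L=Z^L$, which need not equal $1$ when $\nu$ has zeros.

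For the lower bound you share the tilt $\pi_a\propto\sqrt{\nu(z)\nu(z-a)}$ and the score-reversing involution $z\mapsto a-z$ with the paper, but then take a different route.

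\emph{The paper} stays combinatorial. It exhibits one explicit type, with counts within $O(q)$ of $L\pi_a(z)$ and equal on paired orbits, so that its total score is exactly zero. This is where even $L$ enters: an even remainder must be placed on a pair. It then bounds that type's multinomial probability by Stirling, so the resulting $C_0$ grows with the support size of $\pi_a$.

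\emph{You} instead use anticoncentration of the symmetric tilted sum, and this is easier than you anticipate. Berry--Esseen gives $\tilde{\Pr}(0\le S\le K)\ge K/(\sigma\sqrt{2\pi L})-O(L^{-1/2})\gtrsim L^{-1/2}$ once $K$ is a large enough constant, uniformly over lattice and non-lattice laws. On that window $e^{-S/2}\ge e^{-K/2}$. Consequently:
\begin{itemize}
\item the lattice/parity case split and the local limit theorem are unnecessary;
\item evenness of $L$ is not needed;
\item any $C_0>1/2$ works, independently of $q$.
\end{itemize}
The $[0,\sigma\sqrt L]$ window you mention first would not suffice by itself, since $e^{-S/2}$ can be $e^{-\Theta(\sqrt L)}$ there.

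The paper's route is elementary and self-contained. Yours imports a limit theorem but yields a sharper, parity-free exponent. Either is enough for the converse, which only needs $R^LL^{-C_0-1}\to\infty$.
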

\begin{proof}
For $b\ne0$, let $S_b(z)=\log[\nu(z-b)/\nu(z)]$ on the support of $\nu$, allowing the value $-\infty$. Then
\begin{equation}\label{ls:eq:moments}
 \E e^{S_a(E_e)/2}=Z, \qquad \E e^{S_b(E_e)} = \sum_{z:\nu(z)>0}\nu(z-b)\le1.
\end{equation}
The first identity and Markov's inequality give the upper bound in \cref{ls:eq:scorelower}. For the lower bound, use the tilted distribution
\begin{equation}\label{ls:eq:tilted-law}
 \pi_a(z)=Z^{-1}\sqrt{\nu(z)\nu(z-a)}.
\end{equation}
The involution $z\mapsto a-z$ preserves $\pi_a$ and reverses $S_a$. Choose symbol counts within $O(q)$ of $L\pi_a(z)$, equal on paired orbits. Such counts summing to $L$ exist: round the counts on each pair down equally, round fixed-point counts down to even integers, and allocate the even remainder to an orbit with positive $\pi_a$-mass. The resulting type has total score zero. Stirling's formula bounds its tilted probability below by a reciprocal polynomial in $L$. Undoing the tilt multiplies this probability by $Z^L$, proving \cref{ls:eq:scorelower}, including when $\nu$ has zero entries.

On a shared edge, Cauchy--Schwarz and \cref{ls:eq:moments} give
\begin{equation}
 \E e^{[S_b(E_e)+S_{b'}(E_e)]/2}
 \le\sqrt{\E e^{S_b(E_e)}\E e^{S_{b'}(E_e)}}\le1.
\end{equation}
Each unshared edge contributes $Z$. Markov's inequality applied to $\sqrt{L_{C_1}L_{C_2}}$ proves \cref{ls:eq:jointscore}.
\end{proof}

Let $X$ count favorable candidate cycles already present in $G_n^T$, and put
\begin{equation}
 Y=\one_{\mathcal B_n}X,\qquad
 \mu_L=\frac{(D-1)^L}{2L}p_L.
\end{equation}
The channel can be sampled independently on all potential edges before generating the graph. Since the candidate length in \cref{ls:eq:L} is even, \cref{lem:cyclemean,lem:scores} give
\begin{equation}\label{ls:eq:firstmoment}
 \E Y=(1+o(1))\mu_L,\qquad
 \mu_L\ge\frac{R^L}{2L^{C_0+1}}\rightarrow\infty.
\end{equation}
The cycle count is uniform over the graph before the last round, so the preceding-round event $\mathcal B_n$ costs only its probability $1-o(1)$.

\subsubsection{Disjoint cycles}
Let $P$ be the permutation matrix of the port involution, and let $J_D$ be the all-ones matrix. The transition matrix for a nonbacktracking port word is $Q=J_D-P$. Its eigenvalues are $D-1$, $1$, and $-1$ with multiplicity $D-2$. Hence
\begin{equation}\label{ls:eq:trace}
    \tr(Q^L)=(D-1)^L+1+(D-2)(-1)^L=(1+o(1))(D-1)^L.
\end{equation}
For two vertex-disjoint cycles the channel events are independent. There are $(n)_{2L}/(2L)^2$ ordered pairs of labelled vertex cycles. Applying \cref{ls:eq:prefix-subgraph} to their union and \cref{ls:eq:trace} to each cycle bounds their contribution to $\E Y^2$ by
\begin{equation}\label{ls:eq:disjoint}
    (1+o(1))\frac{(n)_{2L}}{(2L)^2}n^{-2L}(D-1)^{2L}p_L^2 = (1+o(1))\mu_L^2.
\end{equation}

\subsubsection{Overlapping cycles}
Consider distinct length-$L$ cycles with $j$ common edges. Their intersection consists of paths and isolated vertices. If it has $k\ge1$ components, their union has
\begin{equation}\label{ls:eq:unionparameters}
    w=2L-j\quad\text{edges},\qquad v=2L-j-k\quad\text{vertices}.
\end{equation}
There are at most $(CL)^{Ck}$ abstract intersection patterns: record the endpoints of the common paths on both cycles, pair the components, and record their orientations. Isolated vertices use coincident endpoints. This bound is independent of $j$.

For each pattern there are at most $n^v$ embeddings and $D(D-1)^{w-1}$ connected port-labellings. By \cref{ls:eq:prefix-subgraph}, its graph contribution, including $\mathcal B_n$, is at most
\begin{equation}
    (1+o(1))\frac{D}{D-1}\,n^{-k}(D-1)^{2L-j}.
\end{equation}
Let $J$ count unordered favorable pairs sharing a vertex, setting $J=0$ outside $\mathcal B_n$. Combining this bound with \cref{ls:eq:jointscore,ls:eq:firstmoment} and summing over $j$ gives
\begin{equation}\label{ls:eq:overlap}
 \frac{\E J}{\mu_L^2}
 \le L^{C_1}\sum_{k=1}^{L}
       \left(\frac{CL^{C_1}}n\right)^k
       \max_{0\le j<L}\left((D-1)Z^2\right)^{-j}
 =o(1).
\end{equation}
Indeed, $(D-1)Z>1$ implies $((D-1)Z^2)^{-1}<D-1$, so \cref{ls:eq:L} bounds the maximum by $(D-1)^L\le n^\kappa$. The sum is therefore $n^{\kappa-1+o(1)}=o(1)$.

\subsubsection{Disjoint favorable cycles and decoding failure}
The diagonal contribution to $\E Y^2$ is $\E Y=o(\mu_L^2)$. Together with \cref{ls:eq:firstmoment,ls:eq:disjoint,ls:eq:overlap}, this gives
\begin{equation}\label{ls:eq:cycle-concentration}
 \frac{Y}{\mu_L}\rightarrow1,\qquad
 \frac{J}{\mu_L^2}\rightarrow0
 \quad\text{in probability}.
\end{equation}
Join two favorable cycles when they share a vertex. An independent set in this intersection graph has size at least $Y^2/(Y+2J)$ when $Y>0$. This bound follows by ordering its vertices uniformly at random and retaining each vertex that precedes all its neighbors: the expected number retained is $\sum_i(d_i+1)^{-1}\ge Y^2/(Y+2J)$. Thus, if $K_n$ is the maximum number of vertex-disjoint favorable candidate cycles, set to zero outside $\mathcal B_n$, then
\begin{equation}\label{ls:eq:disjoint-favorable-count}
 K_n\rightarrow\infty\quad\text{in probability}.
\end{equation}

On a completed graph, subtracting any subset of $k$ disjoint favorable circulations from the sampled error gives $2^k$ distinct errors with the same syndrome, each with probability at least that of the sampled error. Its posterior probability within the syndrome is therefore at most $2^{-k}$. Splitting according to whether a decoder's selected error has $K_n\ge k$ gives, for every fixed positive integer $k$,
\begin{equation}\label{ls:eq:posterior-converse}
 \E[\PS(G_n,p,\theta)\mid\mathcal E_n]
 \le\frac{\Pr(K_n<k)}{\Pr(\mathcal E_n)}+2^{-k}.
\end{equation}
The favorable circulations in $G_n^T$ remain circulations after later edges are added. Taking $n\to\infty$ in \cref{ls:eq:posterior-converse} and then $k\to\infty$ proves \cref{thm:ml-general-converse}, including arbitrary likelihood ties.

For the $q$SC, $Z=Z_q(p)$ and $(D-1)Z>1$ throughout $\dc(D,q)<p\le (q-1)/q$. Hence the converse holds on the full $q$SC range. In particular, at $p=(q-1)/q$ the noise is uniform and $\PS(G_n,(q-1)/q)=q^{-(m-n+1)}$.

\subsection{The binary endpoint and scope of the conclusion}\label{ls:sec:binary}
For $q=2$, a flow support is an even subgraph and decomposes into edge-disjoint simple cycles. A favorable flow therefore contains a favorable cycle. If $R=(D-1)Z_2(p)<1$, \cref{ls:eq:smallsubgraph,ls:eq:trace} give, uniformly for $g\le\ell\le K_0\log n$,
\begin{equation}
 \E[\one_{\mathcal A_n}C_\ell(G_n)]
 \le(1+o(1))\frac{(D-1)^\ell+O_D(1)}{2\ell}.
\end{equation}
The resulting cycle union bound is $O_D(R^g/g)=o(1)$. For longer cycles use
\begin{equation}
 C_\ell(G)\le\frac{nD(D-1)^{\ell-1}}{2\ell}.
\end{equation}
Choosing $K_0$ sufficiently large makes the tail $O_D(nR^{K_0\log n})=o(1)$. Adding the graph bad-event probability proves binary recovery below $\dc(D,2)$ at every degree. The converse gives \cref{eq:ml-binary-exact}.

This recovery guarantee also holds simultaneously after every single-coordinate modification of the error. A failure after such a modification implies a cycle whose Hamming comparison score on the original error is at least $-2$; at $t=\frac12\log[(1-p)/p]$, its probability is at most $e^{2t}Z_2(p)^\ell$, and the same cycle sums tend to zero.

\subsection{The remaining fixed-parameter question}
For uniform replacement, the cycle--entropy conjecture below is not settled by these asymptotic bounds. At fixed $G$, put
\begin{equation}\label{eq:ml-conditional-entropy}
 \Phi_n(G,p)\coloneqq\frac1mH(\mathbf E\mid B^T\mathbf E),\qquad
 \overline\Phi_n(p)\coloneqq\mathbb E_G\Phi_n(G,p).
\end{equation}
\begin{conjecture}[Cycle--entropy transition]
\label{conj:ml-cycle-entropy}
For every fixed $D\ge3$, finite field $\mathbb F_q$, and $c\in(0,1)$, $\overline\Phi_n(p)$ for \cref{def:linial-simkin-ensemble} converges for $0<p<(q-1)/q$ to a function $\Phi_{D,q,c}(p)$. Define
\begin{equation}\label{eq:ml-entropy-onset}
 \delta_\mathrm{ent}(D,q)
 =\inf\{p\in(0,(q-1)/q):\Phi_{D,q,c}(p)>0\}.
\end{equation}
Then, with
\begin{equation}\label{eq:ml-exact-conjecture}
 \delta_*=\delta_\mathrm{rel}(D,q)
 =\min\{\delta_\mathrm{cyc}(D,q),\delta_\mathrm{ent}(D,q)\},
\end{equation}
\begin{equation}\label{eq:ml-conjectured-step}
 \mathbb E_G P_{\mathrm S}^{\mathrm{ML}}(G,p)\rightarrow
 \begin{cases}
 1,&0<p<\delta_*,\\
 0,&\delta_*<p<(q-1)/q.
 \end{cases}
\end{equation}
No assertion is made at $p=\delta_*$.
\end{conjecture}

\section{Proof of the LP-decoding guarantee}
\label[appendix]{app:lp-reliability}

This appendix proves \cref{thm:lp-reliability} using the direct LP formulation in \cref{eq:lp-local-polytope}. The final subsection gives the more compact formulation announced in \cref{sec:lp-decoder}.

Let $G=(V,E)$ be the underlying simple graph, with $m=|E|$. Write $H=B^T$ and take $h_{ve}$ to denote the entry of the matrix $H$ indexed by vertex $v$ and edge $e$. Each edge represents a symbol in the codeword, and each vertex represents a parity check. If $e=\{u,v\}$, then column $e$ has nonzero entries in rows $u$ and $v$ and zeros in all other rows. For each vertex $v$, write
\begin{equation}
    \partial v=\{e\in E:h_{ve}\ne0\},\qquad d_v=|\partial v|.
\end{equation}
Thus $\partial v$ is the support of the parity check at $v$, or equivalently the set of edges that have $v$ as an endpoint. A symbol string $\mathbf b\in\mathbb F_q^E$ has syndrome $\mathbf s$ precisely when
\begin{equation}
    \sum_{e\in\partial v}h_{ve}b_e=s_v \qquad \text{for every }v\in V.
\end{equation}

Although \cref{thm:lp-reliability} is stated for homological cycle codes, the proof below applies to arbitrary cycle codes $\ker H$ over $\mathbb F_q$ with the column supports just specified. We do not require $h_{ue}+h_{ve}=0$ for $e=\{u,v\}$. Consequently, the certified rate in \cref{cor:lp-linial} also applies to nonhomological cycle codes whose underlying graphs belong to the Linial--Simkin ensemble.

For fixed maximum degree $D$, the direct LP already has polynomial size in $n=|V|$, $m$, and $q$: each nonempty check involves at most $D$ symbols and has at most $q^{D-1}$ satisfying assignments. The recovery proof below uses only this formulation, with the symbol marginals $x_{e,a}$, the local assignment probabilities $\lambda_{v,\mathbf b}$, and the objective $\mathcal L$ defined in \cref{sec:lp-decoder}.

We first compare a competing feasible LP solution with the true error vector. For each edge, we consider the total probability that the solution's symbol marginal assigns to values other than the true error value. We call these quantities \emph{disagreement probabilities}. The parity checks impose inequalities on these probabilities, which we use to construct a distribution on nonbacktracking walks in $G$. A competing solution with no larger objective value would force some such walk to have at least half of its positions corrupted. Walks shorter than the girth are simple paths, so counting paths and bounding their error probabilities proves the theorem. Only after completing this recovery argument do we introduce a reformulation that removes the exponential dependence of the LP size on $D$.

\subsection{Disagreement probabilities and the LP objective}
Fix an error vector $\mathbf e$ and its syndrome $\mathbf s=H\mathbf e$. Let $\mathbf x^{\mathbf e}$ be the corresponding integral feasible marginal vector, with entries $x^{\mathbf e}_{f,a}=\mathbf1\{a=e_f\}$. For any other feasible marginal vector $\mathbf x$, define the disagreement probabilities by
\begin{equation}\label{eq:lp-disagreement-probability}
 z_f:=1-x_{f,e_f}
 =\sum_{a\in\mathbb F_q\setminus\{e_f\}}x_{f,a},
\end{equation}
and put
\begin{equation}
 T:=\{f:e_f\ne0\},\qquad
 \omega_f:=1-2\mathbf1\{f\in T\}.
\end{equation}
Thus $T$ is the set of erroneous edges, and $\omega_f$ is $+1$ on an uncorrupted edge and $-1$ on a corrupted edge. The vector $\mathbf z$ is nonnegative, and $\mathbf z\ne0$ whenever $\mathbf x\ne\mathbf x^{\mathbf e}$.

Fix a vertex $v$ and sample a local assignment $\mathbf b$ from a distribution $\lambda_{v,\mathbf b}$ witnessing feasibility of $\mathbf x$. Both $\mathbf b$ and the restriction of $\mathbf e$ to $\partial v$ satisfy the check at $v$, so
\begin{equation}
    \sum_{f\in\partial v}h_{vf}(b_f-e_f)=0.
\end{equation}
These two assignments cannot differ on exactly one edge $f$ in the check support: that would give $h_{vf}(b_f-e_f)=0$ with both factors nonzero in $\mathbb F_q$. Therefore, if $b_f\ne e_f$, there must be another edge $f'\in\partial v\setminus\{f\}$ on which $b_{f'}\ne e_{f'}$. By local consistency, the probabilities of these disagreements are $z_f$ and $z_{f'}$. A union bound gives
\begin{equation}\label{eq:lp-mass-triangle}
 z_f\le\sum_{f'\in\partial v\setminus\{f\}}z_{f'}
\end{equation}
for every $v\in V$ and $f\in\partial v$. This is the only algebraic property of the checks used in the recovery argument. It uses nonzero entries within each check support, but no relation between the entries at the two endpoints of an edge. It also uses only one local assignment distribution at a time; no joint distribution on complete syndrome-consistent words is assumed.

On an uncorrupted edge, the objective difference is $1-x_{f,0}=z_f$. On a corrupted edge, it is $-x_{f,0}\ge-z_f$, because $e_f\ne0$ and $x_{f,0}\le1-x_{f,e_f}$. Summing these contributions gives
\begin{equation}\label{eq:lp-objective-lower}
 \mathcal L(\mathbf x)-|\mathbf e|
 \ge\sum_f\omega_f z_f.
\end{equation}
In particular, $\mathbf x\ne\mathbf x^{\mathbf e}$ and $\mathcal L(\mathbf x)\le|\mathbf e|$ imply that $\mathbf z$ is nonzero, satisfies \cref{eq:lp-mass-triangle}, and obeys $\sum_f\omega_fz_f\le0$. Equivalently,
\begin{equation}\label{eq:lp-disagreement-balance}
 \sum_{f:e_f\ne0}z_f\ge\sum_{f:e_f=0}z_f.
\end{equation}
The comparison applies to all feasible marginal vectors, including fractional solutions and equal-cost ties.

\subsection{From disagreement probabilities to walks}
A nonbacktracking walk traverses adjacent edges without immediately reversing the previous edge. We now use \cref{eq:lp-mass-triangle} to construct a random walk that has probability proportional to $z_f$ of traversing edge $f$ at each step. This converts the objective comparison into a statement about the error locations along walks.

\begin{lemma}[Walks with prescribed edge probabilities]\label{lem:lp-path-representation}
Let $\mathbf z\ne0$ be nonnegative edge weights satisfying \cref{eq:lp-mass-triangle}, and write $Z=\sum_fz_f$. For every integer $h\ge1$ there is a probability distribution on nonbacktracking walks $W=(F_1,\ldots,F_h)$ of $h$ edges with $\Pr(F_i=f)=z_f/Z$ for every step $i$. Consequently,
\begin{equation}\label{eq:lp-stationary-identity}
 \mathbb E_W\!\left[\sum_{i=1}^h\omega_{F_i}\right]
   =\frac hZ\sum_f\omega_fz_f
\end{equation}
for every real edge-weight vector $\boldsymbol\omega$. Repeated edges are counted with multiplicity.
\end{lemma}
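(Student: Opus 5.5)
We will construct the walk as a stationary Markov chain on \emph{directed} edges. For each edge $f=\{u,v\}$ with $z_f>0$ we introduce the two orientations $(u\to v)$ and $(v\to u)$, and give each of them initial mass $z_f/(2Z)$. The aim is a transition kernel $P$ from an oriented edge $(u\to v)$ to oriented edges $(v\to w)$ with $w\ne u$, such that the uniform-over-orientations distribution $\pi(u\to v)=z_{\{u,v\}}/(2Z)$ is stationary. Given such a kernel, we run the chain for $h$ steps starting from $\pi$. Forgetting orientations then gives $\Pr(F_i=f)=z_f/Z$ for every $i$. Because $P$ never returns along the edge just traversed, the walk is nonbacktracking. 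The identity \cref{eq:lp-stationary-identity} follows by linearity of expectation, summing $\mathbb E\omega_{F_i}=\sum_f\omega_fz_f/Z$ over $i$.

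The whole proof therefore reduces to a local flow-balancing problem at each vertex $v$. Let the incoming mass along edge $f\in\partial v$ be $\alpha_f=z_f/(2Z)$, carried by the orientation pointing into $v$. The outgoing mass must also be $\alpha_{f'}$ on each $f'\in\partial v$, carried by the orientation pointing out of $v$. We need a nonnegative matrix $(M_{f,f'})_{f,f'\in\partial v}$ with zero diagonal, row sums $\alpha_f$ and column sums $\alpha_f$. Then for $\alpha_f>0$ we set $P((u\to v),(v\to w))=M_{f,f'}/\alpha_f$, where $f=\{u,v\}$ and $f'=\{v,w\}$. With these transitions, stationarity of $\pi$ is exactly the column-sum condition.

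The key step, and the main obstacle, is to show that such a zero-diagonal matrix exists precisely under \cref{eq:lp-mass-triangle}, i.e.\ $\alpha_f\le\sum_{f'\ne f}\alpha_{f'}$ at every vertex. We will take a symmetric solution, for which the row and column conditions coincide. So the task is to realize the weights $\alpha_f$ as weighted degrees of a loopless multigraph on the index set $\partial v$ with real edge weights. This is the classical fact that a nonnegative real weight vector is the weighted degree sequence of a loopless weighted multigraph if and only if no entry exceeds the sum of the others.

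We would prove this fact constructively by induction on the number of positive entries.
\begin{itemize}
\item With one positive entry, the inequality forces it to be zero.
\item With exactly two positive entries, the inequality forces them to be equal, and we pair them directly.
\item Otherwise, take the two largest entries $\alpha_1\ge\alpha_2$ and put weight $t$ on the pair $\{1,2\}$. We choose $t$ as the smallest value at which either some entry reaches zero or the triangle inequality becomes tight for the new vector.
\end{itemize}
Continuity shows the inequalities are preserved up to that point. In either stopping case, the number of positive entries drops, or the vector reduces to the two-entry equal case. A cleaner alternative is to write down the explicit symmetric choice $M_{f,f'}=\alpha_f\alpha_{f'}/S$ with $S=\sum\alpha$, and then correct its diagonal. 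We would try the greedy induction first, since it handles the tight boundary cases transparently.

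Finally, we handle degenerate cases. At vertices where all incident $z_f=0$ nothing needs to be done, and edges with $z_f=0$ are simply never visited. Since $\mathbf z\ne0$, $Z>0$ and $\pi$ is a genuine probability distribution.
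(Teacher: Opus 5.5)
Your proposal is correct and matches the paper's proof: a stationary nonbacktracking Markov chain on directed edges with $\pi(u\to v)=z_{\{u,v\}}/(2Z)$, reduced at each vertex to finding a symmetric, zero-diagonal, nonnegative matrix with row sums $z_f$. The only difference is the local construction. The paper lays intervals of lengths $z_f$ around a circle of circumference $S_v$ and intersects each interval with the half-turn of the others, which gives this matrix in one line. That avoids your greedy induction, whose tight case (one entry equal to the sum of the rest) is really a star to be paired off directly, not literally the two-entry case.
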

\begin{proof}
For each vertex $v$, put $S_v=\sum_{f\in\partial v}z_f$. The inequalities \cref{eq:lp-mass-triangle} say that no edge in the check support has weight greater than $S_v/2$. We construct nonnegative numbers $b^v_{ff'}$, indexed by $f,f'\in\partial v$, that are symmetric in $f,f'$, vanish when $f=f'$, and sum to $z_f$ over $f'$. After arrival along an edge $f$ with $z_f>0$, the ratio $b^v_{ff'}/z_f$ will be the probability of continuing along $f'$.

If $S_v>0$, partition a circle of circumference $S_v$ into consecutive half-open intervals $I_f$ of lengths $z_f$, and let
\begin{equation}
 b^v_{ff'}=\left|I_f\cap(I_{f'}+S_v/2)\right|.
\end{equation}
Here addition is modulo $S_v$, and the absolute-value signs denote arc length. Rotating by half the circumference preserves length and is its own inverse, so $b^v_{ff'}=b^v_{f'f}$. An interval of length at most $S_v/2$ is disjoint from its half-turn except possibly at endpoints, so $b^v_{ff}=0$. The rotated intervals partition the circle, giving row sums $\sum_{f'\in\partial v}b^v_{ff'}=|I_f|=z_f$. Set $b^v=0$ when $S_v=0$.

Choose the first directed edge from the following distribution on both directions of each edge with $z_f>0$:
\begin{equation}
 \pi(u,v)=\frac{z_{\{u,v\}}}{2Z}.
\end{equation}
After traversing $f=\{u,v\}$ towards $v$, choose the next edge $f'=\{v,w\}$ with probability $b^v_{ff'}/z_f$. The row sums make this a Markov chain, and the zero diagonal prohibits immediate reversal. For each directed edge $(v,w)$, its incoming probability under $\pi$ is
\begin{equation}
 \sum_{u:\{u,v\}\in E}
 \frac{z_{\{u,v\}}}{2Z}
 \frac{b^v_{\{u,v\},\{v,w\}}}{z_{\{u,v\}}}
 =\frac{z_{\{v,w\}}}{2Z}=\pi(v,w),
\end{equation}
where terms with zero weight are omitted. Thus $\pi$ is \emph{stationary}: the distribution on directed edges is unchanged after a step. At every step, the probability of traversing the undirected edge $f$ is therefore $z_f/Z$. Summing the expected edge weights over the $h$ steps proves \cref{eq:lp-stationary-identity}.
\end{proof}

\begin{lemma}[A sufficient condition for unique LP recovery]\label{lem:lp-path-certificate}
Fix $h\ge1$. If every nonbacktracking walk of $h$ edges has strictly fewer than $h/2$ erroneous positions, counting multiplicity, then $\mathbf x^{\mathbf e}$ is the unique optimal marginal vector of the LP. In particular the LP decoder returns $\mathbf e$.
\end{lemma}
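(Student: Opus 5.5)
The argument is by contradiction, combining \cref{eq:lp-disagreement-balance} with \cref{lem:lp-path-representation}. Suppose some feasible marginal vector $\mathbf x\ne\mathbf x^{\mathbf e}$ has $\mathcal L(\mathbf x)\le|\mathbf e|=\mathcal L(\mathbf x^{\mathbf e})$; this covers both a strictly better fractional or integral point and an equal-cost tie. Form the disagreement probabilities $\mathbf z$ from \cref{eq:lp-disagreement-probability}. As recorded after \cref{eq:lp-objective-lower}, $\mathbf z$ is nonzero, nonnegative, satisfies the local triangle inequalities \cref{eq:lp-mass-triangle}, and obeys $\sum_f\omega_fz_f\le0$.

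Next, apply \cref{lem:lp-path-representation} with the given $h$. This produces a random nonbacktracking walk $W=(F_1,\ldots,F_h)$ whose per-step edge marginals are $z_f/Z$. By \cref{eq:lp-stationary-identity},
\[
\mathbb E_W\Bigl[\textstyle\sum_{i=1}^h\omega_{F_i}\Bigr]=\frac hZ\sum_f\omega_fz_f\le0 .
\]
For any walk, $\sum_i\omega_{F_i}=h-2N(W)$, where $N(W)$ is the number of erroneous positions counted with multiplicity. The hypothesis $N(W)<h/2$ therefore makes $\sum_i\omega_{F_i}$ strictly positive on every walk in the support, so its expectation is strictly positive. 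This contradicts the display. Hence no such $\mathbf x$ exists, and $\mathbf x^{\mathbf e}$ is the unique minimizer of $\mathcal L$ over the local polytope. Since the optimum is unique, every tie-breaking rule selects it; it is integral, so the decoder outputs $\mathbf e$.

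The step needing care is that the support of $W$ consists only of genuine nonbacktracking walks of exactly $h$ edges in $G$, so that the hypothesis really applies to every walk that $W$ can produce. This holds by construction. The first step is drawn from edges with $z_f>0$. Each continuation $b^v_{ff'}/z_f$ is defined whenever $z_f>0$, and the zero diagonal $b^v_{ff}=0$ rules out immediate reversal. Row sums equal to $z_f$ ensure that the chain never gets stuck: from an edge $f$ with $z_f>0$ it moves only to edges $f'$ with $b^v_{ff'}>0$, which forces $z_{f'}>0$. The walk therefore always extends to length $h$.

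A second point is that strictness is essential. Equal-cost competitors must be excluded, and they are, because the comparison \cref{eq:lp-objective-lower} was derived with $\le$ and the walk bound is strict.
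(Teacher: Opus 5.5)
Your proof is correct and is the paper's argument: a competitor $\mathbf x\ne\mathbf x^{\mathbf e}$ with $\mathcal L(\mathbf x)\le|\mathbf e|$ yields, via \cref{eq:lp-objective-lower} and \cref{lem:lp-path-representation}, a distribution on nonbacktracking $h$-edge walks whose expected weight $h-2N(W)$ is nonpositive, while the hypothesis makes this weight strictly positive on every walk. Your extra check that the chain is supported on genuine nonbacktracking walks and never gets stuck is a detail the paper leaves implicit in the construction of \cref{lem:lp-path-representation}.
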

\begin{proof}
A walk's total weight is $h$ minus twice its number of erroneous positions, so the hypothesis makes every such weight strictly positive. If a distinct feasible marginal vector $\mathbf x$ satisfied $\mathcal L(\mathbf x)\le|\mathbf e|$, then \cref{eq:lp-objective-lower,lem:lp-path-representation} would give a distribution on these walks with nonpositive expected weight, a contradiction.
\end{proof}

This condition is sufficient, not necessary: the existence of a walk with at least half its positions corrupted does not by itself imply a decoding failure. The argument rules out both integral and fractional competitors, including equal-cost ties, and depends only on the error locations.

\subsection{Probability of a bad path}
\begin{proof}[Proof of \cref{thm:lp-reliability}]
    If $h<g_G$, every nonbacktracking walk of $h$ edges is a simple path: a first repeated vertex would yield a cycle of length at most $h$. There are at most $2m(D-1)^{h-1}$ oriented such paths, because the first edge has $2m$ orientations and each subsequent step has at most $D-1$ choices.
    
    For a fixed path, let $X$ count its erroneous positions. Under independent symbol errors, $X$ has the $\operatorname{Bin}(h,p)$ distribution. For $p<1/2$, Markov's inequality with $s=(1-p)/p>1$ yields
    \begin{equation}\label{eq:lp-path-tail}
     \Pr\{X\ge h/2\}
     \le s^{-h/2}(1-p+p s)^h
     =\bigl[2\sqrt{p(1-p)}\bigr]^h.
    \end{equation}
    \Cref{lem:lp-path-certificate} and a union bound prove \cref{eq:lp-finite-bound}. This argument bounds an event depending only on the support, so it permits arbitrary nonzero values at corrupted positions.
    
    For the fixed-weight assertion, put $\rho=t/m$. If $t=0$ the LP has the zero word as its unique optimal marginal vector. Otherwise, for a uniformly chosen $t$-subset of edges and $s\ge1$, expansion of $s^X$ gives
    \begin{equation}\label{eq:lp-hypergeom-mgf}
     \mathbb E s^X = \sum_{j=0}^h\binom hj(s-1)^j\frac{(t)_j}{(m)_j} \leq (1-\rho+\rho s)^h.
    \end{equation}
    Here $(a)_j=a(a-1)\cdots(a-j+1)$, with $(t)_j=0$ for $j>t$, and $(t)_j/(m)_j\le(t/m)^j$. Applying \cref{eq:lp-path-tail} with $\rho$ in place of $p$ proves the same bound because $\rho\le p<1/2$ and $2\sqrt{p(1-p)}$ is increasing on $[0,1/2]$.
\end{proof}

For independent noise at any fixed positive rate in \cref{cor:lp-linial}, the error weight is $p m+o(m)$ with probability tending to one. The fixed-weight version proves the linear-error statement without using this concentration step. Neither version gives correction of all adversarial supports of linear size.

\paragraph{Recovery after a one-coordinate change.}
Changing one coordinate can increase the number of corrupted positions on a simple path by at most one. Hence, if every simple path of $h<g_G$ edges contains fewer than $h/2-1$ corrupted positions, then every word at Hamming distance at most one from the error satisfies the hypotheses of \cref{lem:lp-path-certificate}. Each such word is therefore uniquely LP-recoverable from its own syndrome. With $s=(1-p)/p$, the tail estimate \cref{eq:lp-path-tail} becomes
\begin{equation}
 \Pr\{X\ge h/2-1\}
 \le s^{1-h/2}(1-p+ps)^h
 =\frac{1-p}{p}\bigl[2\sqrt{p(1-p)}\bigr]^h.
\end{equation}
For a uniformly chosen support of size $t\le pm$, use \cref{eq:lp-hypergeom-mgf} with the same $s$ and $t/m\le p$. A union bound proves \cref{eq:lp-robust-finite-bound} in both cases. The bound is uniform over all nonzero values, including replacements introduced by the one-coordinate change.

\subsection{An equivalent LP with polynomial dependence on the degree}
\label{sec:lp-trellis}
The recovery proof is complete. We now establish the stronger size bound stated in \cref{sec:lp-decoder}. A check involving $d_v\ge1$ symbols has exactly $q^{d_v-1}$ satisfying assignments: the first $d_v-1$ symbol values can be chosen freely, and the check uniquely determines the last value because its coefficient is nonzero. After omitting empty checks, the direct LP therefore has
\begin{equation}
    mq+\sum_{v:d_v>0}q^{d_v-1}\le mq+nq^{D-1}
\end{equation}
variables. Its number of constraints and their encoding size are also polynomial in $n$, $m$, and $q$ for fixed $D$, and all scalar coefficients are $0,\pm1$. Thus the direct formulation already admits polynomial-time LP optimization in that setting. The construction below instead has $O(mq^2)$ variables and constraints, hence $O(nDq^2)$ for maximum degree $D$. Its size has no exponential dependence on $D$.

\paragraph{Representing a parity check by a layered graph.}
Fix a syndrome $\mathbf s\in\operatorname{im}H$. For each nonempty check, we construct a separate auxiliary directed graph, called a \emph{trellis}. Its nodes are arranged in successive layers, and its directed edges go from one layer to the next. These nodes are not vertices of $G$: they record running sums as the symbols in one check are assigned values.

Fix a vertex $v$ with $d_v>0$, and order the edges in its parity-check support as $\partial v=\{e_1,\ldots,e_{d_v}\}$. At layer $i=0,\ldots,d_v$, include one node $(i,z)$ for each $z\in\mathbb F_q$. The node records the weighted partial sum $z=\sum_{j=1}^i h_{ve_j}b_{e_j}$. Assigning the next symbol value $b_{e_i}=a$ is represented by the directed edge
\begin{equation}
 (i-1,z)\rightarrow(i,z+h_{ve_i}a),
 \qquad\text{labelled }a.
\end{equation}
We include this edge for every $i=1,\ldots,d_v$ and $z,a\in\mathbb F_q$. Starting at $(0,0)$, each sequence of symbol values determines exactly one path. The assignment satisfies the check precisely when that path ends at $(d_v,s_v)$. Different partial assignments with the same weighted sum reach the same node; the graph need not store them separately.

\paragraph{A binary example.}
For the check $b_1+b_2+b_3=0$ over $\mathbb F_2$, the state at layer $i$ is the sum of the first $i$ symbols modulo two. Choosing $0$ preserves the sum, whereas choosing $1$ changes it to the other value. The full trellis has eight nodes and twelve directed edges, as shown in \cref{fig:lp-binary-trellis}. The four paths from $(0,0)$ to $(3,0)$ represent exactly the satisfying assignments $000$, $011$, $101$, and $110$. For example, the assignments $000$ and $110$ both reach $(2,0)$ after two steps and use the same final edge. Their earlier choices are recorded by their paths, not by separate copies of the node $(2,0)$.

\begin{figure}[htbp]
\centering
\begin{tikzpicture}[
    x=1cm,y=1cm,>=stealth,
    state/.style={circle,draw,minimum size=12mm,inner sep=1pt,font=\small},
    unused/.style={state,draw=gray,dashed,text=gray},
    active/.style={->,semithick},
    inactive/.style={->,semithick,dashed,draw=gray},
    bit/.style={font=\small,fill=white,inner sep=1.5pt}]
    \node[state] (v00) at (0,0) {$(0,0)$};
    \node[unused] (v01) at (0,-2.1) {$(0,1)$};
    \node[state] (v10) at (3.3,0) {$(1,0)$};
    \node[state] (v11) at (3.3,-2.1) {$(1,1)$};
    \node[state] (v20) at (6.6,0) {$(2,0)$};
    \node[state] (v21) at (6.6,-2.1) {$(2,1)$};
    \node[state] (v30) at (9.9,0) {$(3,0)$};
    \node[unused] (v31) at (9.9,-2.1) {$(3,1)$};
    \foreach \x/\i in {0/0,3.3/1,6.6/2,9.9/3}
        \node[font=\small] at (\x,1.25) {Layer $\i$};
    \node[font=\scriptsize] at (0,0.82) {start};
    \node[font=\scriptsize] at (9.9,0.82) {required end};
    \node[font=\scriptsize,text=gray] at (0,-2.95) {unused initial state};
    \node[font=\scriptsize,text=gray] at (9.9,-2.95) {wrong final sum};
    \draw[active] (v00) -- node[bit,above] {$0$} (v10);
    \draw[active] (v00) -- node[bit,pos=.24,above] {$1$} (v11);
    \draw[inactive] (v01) -- node[bit,pos=.24,below,text=gray] {$1$} (v10);
    \draw[inactive] (v01) -- node[bit,below,text=gray] {$0$} (v11);
    \draw[active] (v10) -- node[bit,above] {$0$} (v20);
    \draw[active] (v10) -- node[bit,pos=.24,above] {$1$} (v21);
    \draw[active] (v11) -- node[bit,pos=.24,below] {$1$} (v20);
    \draw[active] (v11) -- node[bit,below] {$0$} (v21);
    \draw[active] (v20) -- node[bit,above] {$0$} (v30);
    \draw[inactive] (v20) -- node[bit,pos=.24,above,text=gray] {$1$} (v31);
    \draw[active] (v21) -- node[bit,pos=.24,below] {$1$} (v30);
    \draw[inactive] (v21) -- node[bit,below,text=gray] {$0$} (v31);
\end{tikzpicture}

\smallskip
{\small
\begin{tabular}{cc}
\toprule
Assignment $(b_1,b_2,b_3)$ & Running sums $(z_0,z_1,z_2,z_3)$\\
\midrule
$(0,0,0)$ & $(0,0,0,0)$\\
$(0,1,1)$ & $(0,0,1,0)$\\
$(1,0,1)$ & $(0,1,1,0)$\\
$(1,1,0)$ & $(0,1,0,0)$\\
\bottomrule
\end{tabular}}
\caption{Full trellis for $b_1+b_2+b_3=0$ over $\mathbb F_2$. An edge label gives the chosen symbol value, and $(i,z)$ records the running sum after $i$ choices. The table lists all four paths from the required start to the required end. Dashed nodes and edges are included in the full construction but lie on no such path; they carry zero flow in the LP. Crossings of edges are not nodes.}
\label{fig:lp-binary-trellis}
\end{figure}

\paragraph{From assignment probabilities to flows.}
A probability distribution over satisfying assignments is a probability distribution over paths from $(0,0)$ to $(d_v,s_v)$. Send each path's probability along its directed edges. The resulting nonnegative edge flows have total supply one at $(0,0)$, total demand one at $(d_v,s_v)$, and flow conservation at all other nodes, with zero supply or demand elsewhere. Conversely, any flow obeying these constraints in this acyclic graph decomposes into a convex combination of such paths, hence into a distribution over satisfying assignments.

Let $f_{v,i,z,a}$ be the flow on the edge labelled $a$ from $(i-1,z)$. The total flow choosing $a$ for symbol $e_i$ must equal its marginal:
\begin{equation}
 x_{e_i,a}=\sum_{z\in\mathbb F_q}f_{v,i,z,a}.
\end{equation}
Use the same variables $x_{e,a}$ in the trellises for both endpoints of $e$, and keep the objective $\mathcal L(\mathbf x)$ unchanged. The two formulations then have exactly the same feasible symbol marginals, and therefore the same optimal marginal vectors. The reformulation neither tightens nor weakens the LP relaxation, so all recovery guarantees proved above apply to it. A vertex with $d_v=0$ imposes only $s_v=0$ and can be omitted because the syndrome is valid.

\paragraph{Size and complexity.}
Each nonempty check uses $d_vq^2$ directed edges and $O(d_vq)$ node constraints. Since $\sum_v d_v=2m\le nD$, including the shared symbol marginals and their consistency constraints gives $O(mq^2)$ variables and constraints in total. Each flow variable occurs in only two node equations and one marginal equation, so the number of nonzero scalar coefficients is also $O(mq^2)$. The field entries $h_{ve}$ determine which nodes a directed edge connects; the real-valued LP constraints themselves have coefficients $0,\pm1$, as does the nonconstant part of the objective. Rational LP optimization thus takes time polynomial in $n$, $q$, and $D$. A fixed lexicographic rule on the symbol marginals can be implemented by a polynomial number of further LP optimizations, making the decoder deterministic in either formulation. This removes the exponential degree dependence in the implementation; it does not change the parameter regimes of the recovery results.

\end{appendices}

\begingroup
\emergencystretch=1em
\printbibliography

@article{JSW25,
	title = {Optimization by decoded quantum interferometry},
	author = {Stephen P. Jordan and Noah Shutty and Mary Wootters and Adam Zalcman and Alexander Schmidhuber and Robbie King and Sergei V. Isakov and Tanuj Khattar and Ryan Babbush},
  date = {2025-10-01},
	doi = {10.1038/s41586-025-09527-5},
  issn = {1476-4687},
	journal = {Nature},
	number = {8086},
	pages = {831--836},
	url = {https://doi.org/10.1038/s41586-025-09527-5},
	volume = {646},
	year = {2025},
    note={arXiv:2408.08292}
}

@article{P25,
  title={No quantum advantage in decoded quantum interferometry for maxcut},
  author={Ojas Parekh},
  journal={arXiv:2509.19966},
  year={2025}
}

@article{EJ73,
  title={Matching, {E}uler tours and the {C}hinese postman},
  author={Jack Edmonds and Ellis L. Johnson},
  journal={Mathematical programming},
  volume={5},
  pages={88--124},
  year={1973},
  publisher={Springer}
}

@article{DZ97,
  title={On the error-correcting capabilities of cycle codes of graphs},
  author={Laurent Decreusefond and Gilles Z{\'e}mor},
  journal={Combinatorics, Probability and Computing},
  volume={6},
  number={1},
  pages={27--38},
  year={1997},
  publisher={Cambridge University Press}
}

@inproceedings{XH07,
  title={On the complexity of exact maximum-likelihood decoding for asymptotically good low density parity check codes: A new perspective},
  author={Weiyu Xu and Babak Hassibi},
  booktitle={2007 IEEE Information Theory Workshop},
  pages={150--155},
  year={2007},
  organization={IEEE}
}

@book{GJ79,
	address = {New York},
	edition = {27. print},
	series = {A series of books in the mathematical sciences},
	title = {Computers and intractability: a guide to the theory of {NP}-completeness},
	shorttitle = {Computers and intractability},
	language = {eng},
	publisher = {Freeman},
	author = {Michael R. Garey and David S. Johnson},
	year = {2009},
}

@article{mckay2004,
	title = {Short cycles in random regular graphs},
	volume = {11},
	issn = {1077-8926},
	url = {https://www.combinatorics.org/ojs/index.php/eljc/article/view/v11i1r66},
	doi = {10.37236/1819},
	number = {1},
	journal = {The Electronic Journal of Combinatorics},
	author = {Brendan D. McKay and Nicholas C. Wormald and Beata Wysocka},
    date = {2004-09},
	pages = {R66}
}

@misc{EH12,
  author={Guy Even and Nissim Halabi},
  title={Local-Optimality Guarantees for Optimal Decoding Based on Paths},
  year={2012},
  eprint={1203.1854},
  eprinttype={arXiv},
  url={https://arxiv.org/abs/1203.1854}
}

@article{FSBG09,
  author={Mark F. Flanagan and Vitaly Skachek and Eimear Byrne and Marcus Greferath},
  title={Linear-Programming Decoding of Nonbinary Linear Codes},
  journal={IEEE Transactions on Information Theory},
  volume={55},
  number={9},
  pages={4134--4154},
  year={2009},
  doi={10.1109/TIT.2009.2025571},
  eprint={0804.4384},
  eprinttype={arXiv}
}

@misc{LS,
  author={Nathan Linial and Michael Simkin},
  title={A Randomized Construction of High Girth Regular Graphs},
  year={2020},
  eprint={1911.09640},
  eprinttype={arXiv},
  note={Version 3},
  url={https://arxiv.org/abs/1911.09640}
}

@Article{Frieze1997,
author={A. Frieze and M. Jerrum},
title={Improved approximation algorithms for MAXk-CUT and MAX BISECTION},
journal={Algorithmica},
date={1997-05},
volume={18},
number={1},
pages={67-81},
issn={1432-0541},
doi={10.1007/BF02523688},
url={https://doi.org/10.1007/BF02523688}
}

@article{gu2025algebraic,
  title={Algebraic Geometry Codes and Decoded Quantum Interferometry},
  author={Andi Gu and Stephen P. Jordan},
  journal={arXiv:2510.06603},
  year={2025}
}

@article{Johnson15,
  author={Tobias Johnson},
  title={Exchangeable Pairs, Switchings, and Random Regular Graphs},
  journal={The Electronic Journal of Combinatorics},
  volume={22},
  number={1},
  pages={P1.33},
  year={2015},
  doi={10.37236/4659},
  note={arXiv:1112.0704}
}

@article{NVZ16,
  author={Peter Nelson and Stefan H. M. van Zwam},
  title={The Maximum-Likelihood Decoding Threshold for Cycle Codes of Graphs},
  journal={IEEE Transactions on Information Theory},
  volume={62},
  number={10},
  pages={5316--5322},
  year={2016},
  note={arXiv:1504.05225}
}

@book{Biggs1993AlgebraicGraphTheory,
  title={Algebraic graph theory},
  author={Biggs, N.L.},
  year=1993,
  series={Algebraic Graph Theory},
  url={https://books.google.com/books?id=WO6inQEACAAJ},
  publisher={Cambridge University Press}
}

@book{Bollobas2001RandomGraphs,
  title={Random Graphs},
  author={Bollob{\'a}s, B.},
  isbn={9780521797221},
  lccn={00068952},
  series={Cambridge Studies in Advanced Mathematics},
  url={https://books.google.com/books?id=o9WecWgilzYC},
  year={2001},
  publisher={Cambridge University Press}
}

@inproceedings{CLZ22,
  title={Quantum algorithms for variants of average-case lattice problems via filtering},
  author={Yilei Chen and Qipeng Liu and Mark Zhandry},
  booktitle={Annual international conference on the theory and applications of cryptographic techniques},
  pages={372--401},
  year={2022},
  organization={Springer}
}

@article{KOW25,
  title={No exponential quantum speedup for {$\mathrm{SIS}^\infty$} anymore},
  author={Robin Kothari and Ryan O'Donnell and Kewen Wu},
  journal={arXiv:2510.07515},
  year={2025}
}

@article{RU01,
  title={The capacity of low-density parity-check codes under message-passing decoding},
  author={Thomas J. Richardson and R{\"u}diger L. Urbanke},
  journal={IEEE Transactions on Information Theory},
  volume={47},
  number={2},
  pages={599--618},
  year={2001},
  publisher={IEEE}
}

@inproceedings{TPM22,
  author={Jessica K. Thompson and Ojas Parekh and Kunal Marwaha},
  title={An Explicit Vector Algorithm for High-Girth {MaxCut}},
  booktitle={2022 Symposium on Simplicity in Algorithms (SOSA)},
  year={2022},
  publisher={SIAM},
  doi={10.1137/1.9781611977066.17},
  eprint={2108.12477},
  eprinttype={arXiv}
}

@techreport{KB78,
  author={R. Kaas and J. M. Buhrman},
  title={A Note on the Median of the Binomial Distribution},
  institution={Stichting Mathematisch Centrum},
  number={SW 59/78},
  year={1978},
  url={https://ir.cwi.nl/pub/8056}
}

@phdthesis{Como08,
  author = {Giacomo Como},
  title = {Ensembles of codes over {Abelian} groups},
  school = {Politecnico di Torino},
  year = {2008},
  month = mar,
  url = {https://archive.control.lth.se/media/Staff/GiacomoComo/ComoThesis.pdf}
}
\endgroup

\end{document}